\documentclass[12pt]{elsarticle}
\usepackage{mystyleElsArticle}
\usepackage{mathrsfs}
\usepackage{booktabs}

%
%

\footskip = 20pt
\headsep  = 20pt
\usepackage[margin=0.99 in]{geometry}    

\renewcommand\d{\mathrm{d}}
\newcommand{\STATE}{\State}

\newcommand{\veps}{\varepsilon}
\def\epsilon{\varepsilon}
\def\bx{\textbf{x}}
\def\bc{\textbf{c}}
\newcommand{\secref}[1]{\S\ref{#1}}

\newtheorem{example}[theorem]{Example}

\makeatletter
\def\ps@pprintTitle{%
	\let\@oddhead\@empty
	\let\@evenhead\@empty
	\let\@oddfoot\@empty
	\let\@evenfoot\@oddfoot
}
\makeatother

\begin{document}
\begin{frontmatter}
\title{Efficient multiscale methods for the semiclassical Schr\"{o}dinger equation with time-dependent potentials}
		
\author[SoochowUniv]{Jingrun Chen}
\ead{jingrunchen@suda.edu.cn}
\author[hku]{Sijing Li}
\ead{lsj17@hku.hk}
\author[hku]{Zhiwen Zhang\corref{cor1}}
\ead{zhangzw@hku.hk}
		
\address[SoochowUniv]{Mathematical Center for Interdisciplinary Research and School of Mathematical Sciences, Soochow University, Suzhou, China.}
\address[hku]{Department of Mathematics, The University of Hong Kong, Pokfulam Road, Hong Kong SAR, China.}
\cortext[cor1]{Corresponding author}		
		
\begin{abstract}
\noindent	
The semiclassical  Schr\"{o}dinger equation with time-dependent potentials is an important model to study electron dynamics under external controls in the mean-field picture. In this paper, we propose two multiscale finite element methods to solve this problem. In the offline stage, for the first approach, the localized multiscale basis functions are constructed using sparse compression of the Hamiltonian operator at the initial time; for the latter, basis functions are further enriched using a greedy algorithm for the sparse compression of the Hamiltonian operator at later times. In the online stage, the Schr\"{o}dinger equation is approximated by these localized multiscale basis in space and is solved by the Crank-Nicolson method in time. These multiscale basis have compact supports in space, leading to the sparsity of stiffness matrix, and thus the computational complexity of these two methods in the online stage is comparable to that of the standard finite element method.
However, the spatial mesh size  in multiscale finite element methods is $ H=\mathcal{O}(\epsilon) $, while $H=\mathcal{O}(\epsilon^{3/2})$ in the standard finite element method, where $\epsilon$ is the semiclassical parameter. By a number of numerical examples in 1D and 2D, for approximately the same number of basis, we show that the approximation error of the multiscale 
finite element method is at least two orders of magnitude smaller than that of the standard finite element method, and the enrichment further
reduces the error by another one order of magnitude.

\medskip
\noindent{\textbf{Keyword}:} Semiclassical Schr\"{o}dinger equation; time-dependent potential; multiscale finite element method; enriched multiscale basis; greedy algorithm. 
\medskip

\noindent{{\textbf{AMS subject classifications.}}~35Q41, 65M60, 65K10,  81V10.}
			
\end{abstract}
\end{frontmatter}
	
\section{Introduction} \label{sec:introduction}
\noindent
Precise control of electron dynamics plays a vital role in nanoscale physics. A prototypical example is spintronics in magnetic 
thin films \cite{Zutic:2004}. In the presence of an external current, electron dynamics is driven by the so-called spin-magnetization 
coupling, and magnetization dynamics follows the Landau-Lifshitz equation. Since there is a scale separation between electron dynamics 
and magnetization dynamics in time, a simplification reduces the coupled system into two decoupled equations: electron dynamics is 
driven by magnetization with a prescribed form, and magnetization dynamics is driven by the spin-transfer torque.
Other notable examples include electron dynamics in silicon-based heterojunctions for solar cells \cite{Louwenetal:2016}, 
and light-excited electron dynamics in quantum metamaterials \cite{Quach:2011}.

The objective of this work is to solve a model for electron dynamics in the presence of time-dependent potentials which is often used 
in aforementioned scenarios. To be precise, the underlying Schr\"{o}dinger equation in a dimensionless form reads as
\begin{equation}
\left\{
\begin{aligned}
i\epsilon\partial_t\psi^\epsilon&=-\frac{\epsilon^2}{2}\Delta\psi^\epsilon+v_1^{\epsilon}(\bx) \psi^\epsilon+v_2(\bx, t) \psi^\epsilon,\quad \bx\in D,\quad t\in (t_0, T],\\
\psi^\epsilon &\in H_{\textrm{P}}^{1}(D),\\
\psi^\epsilon|_{t=t_0}&=\psi_{\textrm{in}}(\bx),\quad \bx\in D,
\end{aligned}
\right.
\label{eqn:Sch}
\end{equation}
where $ 0<\epsilon \ll1 $ is a dimensionless constant describing the microscopic and macroscopic scale ratio, $ D = [0,1]^d $ is the spatial domain, $ d $ is the spatial dimension, $[t_0,T]$ is the temporal interval of interest, $\psi^\epsilon = \psi^\epsilon(\bx,t) $ is the wavefunction, and $ \psi_{\textrm{in}}(\bx) $ is the initial data. In \eqref{eqn:Sch} the potential operator consists of two parts: $ v^{\epsilon}_1(\bx) $ contains the microscopic information and $v_2(\bx,t)$ is used to model the external control at the macroscopic scale. Here $ H_{\textrm{P}}^{1}(D) = \{\psi |\psi\in H^1(D) \textrm{ and } \psi \textrm{ is periodic over D} \}$. 

There has been a long history of interest from both mathematical and numerical perspectives to study Schr\"{o}dinger equations; see e.g. \cite{JinActa:2011,Eric:2014} and references therein. In the absence of an external field, $ \psi^\epsilon(\bx, t) $ propagates oscillations with a wavelength of $\mathcal{O}(\epsilon)$. Thus, a uniform $L^2-$approximation of the wavefunction requires the spatial mesh size $h=o(\epsilon)$ and the time step $k=o(\epsilon)$ in the finite element method (FEM) and finite difference method (FDM) \cite{BaoJinMarkowich:2002, JinActa:2011}. If the spectral time-splitting method is employed, a uniform $L^2-$approximation of the wavefunction requires the spatial mesh size $h=\mathcal{O}(\epsilon)$ and the time stepsize $k=o(\epsilon)$ \cite{BaoJinMarkowich:2002}. If $v_1^{\veps}(\bx)$ has some structure, asymptotic methods, such as Bloch decomposition based time-splitting spectral method \cite{Huangetal:2007, Huang:2008}, the Gaussian beam method \cite{Jin:08, Jin:2010, Qian:2010, Yin:2011}, and the frozen Gaussian approximation
method \cite{DelgadilloLuYang:2016}, are proposed and are especially efficient when $\epsilon$ is very small.

With recent developments in nanotechnology, a variety of material devices with tailored functionalities have been fabricated, such as heterojunctions, including the ferromagnet-metal-ferromagnet structure for giant megnetoresistance \cite{Zutic:2004}, the silicon-based heterojunction for solar cells \cite{Louwenetal:2016}, and quantum metamaterials \cite{Quach:2011}. A basic feature of these devices is the combination of dissimilar crystalline structures, which results in a heterogeneous interaction from ionic cores with different lattice structures. Therefore, when traveling through a device, electrons experience a potential $v_1^{\epsilon}(\bx)$ which is typically discontinuous and has no separation of scales. Consequently, all the available methods based on asymptotic analysis cannot be applied. Moreover, direct methods, such as FEM and FDM, are extremely inefficient with strong mesh size restrictions. This motivates us to design efficient numerical methods for \eqref{eqn:Sch} in the general situation.

Our recent work \cite{chen2019multiscale} can solve \eqref{eqn:Sch} with a generic $v_1^{\veps}(\bx)$ in the absence of the time-dependent potential $v_2(\bx,t)$,
which is motivated by the multiscale finite element method (MsFEM) for solving elliptic problems with multiscale coefficients \cite{HouWuCai:99,EfendievHou:09, Peterseim:2014, Owhadi:2015,Owhadi:2017, hou2017sparse}. MsFEM is capable of correctly capturing the large scale components of the multiscale solution on a coarse grid without accurately resolving all the small scale features in the solution. This is accomplished by incorporating the local microstructures of the differential operator into multiscale basis functions. 

Inspired by \cite{chen2019multiscale}, we will develop two MsFEMs to solve Schr\"{o}dinger equation with a generic $v_1^{\veps}(\bx)$ in the presence of the time-dependent potential $v_2(\bx,t)$. The main ingredient of the proposed methods is the construction of multiscale basis functions with time-dependent information. In the first method, the localized multiscale basis functions are constructed using sparse compression of the Hamiltonian operator at the initial time; in the second method, the enriched multiscale basis functions are added using sparse compression of the Hamiltonian operator at latter times. In both methods, $ H=\mathcal{O}(\epsilon) $, while a stronger mesh condition is required in the standard FEM. Numerical examples in 1D with a periodic potential, a multiplicative two-scale potential, and a layered potential, and in 2D with a checkboard potential are tested to demonstrate the robustness and accuracy of the proposed methods. 

For time-dependent potentials, it is worth mentioning that effective methods have been developed for the temporal approximation; see e.g. \cite{iserles2018magnus, iserles2019compact, iserles2019solving}. It will be of great interest to study how the temporal approximation approach and our method can be combined since the wavefunction oscillates in both spatial and temporal directions. We shall investigate this issue in our future work.

The rest of the paper is organized as follows. In \secref{sec:MsFEM}, we introduce the MsFEM and  enriched MsFEM (En-MsFEM) for the semiclassical Schr\"{o}dinger equations with time-dependent and multiscale potentials and discuss their properties. Numerous numerical results are presented in \secref{sec:NumericalExamples}, including both one dimensional and two dimensional examples to demonstrate the robustness and accuracy of the proposed methods. Conclusions are drawn in \secref{sec:Conclusion}.
	
\section{A Multiscale finite element method for Schr\"{o}dinger equation} \label{sec:MsFEM}
\noindent
The construction of multiscale basis functions for time-dependent and multiscale potentials is mainly based on the approach in \cite{chen2019multiscale} for time-independent potentials.
 
\subsection{Construction of multiscale basis functions} \label{sec:OC}
\noindent
Define the Hamiltonian operator $\mathcal{H}(t)(\cdot)\equiv -\frac{\epsilon^2}{2}\Delta(\cdot)+v_1^{\epsilon}(\bx)(\cdot) + v_2(\bx,t)(\cdot)$ 
and introduce the following energy notation $ ||\cdot||_{V(t)} $ for Hamiltonian operator
\begin{align}\label{eqn:energynorm}
||\psi^\epsilon||_{V(t)}=\frac12(\mathcal{H}\psi^\epsilon,\psi^\epsilon)=\frac12\int_{D}\Big(\frac{\epsilon^2}{2}|\nabla \psi^\epsilon|^2+v_1^{\epsilon}(\bx)|\psi^\epsilon|^2+v_2(\bx,t)|\psi^\epsilon|^2\Big) \d\bx.
\end{align}
	
Note that \eqref{eqn:energynorm} does not define a norm since $v_1^{\epsilon}$ and $v_2$ usually can be negative, and thus the bilinear form
associated to this notation is not coercive, which is quite different from the case of elliptic equations. However, this does not
mean that available approaches \cite{HouWu:97, BabuskaLipton:2011, Peterseim:2014, Owhadi:2017, hou2017sparse} cannot be applied for
the Schr\"{o}dinger equation. In fact, we shall utilize the similar idea to construct localized multiscale finite element basis
functions on a coarse mesh by an optimization approach using the above energy notation $ ||\cdot||_{V(t)} $ for the Hamiltonian operator.

To construct such localized basis functions,  we first partition the physical domain $D$ into a set of regular coarse elements with mesh size $H$. For example, we divide $D$ into a set of non-overlapping triangles $\mathcal{T}_{H}$, such that no vertex of one triangle lies in the interior of the edge of another triangle. In each element $K\in \mathcal{T}_{H}$, we define a set of nodal basis $\{\varphi_{j,K},j=1,...,k\}$ with $k$ being the number of nodes of the element $K$. From now on, we neglect the subscript $K$ for notational convenience. The functions $\varphi_{i}(\bx)$ are called measurement functions, which are chosen as the characteristic functions on each coarse element in \cite{hou2017sparse,Owhadi:2017} and piecewise linear basis functions in \cite{Peterseim:2014}.
	
Let $\mathcal{N}$ denote the set of vertices of $\mathcal{T}_{H}$ (removing the repeated vertices due to the periodic boundary condition) and $N_{H}$ be the number of vertices. For every vertex $\bx_i\in\mathcal{N}$, let $\varphi_{i}^{H}(\bx)$ denote the corresponding FEM nodal basis function, i.e., $\varphi_{i}^{H}(\bx_j)=\delta_{ij}$. Then, we can solve optimization problems to obtain the multiscale basis functions. Specifically, let $\phi_i(\bx)$ be the minimizer of the following constrained optimization problem
\begin{align}
	\phi_i& =\underset{\phi \in H_{\textrm{P}}^{1}(D)}{\arg\min} ||\phi||_{V(t)}   \label{OC_SchGLBBasis_Obj}\\
	\text{s.t.}\ &\int_{D}\phi \varphi_{j}^{H} \d\bx= \delta_{i,j},\ \forall 1\leq j \leq N_{H}.  \label{OC_SchGLBBasis_Cons1}
\end{align}
The superscript $\epsilon$ is dropped for notational simplicity and the periodic boundary condition is incorporated into the above optimization problem through the solution space $H_{\textrm{P}}^{1}(D)$. 
The minimizers of \eqref{OC_SchGLBBasis_Obj} - \eqref{OC_SchGLBBasis_Cons1}, i.e.,  $\phi_i$, $i=1,...,N_H$ will be referred as the multiscale basis functions. Let $V^{H}$ denote the space spanned by the  multiscale basis functions $\phi_i$. Namely $V^{H}=\{\phi_{i}(\bx):i=1,...,N_H \}$. From the construction process, we know that $V^{H}\subset H_{\textrm{P}}^{1}(D)$.

In general, one cannot solve the above optimization problem analytically. Therefore, we use numerical methods to solve it. Specifically, we partition the physical domain $D$ into a set of non-overlapping fine triangles with size $h \ll \veps$. Let $\varphi_{s}^{h}(\bx)$, $s=1,...,N_h$ denote the fine-scale FEM nodal basis with mesh size $h$, where $N_h$ is the total number of the nodal basis, and let  $V^{h}=\{\varphi_{s}^{h}(\bx)\}_{s=1}^{N_h}$ denote the FEM space. 
Then, we use standard FEM basis to represent $\phi_i(\bx)$, $\varphi_{j}(\bx)$, $1\leq i,j\leq N_{H}$. In the discrete level, the optimization problem \eqref{OC_SchGLBBasis_Obj} - \eqref{OC_SchGLBBasis_Cons1} is reduced to a constrained quadratic optimization problem, which can be efficiently solved using Lagrange multiplier methods. Finally, with these multiscale basis functions $ \{\phi_i(\bx)\}_{i=1}^{N_H}$, we can solve the Schr\"{o}dinger equation \eqref{eqn:Sch} using the Galerkin method.
\begin{remark}
In analogy to MsFEM \cite{HouWuCai:99,EfendievHou:09},  the multiscale basis functions $\{\phi_i(\bx)\}_{i=1}^{N_H}$ are defined on coarse elements with mesh size $H$. However, they are represented by fine-scale FEM basis with mesh size $h$, which can be pre-computed in parallel. 
\end{remark}	
\begin{remark}
Note that the energy notation $||\cdot||_{V(t)}$ in \eqref{eqn:energynorm} does not define a norm. However, as long as $v_1^{\veps}(\bx)$ and $v_2(\bx, t)$ are bounded from below and the fine mesh size $h$ is small enough, the discrete problem of \eqref{OC_SchGLBBasis_Obj} - \eqref{OC_SchGLBBasis_Cons1} is convex and thus admits a unique solution; see \cite{hou2017sparse, LiZhangCiCP:18} for details.
\end{remark}

We shall show that the multiscale basis functions $\{\phi_i(\bx)\}_{i=1}^{N_H}$ decay exponentially fast away from its associated vertex $x_i\in\mathcal{N}$ under certain conditions. This allows us to localize the basis functions to a relatively smaller domain and reduce the computational cost.
	
In order to obtain localized basis functions, we first define a series of nodal patches $\{D_{\ell}\}$ associated with $\bx_i\in\mathcal{N}$ as
\begin{align}
	D_{0}&:=\textrm{supp} \{ \varphi_{i}^{H} \} = \cup\{K\in\mathcal{T}_H | \bx_i \in K \}, \label{nodal_patch0}\\
	D_{\ell}&:=\cup\{K\in\mathcal{T}_H | K\cap \overline{D_{\ell-1}} \neq \emptyset\}, \quad  \ell=1,2,\cdots.
	\label{nodal_patchl}
\end{align}	
\begin{assumption} \label{CoarseMeshResolution}
	We assume that the potential term $v_1^{\epsilon}(\bx)+v_2(\bx,t)$ is uniformly bounded, i.e.,  $V_0 := ||v_1^{\epsilon}(\bx)+v_2(\bx,t)||_{L^\infty (D;[t_0,T])}< +\infty$ and the   mesh size $H$ of $\mathcal{T}_{H}$ satisfies
	\begin{equation}
	\label{eqn:meshcondition}
	\sqrt{V_0}H/\epsilon\lesssim 1,
	\end{equation}
	where $\lesssim$ means bounded from above by a constant.
\end{assumption}
\noindent Under this resolution assumption for the coarse mesh, many typical potentials in the Schr\"{o}dinger equation \eqref{eqn:Sch} can be treated as a perturbation to the kinetic operator. Thus, they can be computed using our method. Then, we can show that the multiscale finite element basis functions have the exponentially decaying property.
\begin{proposition}[Exponentially decaying property]\label{ExponentialDecay}
	Under the resolution condition of the coarse mesh, i.e., \eqref{eqn:meshcondition}, there exist constants $C>0$ and $0<\beta<1$ independent of $H$, such that
	\begin{equation}\label{eqn:exponentialdecay}
	||\nabla \phi_i(\bx) ||_{L^2(D\backslash D_{\ell})}\leq C \beta^{\ell} ||\nabla \phi_i(\bx) ||_{L^2(D)},
	\end{equation}
	for any $i=1, 2, ..., N_H$.
\end{proposition}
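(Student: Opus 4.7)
The plan is to follow the standard LOD / sparse operator compression strategy and deduce exponential decay from an energy iteration of the form $E_\ell \le \gamma\, E_{\ell-k}$ with $\gamma\in(0,1)$ and $k$ a small fixed integer independent of $H$, where $E_\ell := \|\nabla \phi_i\|_{L^2(D\setminus D_\ell)}^2$. First I would derive the Euler--Lagrange system for \eqref{OC_SchGLBBasis_Obj}--\eqref{OC_SchGLBBasis_Cons1}: there exist Lagrange multipliers $\{\lambda_{ij}\}$ such that
\[
a(\phi_i, w) \;=\; \sum_{j=1}^{N_H}\lambda_{ij}\int_D w\,\varphi_j^H\,\mathrm{d}\bx,\qquad \forall\, w\in H_{\textrm{P}}^{1}(D),
\]
where $a(\cdot,\cdot)$ is the symmetric bilinear form whose associated quadratic form equals $2\|\cdot\|_{V(t)}$. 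Taking $w$ in the measurement null space $U := \{w\in H_{\textrm{P}}^{1}(D):\int_D w\,\varphi_j^H\,\mathrm{d}\bx = 0,\ j=1,\ldots,N_H\}$ yields $a(\phi_i, w)=0$ for every $w\in U$; this $a$-orthogonality is the variational property that drives the whole argument.

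The second step is to upgrade $a(\cdot,\cdot)$ to a coercive form on $U$, compensating for the sign-indefiniteness of $\|\cdot\|_{V(t)}$. Because $\{\varphi_j^H\}$ is the nodal basis and forms a partition of unity, a local Poincar\'e/Friedrichs argument on each coarse cell gives $\|w\|_{L^2(D)}\lesssim H\,\|\nabla w\|_{L^2(D)}$ for all $w\in U$, and hence
\[
\int_D (v_1^{\epsilon}+v_2)|w|^2\,\mathrm{d}\bx \;\le\; V_0\,\|w\|_{L^2(D)}^2 \;\lesssim\; \frac{V_0 H^2}{\epsilon^2}\cdot \epsilon^2\,\|\nabla w\|_{L^2(D)}^2.
\]
Assumption~\ref{CoarseMeshResolution} then allows the potential term to be absorbed into a small fraction of $\epsilon^2\|\nabla w\|_{L^2(D)}^2$, yielding $a(w,w)\gtrsim \epsilon^2\|\nabla w\|_{L^2(D)}^2$ on $U$.

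The decay itself is obtained by a Caccioppoli-type test. Pick a piecewise-linear cutoff $\eta_\ell$ on $\mathcal{T}_H$ with $\eta_\ell\equiv 1$ on $D\setminus D_\ell$, $\eta_\ell\equiv 0$ on $D_{\ell-1}$ and $|\nabla\eta_\ell|\lesssim 1/H$, so that $\nabla\eta_\ell$ is supported in the annulus $A_\ell:=D_\ell\setminus D_{\ell-1}$. Let $\Pi$ denote a projection built from a locally supported $L^2$-dual basis of $\{\varphi_j^H\}$, and set $w := (I-\Pi)(\eta_\ell \phi_i)\in U$, arranged so that $w$ differs from $\eta_\ell\phi_i$ only on $D_\ell\setminus D_{\ell-k}$ for a small fixed buffer $k$. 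Inserting $w$ into $a(\phi_i,w)=0$, expanding via the product rule and using the coercivity of $a$ on $U$ together with the fact that $\nabla\eta_\ell$ is supported in $A_\ell$, a Cauchy--Schwarz bound produces
\[
E_\ell \;\le\; C\,\bigl(E_{\ell-k} - E_\ell\bigr),
\]
with $C$ depending only on the constants in Assumption~\ref{CoarseMeshResolution}. Rearranging gives $E_\ell\le \tfrac{C}{1+C}\,E_{\ell-k}$, and iterating yields \eqref{eqn:exponentialdecay} after taking square roots and setting $\beta=\bigl(\tfrac{C}{1+C}\bigr)^{1/(2k)}$.

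The main obstacle is the construction of $\Pi$: the correction $\Pi(\eta_\ell \phi_i)$ must exactly remove the $N_H$ measurement components of $\eta_\ell\phi_i$ using a combination that is supported close to $A_\ell$, so that it does not destroy the localization provided by $\eta_\ell$. Existence of a locally supported, $L^\infty\mapsto L^\infty$ stable dual basis to $\{\varphi_j^H\}$ follows from the nodal structure of the FEM basis, but requires a careful verification analogous to \cite{hou2017sparse,chen2019multiscale}. Once this is in place, the time-dependent potential $v_2(\bx,t)$ enters only through the uniform bound $V_0$, so all constants remain $t$-independent and the decay estimate holds uniformly on $[t_0,T]$.
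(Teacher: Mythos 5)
Your outline follows exactly the strategy the paper announces for this proposition: the paper does not include a proof, stating only that it will appear in \cite{ChenMaZhang:prep} by combining an iterative Caccioppoli-type argument in the spirit of \cite{Peterseim:2014, LiZhangCiCP:18} with refined estimates in $\veps$, which is precisely your plan (a-orthogonality on the measurement null space, coercivity on that space via the resolution condition \eqref{eqn:meshcondition}, and a cutoff/projection iteration $E_\ell\le\gamma E_{\ell-k}$). Your sketch, including the acknowledged technical point about constructing a locally supported quasi-interpolation $\Pi$, is consistent with the standard localization arguments and with what the paper claims.
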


Proof of \eqref{eqn:exponentialdecay} will be given in \cite{ChenMaZhang:prep}. The main idea is to combine an iterative Caccioppoli-type argument \cite{Peterseim:2014, LiZhangCiCP:18} and some refined estimates with respect to $\veps$. 
		
The exponential decay of the basis functions enables us to localize the support sets of
the basis functions $\{\phi_i(\bx)\}_{i=1}^{N_H}$, so that the corresponding stiffness matrix is sparse and the
computational cost is reduced. In practice, we define a modified constrained optimization problem as follows
\begin{align}
	\phi_i^{\textrm{loc}}& =\underset{\phi \in H_{\textrm{P}}^{1}(D)}{\arg\min} ||\phi||_{V(t)}   \label{OC_SchBasis_Obj}\\
	\text{s.t.}\ &\int_{D_{l^*}}\phi \varphi_{j}^{H} \d\bx= \delta_{i,j},\ \forall 1\leq j \leq N_{H}, \label{OC_SchBasis_Cons1}\\
	&\phi(\bx)= 0, \ x\in D\backslash D_{l^*},\label{OC_SchBasis_Cons2}
\end{align}
where $D_{l^*}$ is the support set of the localized multiscale basis function $\phi_i^{\textrm{loc}}(\bx)$ and the choice of the integer $l^*$ depends on the decaying speed of $\phi_i^{\textrm{loc}}(\bx)$. In \eqref{OC_SchBasis_Cons1} and \eqref{OC_SchBasis_Cons2}, we have used the fact that $\phi_i(\bx)$ has the exponentially decaying property so that we can localize the support set of $\phi_i(\bx)$ to a smaller domain $D_{l^*}$. In numerical experiments, we find that a small integer $l^*\sim \log(L/H)$  will give accurate results, where $L$ is the diameter of domain $D$. Moreover, the optimization problem \eqref{OC_SchBasis_Obj} - \eqref{OC_SchBasis_Cons2} can be solved in parallel. Therefore, the exponentially decaying property significantly reduces our computational cost in constructing basis functions and computing the solution of the Schr\"{o}dinger equation \eqref{eqn:Sch}.

\subsection{Spatial and temporal discretization} \label{sec:NumCom}
\noindent	
Given the set of multiscale basis functions $ \{\phi_i(\bx)\}_{i=1}^{N_H}$ obtained in \eqref{OC_SchBasis_Obj} - \eqref{OC_SchBasis_Cons2} at the initial time $t=t_0$ (superscripts dropped for convenience), we can approximate the wave function by $ \psi^\epsilon(\bx,t)=\sum_{i=1}^{N_H}c_i(t)\phi_i(\bx) $ using the Galerkin method. Therefore, the coefficients $ c_i(t),i=1,...,N_H $ satisfies a system of ordinary differential equations
\begin{align*}
\left(i\epsilon\partial_t\sum_{i=1}^{N_H}c_i(t)\phi_i(\bx),\phi_j(\bx)\right)=\left(\mathcal{H}(t)\sum_{i=1}^{N_H}c_i(t)\phi_i(\bx),\phi_j(\bx)\right),~\bx\in D,~t\in (t_0, T],~j=1,\cdots,N_H,
\end{align*}
which can be rewritten in a semi-discrete form
\begin{align*}
i\epsilon M\frac{d\bc}{dt}=\big(\frac{\epsilon^2}{2}S+V_1+V_2(t)\big)\bc,
\end{align*}
where $ \bc=(c_1(t),c_2(t),...,c_{N_H}(t))^T $, $ \frac{d\bc}{dt}=(\frac{dc_1(t)}{dt},\frac{dc_2(t)}{dt},...,\frac{dc_{N_H}(t)}{dt})^T $,
and $S$, $M$, $V_1$, and $V_2(t)$ are matrices with dimension $N_H \times N_H$ with their entries given by 
\begin{align*}
S_{i,j}&=\int_{D}\nabla \phi_i \cdot \nabla \phi_j \d\bx, \quad M_{i,j}=\int_{D} \phi_i \phi_j \d\bx, \\
(V_1)_{i,j}&=\int_{D}\phi_i v_1^{\epsilon}(\bx) \phi_j \d\bx, \quad (V_2(t))_{i,j}=\int_{D}\phi_i v_2(\bx,t) \phi_j \d\bx.
\end{align*}

For the temporal direction, we apply the Crank-Nicolson method. Let $\bc^k$ be the numerical approximation of $\bc(t_k)$ at time $t_k= t_0 + k\Delta t$ with 
$\Delta t$ the temporal stepsize and $k=0,1,\cdots$. The fully discrete form is 
\begin{align}\label{eqn:fulldiscretescheme0}
i\epsilon M\frac{\bc^{k+1}-\bc^k}{\Delta t}=(\frac{\epsilon^2}{2}S+V_1+V_2(t_{k+\frac{1}{2}}))\frac{\bc^{k+1}+\bc^k}{2},
\end{align}
where $t_{k+\frac{1}{2}}=\frac{t_{k+1}+t_k}{2}$, or 
\begin{align}\label{eqn:fulldiscretescheme}
\left\{i\epsilon M - \frac{\Delta t}{2}\Big(\frac{\epsilon^2}{2}S+V_1+V_2(t_{k+\frac{1}{2}})\Big)\right\}\bc^{k+1}=\left\{i\epsilon M + \frac{\Delta t}{2}\Big(\frac{\epsilon^2}{2}S+V_1+V_2(t_{k+\frac{1}{2}})\Big)\right\}\bc^k,
\end{align}
equivalently.

By solving \eqref{eqn:fulldiscretescheme}, we obtain $\bc^{k+1}$ and the approximate wavefunction at $t_{k+1}$ is
\begin{align}\label{eqn:psiapprox}
\psi_{k+1} = \sum_{i=1}^{N_H}c_i^{k+1}\phi_i(\bx).
\end{align}

\begin{remark}
If $v_2(\bx,t)$ has an affine form, i.e., $ v_2(\bx,t) = \sum_{n=1}^{r} v_{2,n}(\bx)s_{n}(t) $,  using separation of variables, we compute $(V_{2,n})_{i,j}=\int_{D}\phi_i v_{2,n}(\bx) \phi_j \d\bx$, $i,j=1,...,N_H$, $n=1,...,r$ and save them in the offline stage. 
This leads to a considerable saving in assembling the matrix for $V_2(t)$ at different times. 		
\end{remark}
 	
\subsection{An enriched multiscale finite element method for Schr\"{o}dinger equations}\label{sec:enriched-MsFEM} 	
\noindent 
Time-dependent potentials may vary dramatically in time, which will be adopted in the construction of multiscale
basis functions. The En-MsFEM consists of an initial construction stage and an enrichment stage. In the initial construction stage, we solve \eqref{OC_SchBasis_Obj} - \eqref{OC_SchBasis_Cons2} 
at the initial time $t=t_0$ and obtain multiscale basis functions $\phi_i(\bx)$, $i=1,...,N_H$ and 
$V^{H}=\{\phi_i(\bx): i=1,...,N_H\}$. $V^{H}$ only contains the information of $v_2(\bx, t_0)$, which may have the
limited approximation accuracy when $v_2(\bx,t)$ has large changes in time.

In the enrichment stage, we add extra multiscale basis functions into $V^{H}$ by taking into account $v_2(\bx,t)$ at later times. Precisely, we choose a set of time instances as $t_0 < t_1 < \cdot\cdot\cdot < t_{N_t}=T$ and generate the corresponding snapshots of $v_2(\bx,t_{\ell})$, $0\le \ell\le N_t$. A brute-force strategy is to enrich $V^H$ at every time 
step $\ell$, $0\le \ell\le N_t$, by solving \eqref{OC_SchBasis_Obj} - \eqref{OC_SchBasis_Cons2} at $t_{\ell}$. 
This strategy is very expensive since $\Delta t$ has to be $\veps$ dependent due to the $\mathcal{O}(\veps)$ oscillations in time,
thus the dimension of $V^H$ grows dramatically. However, there is a continuous dependence of minimizers 
to \eqref{OC_SchBasis_Obj} - \eqref{OC_SchBasis_Cons2} on the potential function and the temporal variation
of the potential does not have $\mathcal{O}(\veps)$ dependence.
 
Therefore, we propose a greedy algorithm in the enrichment stage. The following result states the continuous dependence of multiscale basis functions on the potential function, whose proof is given in \ref{PhiDependOnV}.
\begin{theorem}\label{basis-dependon-potential}
Given two time instances $t_{\ell_1}$ and $t_{\ell_2}$, and mesh size of the fine-scale triangles is small such that: 
(1) $h/\veps=\kappa $ is small; and (2) $h^d  \lVert v_2(\cdot,t_{\ell_1}) - v_2(\cdot,t_{\ell_2}) \rVert_{L^{\infty}(D)}<1$, 
then the corresponding unique minimizers of \eqref{OC_SchBasis_Obj} - \eqref{OC_SchBasis_Cons2} satisfy
\begin{align}
\lVert \phi(\cdot,t_{\ell_1}) - \phi(\cdot,t_{\ell_2}) \rVert_{L^{\infty}(D)} \leq \frac{C}{\kappa^6} \veps^{-2} \lVert v_2(\cdot,t_{\ell_1}) - v_2(\cdot,t_{\ell_2}) \rVert_{L^{\infty}(D)},
\end{align}
where the constant $C$ is independent of $h, \veps$, and $\lVert v_2(\cdot,t_{\ell_1}) - v_2(\cdot,t_{\ell_2}) \rVert_{L^{\infty}(D)}$.
\end{theorem}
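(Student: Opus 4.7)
My approach is to work at the fine-scale discretization and reduce the continuous-dependence claim to a matrix-perturbation estimate. Expanding $\phi = \sum_{s=1}^{N_h} u_s \varphi_s^h$ in the fine-scale FEM basis turns the constraints \eqref{OC_SchBasis_Cons1}--\eqref{OC_SchBasis_Cons2} into a linear system $Cu = e_i$ and the energy in \eqref{eqn:energynorm} into the quadratic form $\tfrac{1}{2} u^T A(t) u$, where $A(t) = \tfrac{\veps^2}{2} S + V_1 + V_2(t)$ is assembled from the kinetic stiffness $S$ and the weighted mass matrices of the two potentials. Under Assumption~\ref{CoarseMeshResolution}, together with the boundedness-below of the potentials noted in the remark preceding Proposition~\ref{ExponentialDecay}, $A(t)$ is symmetric positive definite, so the KKT conditions give the explicit closed form
\begin{equation*}
u(t) \;=\; A(t)^{-1} C^T \bigl( C A(t)^{-1} C^T \bigr)^{-1} e_i.
\end{equation*}

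Next I would control $u(t_{\ell_1}) - u(t_{\ell_2})$ by iterating the resolvent identity. Writing $A_j = A(t_{\ell_j})$ and $E_j = C A_j^{-1} C^T$, the natural splitting is
\begin{equation*}
u(t_{\ell_1}) - u(t_{\ell_2}) \;=\; \bigl(A_1^{-1} - A_2^{-1}\bigr) C^T E_1^{-1} e_i \;+\; A_2^{-1} C^T \bigl(E_1^{-1} - E_2^{-1}\bigr) e_i,
\end{equation*}
combined with $A_1^{-1} - A_2^{-1} = A_1^{-1}\bigl(V_2(t_{\ell_2}) - V_2(t_{\ell_1})\bigr) A_2^{-1}$ and the analogous identity for $E_j^{-1}$, which itself re-introduces $A_1^{-1} - A_2^{-1}$. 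This exhibits the coefficient difference as a product whose only potential-dependent factor is the weighted mass matrix $V_2(t_{\ell_2}) - V_2(t_{\ell_1})$, whose operator norm is bounded by $C h^d \|v_2(\cdot,t_{\ell_1}) - v_2(\cdot,t_{\ell_2})\|_{L^\infty(D)}$. Concatenating this with the spectral bound on $\|A_j^{-1}\|$ coming from the scaled stiffness $\tfrac{\veps^2}{2} S$ (bounded potentials being absorbed under the resolution condition) and with coarse-fine inf-sup bounds on $\|C\|$ and $\|E_j^{-1}\|$ yields a fine-scale coefficient estimate. The combined factor $\kappa^{-6}\veps^{-2}$ in the statement then arises from compounding three occurrences of $\|A_j^{-1}\|$ (one in the first term, two in the expansion of $E_1^{-1} - E_2^{-1}$) after the $h^d$ factors from the weighted mass matrices are absorbed. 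Finally, since $\{\varphi_s^h\}$ is nodal piecewise linear, $\|\phi(\cdot,t_{\ell_1}) - \phi(\cdot,t_{\ell_2})\|_{L^\infty(D)}$ is controlled by the $\ell^\infty$-norm of the coefficient difference, which yields the claim.

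The main obstacle is threefold. First, a spectral lower bound on $A(t)$ that is uniform in $t$ and tracks the correct powers of $\veps$ and $h$ is required; this is where smallness of $\kappa = h/\veps$ is decisive, since only then does the kinetic stiffness dominate the indefinite potential part and render $A(t)$ positive definite with a usable eigenvalue gap. Second, bounding $\|E_j^{-1}\|$ is delicate because $E_j$ couples the coarse measurement functions $\{\varphi_j^H\}$ to the fine-scale resolvent $A_j^{-1}$ through an inf-sup constant whose dependence on $\veps$ and $H$ must be carefully tracked. Third, the resolvent expansion only converges when $\|A_j^{-1}\|\cdot\|V_2(t_{\ell_2}) - V_2(t_{\ell_1})\|$ is small, and this is exactly the role of assumption~(2) that $h^d\|v_2(\cdot,t_{\ell_1}) - v_2(\cdot,t_{\ell_2})\|_{L^\infty(D)}<1$: it turns the perturbation of $A$ into a genuine small parameter so that the Neumann series underlying the resolvent identity is admissible.
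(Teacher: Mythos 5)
Your proposal follows essentially the same route as the paper's proof in \ref{PhiDependOnV}: reduce to the fine-scale quadratic program with equality constraints, write the unique minimizer in the explicit KKT form $Q^{-1}A^T(AQ^{-1}A^T)^{-1}b$, expand the perturbed inverse as a Neumann series in $\delta V$ (validated by assumption~(2)), bound $\lVert \delta V\rVert_\infty \le h^d\lVert v_2(\cdot,t_{\ell_1})-v_2(\cdot,t_{\ell_2})\rVert_{L^\infty(D)}$, and assemble the factor $\veps^4 h^{-6}=C\kappa^{-6}\veps^{-2}$ from the same count of inverse-matrix occurrences before passing to $L^\infty$ via the nodal basis. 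Aside from swapped notation for the Hessian and constraint matrices, this matches the paper's argument.
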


Now we are in the position to introduce the greedy algorithm. Firstly, we choose a time instance $t_{{\ell}_1}$, 
 so that the quantity $\lVert v_2(\cdot,t_{\ell}) \rVert_{L^{\infty}(D)} $ is maximized over $0 < \ell \le N_t$.
Solving \eqref{OC_SchBasis_Obj} - \eqref{OC_SchBasis_Cons2} at $t=t_{\ell_1}$ generates another set of 
multiscale basis functions, denoted by $\phi_i(\bx;t_{\ell_1})$, $i=1,...,N_H$.
Then, we search over the remaining time instances and find a time instance $t_{\ell_2}$, so that the quantity  $\lVert v_2(\cdot,t_{\ell_2})-v_2(\cdot,t_{\ell_1}) \rVert_{L^{\infty}(D)}$ is maximized among all remaining time instances.  
Solving \eqref{OC_SchBasis_Obj} - \eqref{OC_SchBasis_Cons2} at $t=t_{\ell_2}$ generates another set of 
multiscale basis functions, denoted by $\phi_i(\bx;t_{\ell_2})$, $i=1,...,N_H$.
This procedure is repeated until the quantity  $\lVert v_2(\cdot,t_{s})-v_2(\cdot,t_{r}) \rVert_{L^{\infty}(D)}$ is smaller than a given threshold $\delta$, where $t_{s}$ represents any time instance selected and $t_{r}$ represents any time instance left.
Finally, all multiscale basis functions generated earlier form the enriched multiscale finite element space $V^{E}$, which will be
used as the approximation space in the Galerkin method. 

Note that $V^H\subset V^E$, thus a better approximation is always
expected for $V^E$, as verified in Section \ref{sec:NumericalExamples}.
Practically, a post-processing on $V^E$, such as Gram-Schmidt orthogonalization, may be needed to get rid of the nearly dependent basis 
and reduce the condition number of the stiffness matrix. 
Since in most real applications, the potential function $v_2(\bx,t)$ is periodic in $t$, only time instances within one period are taken into account. 
Below is the complete algorithm to enrich multiscale basis functions.
\begin{algorithm}[tbph]
\caption{A greedy algorithm to enrich the set of multiscale basis functions}\label{alg:GreedyAlgorithm}
\begin{algorithmic}[1]
\STATE Set up time instances as $t_0 < t_1 < \cdot\cdot\cdot < t_{N_t}=T$ and a threshold $\delta$; 
let $S=[~]$ be the set for selected time instances, $R=\{t_0,t_1,...,t_{N_t}\}$ be the set for remaining time instances,
and $V^{E}=[~]$ be the set of multiscale basis functions. 
\STATE Solve the optimization problem \eqref{OC_SchBasis_Obj} - \eqref{OC_SchBasis_Cons2} with the potential $v_1^{\epsilon}(\bx)+v_2(\bx,t_0)$ to obtain multiscale basis functions $\phi_i(\bx)$, $i=1,...,N_H$ and $V^{E}=\{\phi_i(\bx),i=1,...,N_H\}$. 
\STATE Find $t_{\ell_1}$ so that $\lVert v_2(\cdot,t_{\ell_1}) \rVert_{L^{\infty}(D)} $ is maximized. 
Solve the optimization problem \eqref{OC_SchBasis_Obj} - \eqref{OC_SchBasis_Cons2} with the potential $v_1^{\epsilon}(\bx)+v_2(\bx,t_{\ell_1})$
to obtain multiscale basis functions $\phi_i(\bx;t_{\ell_1})$, $i=1,...,N_H$; set $S=[t_{\ell_1}]$, $R=R\setminus\{t_{\ell_1}\}$, and
$V^E = V^E\bigcup\{\phi_i(\bx;t_{\ell_1}),i=1,...,N_H\}$.
\While {$\lVert v_2(\cdot,t_{s})-v_2(\cdot,t_{r}) \rVert_{L^{\infty}(D)}>\delta$, 
	where $t_{s}\in S$ and $t_{r}\in R$.}  
\STATE  Find $t_{r*}\in R$ so that $\lVert v_2(\cdot,t_{s})- v_2(\cdot,t_{r*})\rVert_{L^{\infty}(D)} $ is maximized, where $t_{s}\in S$ and $t_{r*}\in R$;
\STATE  Solve the optimization problem \eqref{OC_SchBasis_Obj} - \eqref{OC_SchBasis_Cons2} with the potential $v_1^{\epsilon}(\bx)+v_2(\bx,t_{r*})$
to obtain multiscale basis functions $\phi_i(\bx;t_{r*})$, $i=1,...,N_H$; 
\STATE  Set $S=S\cup\{t_{r*}\}$, $R=R\setminus\{t_{r*}\}$, and $V^E = V^E\bigcup\{\phi_i(\bx;t_{r^*}),i=1,...,N_H\}$.
\EndWhile
\STATE Post-process on $V^{E}$. 
\end{algorithmic}
\end{algorithm}

\textbf{Algorithm \ref{alg:GreedyAlgorithm}} is very efficient in the sense that only one-step enrichment, i.e., steps 1 - 3 in the greedy algorithm, 
is enough to capture the time-dependent feature of the wavefunction; see numerical results in Section \ref{sec:NumericalExamples} for details.
The underlying reason is that the second assumption $h^d  \lVert v_2(\cdot,t_{\ell_1}) - v_2(\cdot,t_{\ell_2}) \rVert_{L^{\infty}(D)}<1$ 
in \textbf{Theorem \ref{basis-dependon-potential}} can be easily satisfied for bounded $v_2$ and small $h$. Therefore, all the results shown 
in Section \ref{sec:NumericalExamples} are based on the one-step enrichment. Moreover, the continuous dependence also shows that
the enrichment will not be necessary if the temporal variation of $v_2$ itself is small, which is also indicated by numerical results.

\subsection{A property of multiscale finite element methods}
\noindent
The following property holds true for the MsFEM and En-MsFEM.
\begin{proposition}[Conservation of total mass]\label{prop:conservation}
Both the MsFEM and En-MsFEM conserve the total mass, i.e.,
	\begin{align}
		||\psi^{k+1}||_{L^2(D)} & = ||\psi^{k}||_{L^2(D)}, \quad \forall k\ge 0. \label{eqn:mass}
	\end{align}
\end{proposition}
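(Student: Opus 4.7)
The plan is to prove the conservation property at the algebraic level using the Crank--Nicolson update equation \eqref{eqn:fulldiscretescheme0}, then transfer back to $L^2(D)$ via the mass matrix identity $\|\psi^k\|_{L^2(D)}^2 = (\bc^k)^{\ast} M \bc^k$, which holds because $\psi^k = \sum_{i=1}^{N_H} c_i^k \phi_i(\bx)$ with $M$ being the Gram matrix of the basis. This reduction is valid for both MsFEM and En-MsFEM since the only difference between them is the choice of the underlying basis set, and the argument treats the basis abstractly through the matrices $M$ and $\mathcal{H}_{k+1/2}:=\tfrac{\epsilon^2}{2}S+V_1+V_2(t_{k+1/2})$.

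The key step is to take the discrete inner product of \eqref{eqn:fulldiscretescheme0} with $\bc^{k+1}+\bc^k$ from the left (using the conjugate transpose), producing
\begin{equation*}
\tfrac{i\epsilon}{\Delta t}(\bc^{k+1}+\bc^k)^{\ast} M (\bc^{k+1}-\bc^k) = \tfrac{1}{2}(\bc^{k+1}+\bc^k)^{\ast} \mathcal{H}_{k+1/2} (\bc^{k+1}+\bc^k).
\end{equation*}
The matrices $M$, $S$, $V_1$ and $V_2(t_{k+1/2})$ all have real symmetric entries (by construction of the basis functions, which are real-valued, and the realness of the potentials), so $\mathcal{H}_{k+1/2}$ is Hermitian and the right-hand side is a real number. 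Next I would expand the left-hand side as
\begin{equation*}
\tfrac{i\epsilon}{\Delta t}\Big[(\bc^{k+1})^{\ast} M \bc^{k+1} - (\bc^k)^{\ast} M \bc^k + (\bc^k)^{\ast} M \bc^{k+1} - (\bc^{k+1})^{\ast} M \bc^k\Big],
\end{equation*}
and use that $M=M^{\top}$ is real to note $(\bc^{k+1})^{\ast} M \bc^k = \overline{(\bc^k)^{\ast} M \bc^{k+1}}$, so the last two terms combine into $2i\,\mathrm{Im}\big[(\bc^k)^{\ast} M \bc^{k+1}\big]$, which is purely imaginary.

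Finally, equating imaginary parts of both sides eliminates the contribution from the off-diagonal terms and from the Hermitian right-hand side, yielding
\begin{equation*}
\tfrac{\epsilon}{\Delta t}\Big[(\bc^{k+1})^{\ast} M \bc^{k+1} - (\bc^k)^{\ast} M \bc^k\Big] = 0,
\end{equation*}
which translates directly into $\|\psi^{k+1}\|_{L^2(D)} = \|\psi^{k}\|_{L^2(D)}$. I do not anticipate a genuine obstacle here; the only point requiring care is the symmetry verification for $\mathcal{H}_{k+1/2}$ and the observation that the basis functions $\phi_i(\bx)$ (and their time-enriched counterparts $\phi_i(\bx;t_{\ell})$) are real-valued, so that $(\bc^{k+1})^{\ast} M \bc^k$ and $(\bc^k)^{\ast} M \bc^{k+1}$ are genuine complex conjugates. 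This extraction-of-imaginary-part trick is the standard route for mass conservation of Crank--Nicolson applied to unitary evolution, and it applies here unchanged regardless of whether $V^H$ or the enriched $V^E$ is used as the trial space.
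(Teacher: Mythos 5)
Your argument is correct and coincides with the paper's own proof: both left-multiply the Crank--Nicolson update by the (conjugate-transposed) average of $\bc^{k+1}$ and $\bc^k$, use that $\tfrac{\epsilon^2}{2}S+V_1+V_2(t_{k+1/2})$ and $M$ are real symmetric so the right-hand side is real and the cross terms $(\bc^k)^{\ast}M\bc^{k+1}-(\bc^{k+1})^{\ast}M\bc^k$ contribute only a real quantity after multiplication by $i\epsilon$, and then extract the imaginary part. The only differences are cosmetic (a factor of $2$ in the test vector and phrasing the cross-term cancellation via $2i\,\mathrm{Im}$ rather than via self-conjugacy), so no further comparison is needed.
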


\begin{proof}
	By definition, $ \psi_{k}=\sum_{i=1}^{N_H}c_i^k\phi_i$ and $ \psi_{k+1}=\sum_{i=1}^{N_H}c_i^{k+1}\phi_i$. Thus
		\begin{align}
		||\psi^{k+1}||_{L^2(D)}^2 & = \sum_{1\leq i,j\leq N_H} (c_i^{k+1})^* c_j^{k+1} \left(\phi_i,\phi_j\right) = (\bc^{k+1})^* M \bc^{k+1},\label{eqn:density1}\\
		||\psi^{k}||_{L^2(D)}^2 & = \sum_{1\leq i,j\leq N_H} (c_i^{k})^* c_j^{k} \left(\phi_i,\phi_j\right) = (\bc^{k})^* M \bc^{k}.\label{eqn:density2}
		\end{align}
		From the fully discrete scheme \eqref{eqn:fulldiscretescheme0}, we have 
		\begin{align}\label{eqn:ThmMassConservation1}
		i\epsilon M\frac{\bc^{k+1}-\bc^k}{\Delta t}=B\frac{\bc^{k+1}+\bc^k}{2},
		\end{align}
		where $B=\frac{\epsilon^2}{2}S+V_1+V_2(t_{k+\frac{1}{2}})$ is a real symmetric matrix. Multiplying \eqref{eqn:ThmMassConservation1} by $(\frac{\bc^{k+1}+\bc^k}{2})^*$ from the left and using \eqref{eqn:density1} - \eqref{eqn:density2}, we get
		\begin{multline}
		\frac{i\epsilon}{2\Delta t} \left\{||\psi^{k+1}||_{L^2(D)}^2-||\psi^{k}||_{L^2(D)}^2\right\} \\
		+\frac{i\epsilon}{2\Delta t}\left\{(\bc^k)^*M\bc^{k+1}-(\bc^{k+1})^*M\bc^{k}\right\}=(\frac{\bc^{k+1}+\bc^k}{2})^*B(\frac{\bc^{k+1}+\bc^k}{2}).\label{calculation-tmp2}	
	    \end{multline}
		Since $(\frac{\bc^{k+1}+\bc^k}{2})^*B(\frac{\bc^{k+1}+\bc^k}{2})$ on the right-hand side of \eqref{calculation-tmp2} is a real number and
		\begin{align*}
		\Big(\frac{i\epsilon}{2\Delta t}\left\{(\bc^k)^*M\bc^{k+1}-(\bc^{k+1})^*M\bc^{k}\right\}\Big)^*=\frac{i\epsilon}{2\Delta t}\left\{(\bc^k)^*M\bc^{k+1}-(\bc^{k+1})^*M\bc^{k}\right\}
		\end{align*}
		is also real, the imaginary part of \eqref{calculation-tmp2} produces
		\begin{align*}
		\frac{\epsilon}{2\Delta t} \left\{||\psi^{k+1}||_{L^2(D)}^2-||\psi^{k}||_{L^2(D)}^2\right\} = 0,
		\end{align*}
		which concludes the conservation of total mass.
	\end{proof}

It is worth mentioning that in the presence of time-dependent potential, there exists exchange of energy between electron and the external field
and therefore the energy cannot be conserved any more.
	
\section{Numerical examples}\label{sec:NumericalExamples}
\noindent	
In this section, we test the proposed methods for several examples in one and two dimensions. The numerical experiments consist of three 1D examples with a periodic potential, a multiplicative two-scales potential, and a layered two-scales potential, and a 2D example with a checkboard potential. The computational domain is $D=[0,1]$ in 1D and $D=[0,1]\times[0,1]$ in 2D and the final time is $T=1$ in all examples. In all cases, if not specified,
we denote $ \psi^{\epsilon}_{\textrm{ref}} $ the reference solution obtained by the Crank-Nicolson scheme in time with a very small stepsize $\tau=\frac{1}{2^{20}}\approx  9.5\times10^{-7}$ and the standard FEM in space with a very small meshsize $h=\frac{1}{3\times2^{15}}\approx 1.0\times10^{-5}$. We also show the performance of standard FEM for comparison. We denote $ \psi^{\epsilon}_{\textrm{num}} $ the numerical solutions obtained by any coarse mesh methods (standard FEM, MsFEM or En-MsFEM). In all examples, the total mass is checked to be a constant during the time evolution.

The initial data in 1D and 2D are chosen as
\begin{equation*}
\psi_{\textrm{in}}(x)=(\frac{1}{2\pi\sigma^2})^{1/4}e^{-\frac{(x-0.5)^2}{4\sigma^2}},~\sigma = 0.2,
\end{equation*}
and
\begin{equation*}
\psi_{\textrm{in}}(x,y)=(\frac{1}{2\pi\sigma^2})^{1/2}e^{\frac{-(x-1/2)^2-(y-1/2)^2}{4\sigma^2}},~\sigma = 0.2,
\end{equation*}
respectively.

In what follows, we shall compare the relative error between the numerical solutions and the reference solution in both $ L^2 $ norm and $ H^1 $ norm
\begin{align*}
\textrm{Error}_{L^2} =\dfrac{||\psi^{\epsilon}_{\textrm{num}}-\psi^{\epsilon}_{\textrm{ref}}||_{L^2}}
{||\psi^{\epsilon}_{\textrm{ref}}||_{L^2}}, \quad 
\textrm{Error}_{H^1} =\dfrac{||\psi^{\epsilon}_{\textrm{num}}-\psi^{\epsilon}_{\textrm{ref}}||_{H^1}}
{||\psi^{\epsilon}_{\textrm{ref}}||_{H^1}}, 
\end{align*}
where the $ L^2 $ norm and $ H^1 $ norm are defined as
\[
||\psi^{\epsilon}||^2_{L^2}=\int_{D}|\psi^{\epsilon}|^2 \d\bx, \quad ||\psi^{\epsilon}||^2_{H^1}=\int_{D}|\nabla \psi^{\epsilon}|^2 \d\bx + \int_{D}|\psi^{\epsilon}|^2 \d\bx,
\]
respectively. Note that $||\psi^{\epsilon}_{\textrm{ref}}||_{H^1}$ increases significantly as $\veps$ reduces. We therefore consider relative errors in both $L^2$ norm and $H^1$ norm.

Moreover, we will also check the performance of our methods for the computation of observables, including the position density
\begin{align}\label{eqn:density}
n^{\epsilon}(\bx,t)=|\psi^{\epsilon}(\bx,t)|^2,
\end{align}
and the energy density
\begin{align}\label{eqn:energy}
e^{\epsilon}(\bx,t)=\frac{\epsilon^2}{2}|\nabla \psi^{\epsilon}(\bx,t)|^2+
\big(v_1^{\epsilon}(\bx)+v_2(\bx, t)\big)|\psi^{\epsilon}(\bx,t)|^2.
\end{align}

\begin{example}[1D case with a spatially periodic potential and a sine type time-dependent potential]\label{example1}
In this experiment, the potential $v^\epsilon(x,t) = v_1^\epsilon(x) + v_2(x,t)$. We start with the so-called Mathieu model, where $ v_1^{\epsilon}(x) =\cos(2\pi\frac{x}{\epsilon})$ is a periodic function of $x/\epsilon$. The time-dependent part of the potential is $v_2(x,t) = E_0\sin(2\pi t)x$ with $E_0=20$. We set $\epsilon=\frac{1}{32}$.
	
In Figures \ref{fig:ex1L2err_wave_final} and \ref{fig:ex1H1err_wave_final}, we record the relative $L^2$ and $H^1$ errors on a series of coarse meshes when $ H = \frac{1}{64}, \frac{1}{96},\frac{1}{128}, \frac{1}{192}, \frac{1}{256}, \frac{1}{384} $. 
Multiscale basis functions in the En-MsFEM combines the basis functions used in MsFEM and the enriched basis obtained when the maximum of $v_2(\bx,t)$ is achieved. The number of enriched basis is $1/8$ of that in MsFEM, and thus the computational complexity of En-MsFEM is approximately the same as that of the MsFEM. We choose $\Delta t = 4\tau=\frac{1}{2^{18}}$ so the approximation error due to the temporal discretization can be ignored.

In Figures \ref{fig:ex1L2err_density_final} and \ref{fig:ex1L2err_engdensity_final}, we show the relative $L^2$ errors of the position density and energy density functions. From these results, for moderate coarse meshes, we can see that MsFEM reduces the approximation error by more than two orders of magnitude than that of the standard FEM in both $L^2$ and $H^1$ norms. In addition, En-MsFEM further reduces the error by another one order of magnitude in $L^2$ norm and by several times in $H^1$ norm. Figure \ref{fig:ex1_err_series} further illustrates how the approximation error is reduced as time evolves.
\begin{figure}[htbp]
	\centering
	\begin{subfigure}{0.45\textwidth}
		\includegraphics[width=1.0\textwidth]{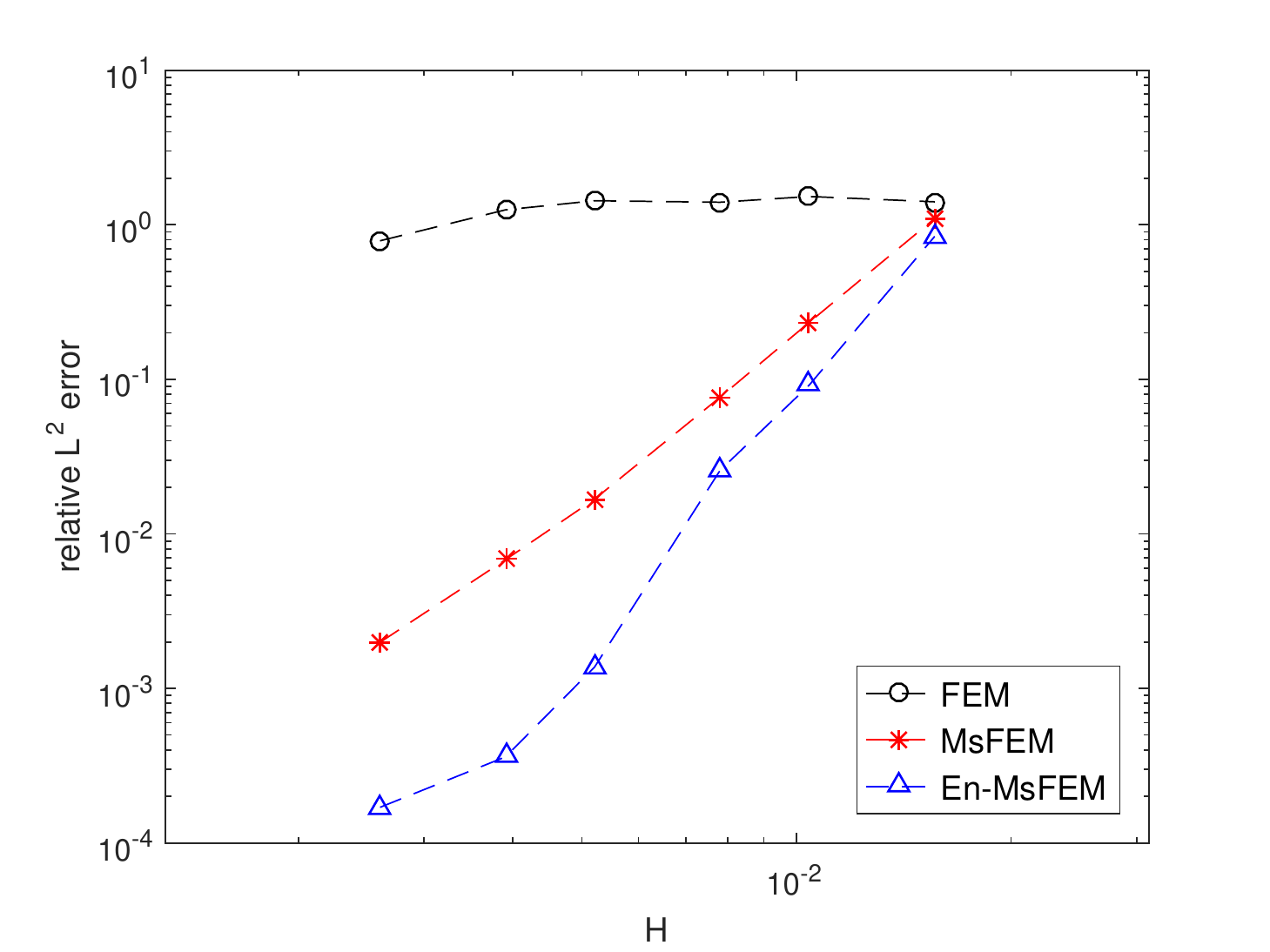}
		\caption{$L^2$ error of the wavefunction}
		\label{fig:ex1L2err_wave_final}
	\end{subfigure}
	\begin{subfigure}{0.45\textwidth}
		\centering
		\includegraphics[width=\textwidth]{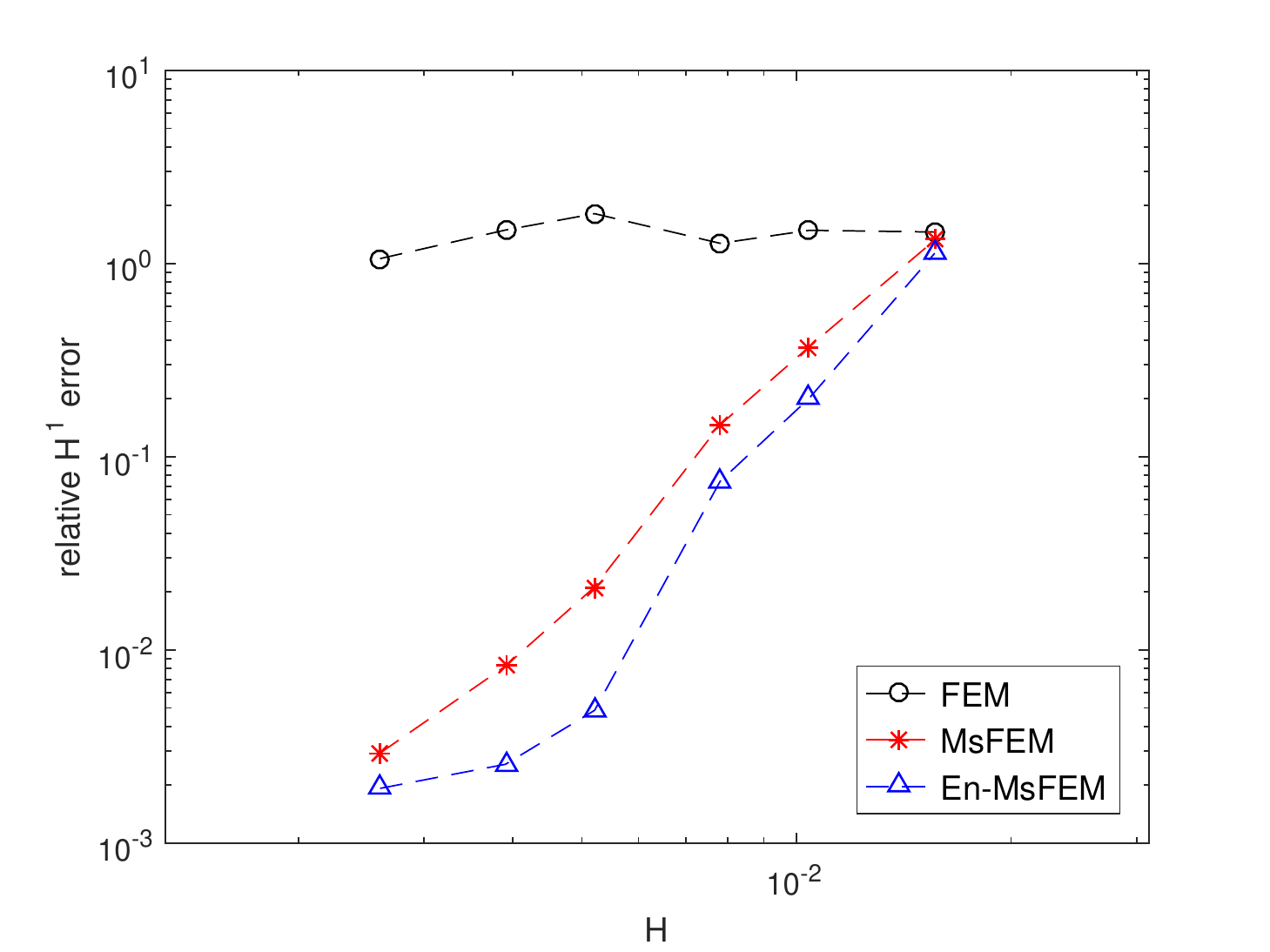}
		\caption{$H^1$ error of the wavefunction}
		\label{fig:ex1H1err_wave_final}
	\end{subfigure}
	\caption{Relative errors of the wavefunction at $T=1$ in \textbf{Example \ref{example1}}.}
	\label{fig:ex1_wavefuncerr_final_v3}
\end{figure}

\begin{figure}[htbp]
	\centering
	\begin{subfigure}{0.45\textwidth}
		\includegraphics[width=1.0\textwidth]{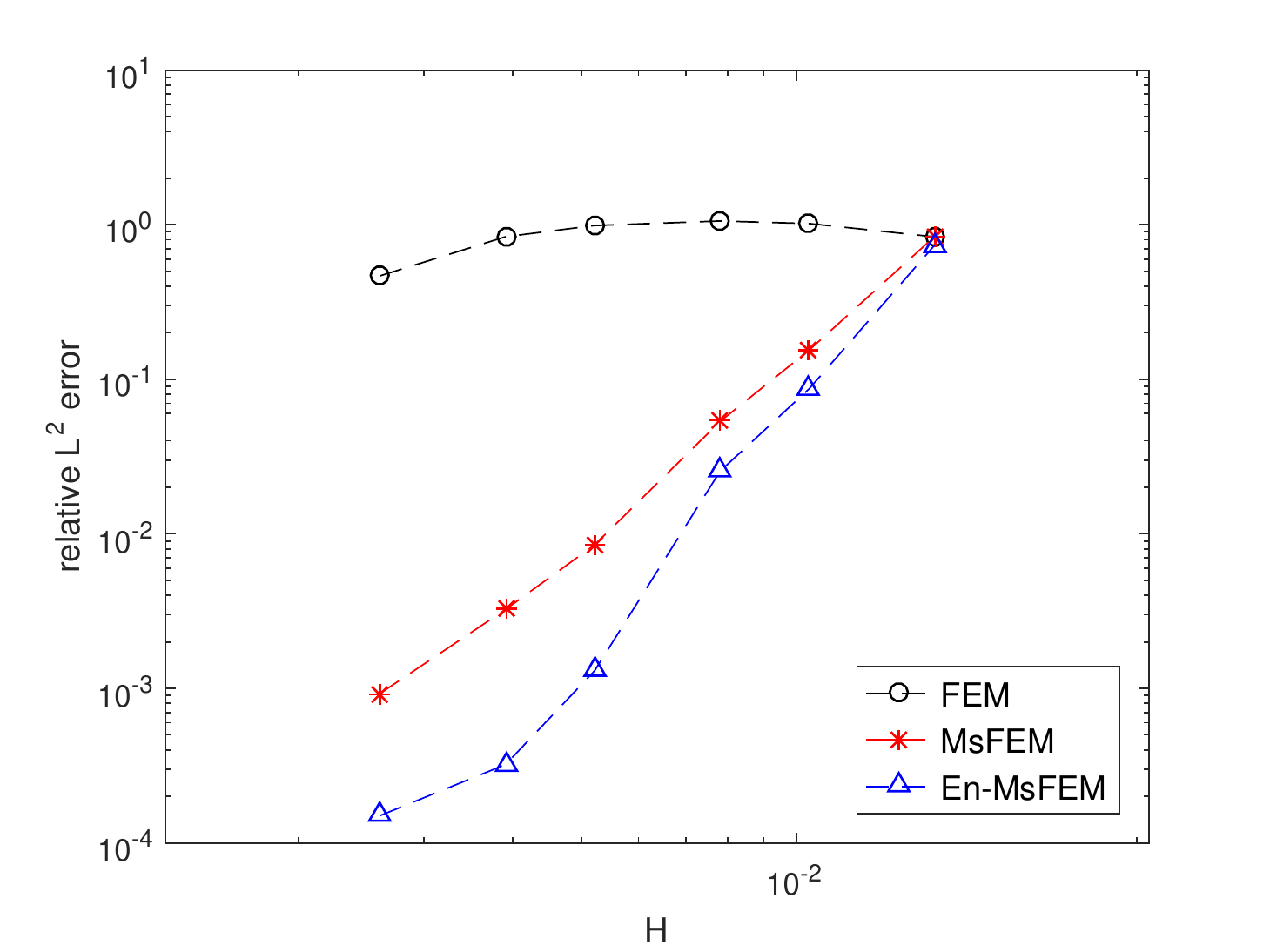}
		\caption{$L^2$ error of the position density function}
		\label{fig:ex1L2err_density_final}
	\end{subfigure}
	\begin{subfigure}{0.45\textwidth}
		\includegraphics[width=1.0\textwidth]{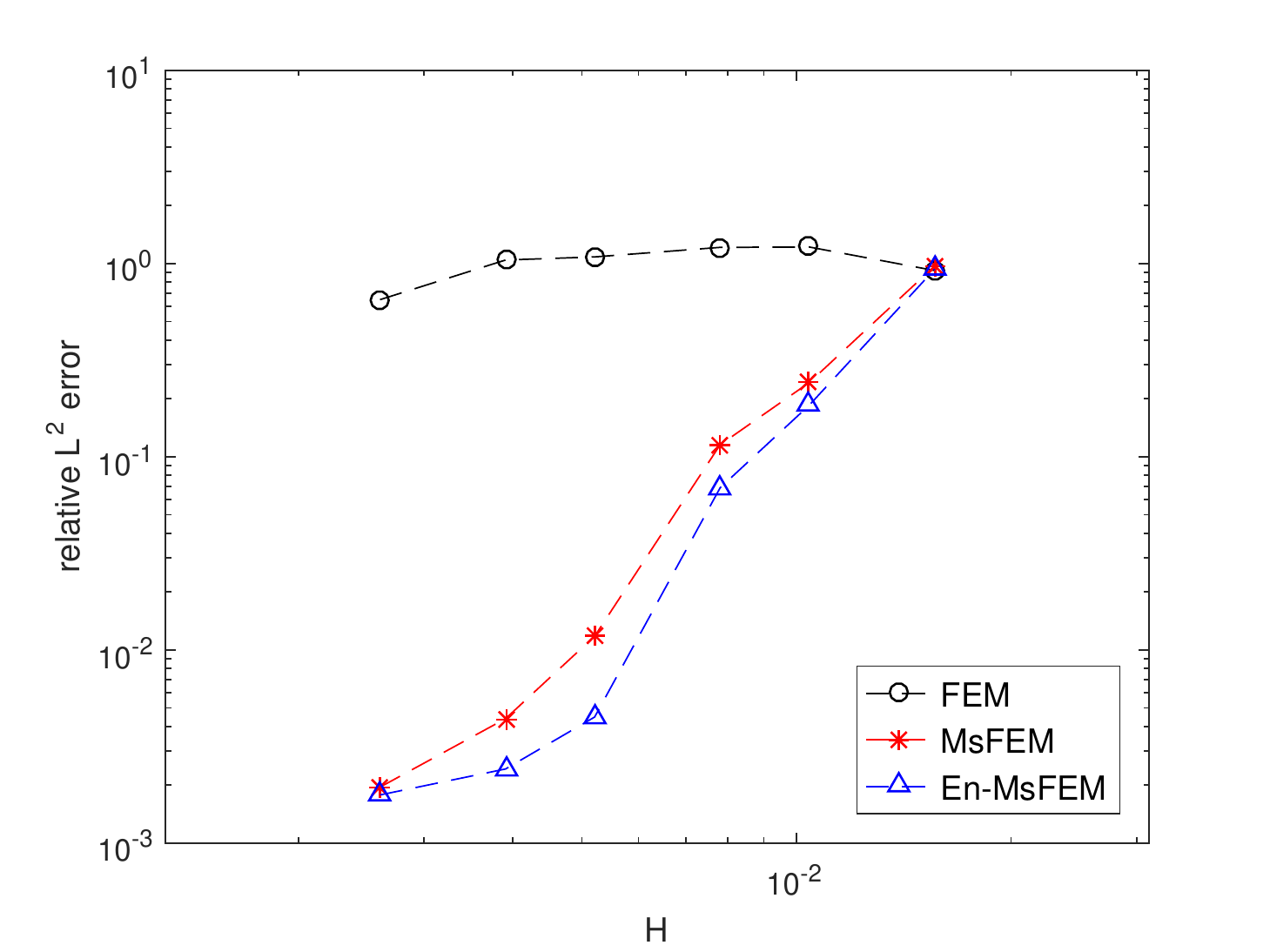}
		\caption{$L^2$ error of the energy density function}
		\label{fig:ex1L2err_engdensity_final}
	\end{subfigure}
	\caption{Relative errors of density functions at $T=1$ in \textbf{Example \ref{example1}}.}
	\label{fig:ex1_densityerr_final_v3}
\end{figure}

\begin{figure}[htbp]
	\centering
	\begin{subfigure}{0.45\textwidth}
		\includegraphics[width=1.0\textwidth]{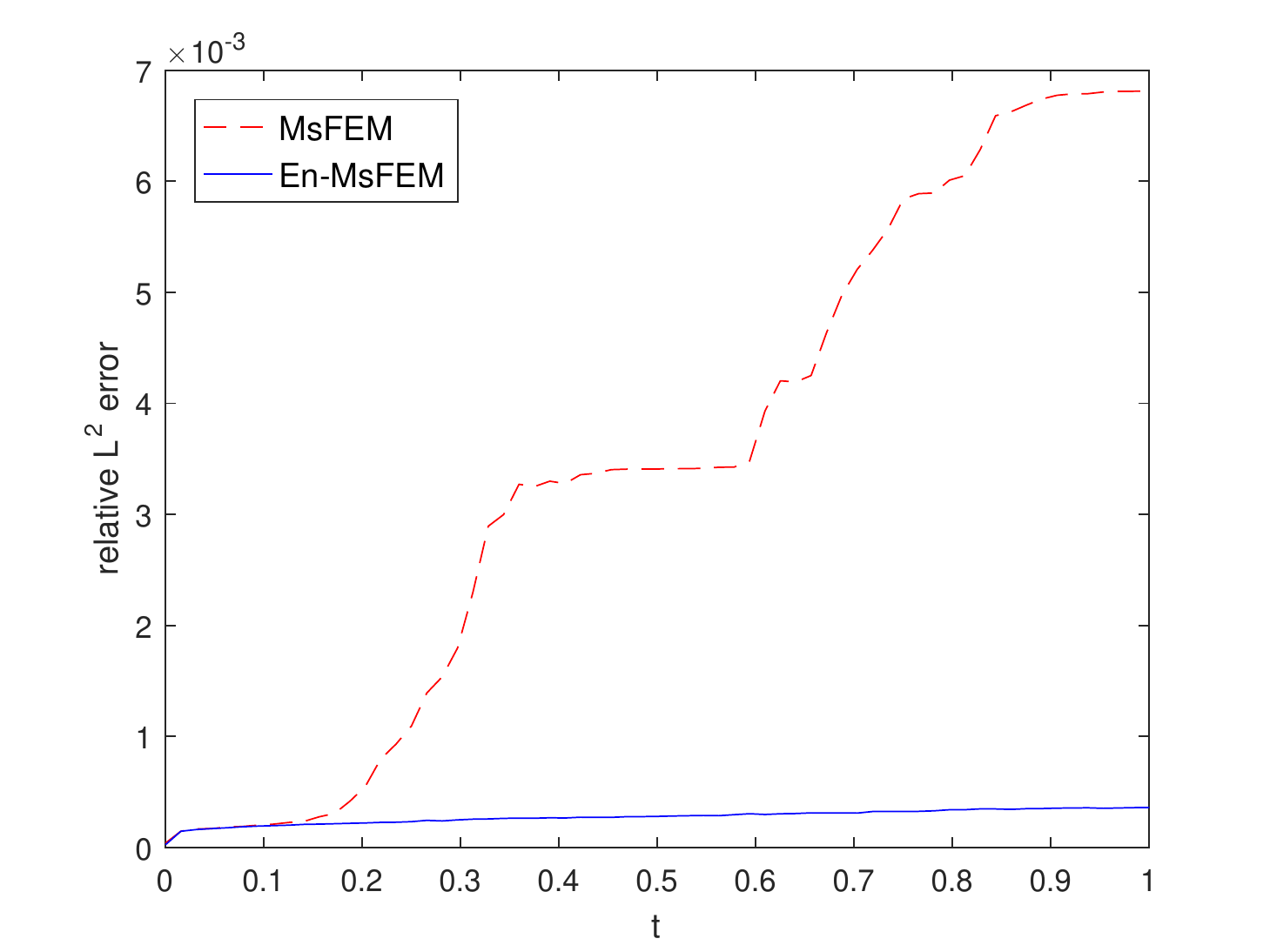}
		\caption{$L^2$ error of the wavefunction}
	\end{subfigure}
	\begin{subfigure}{0.45\textwidth}
		\centering
		\includegraphics[width=1.0\textwidth]{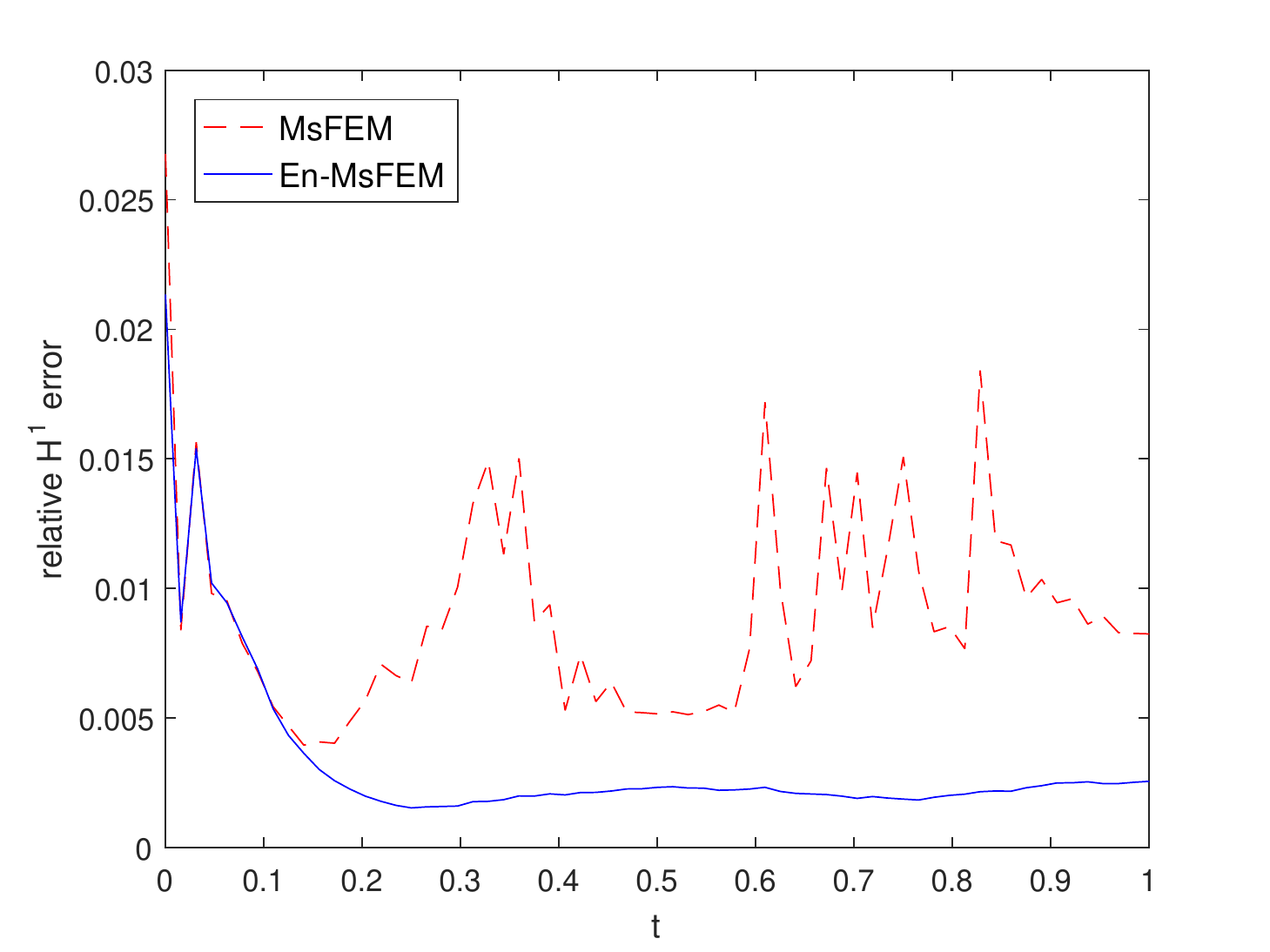}
		\caption{$H^1$ error of the wavefunction}
	\end{subfigure}
	\begin{subfigure}{0.45\textwidth}
		\includegraphics[width=1.0\textwidth]{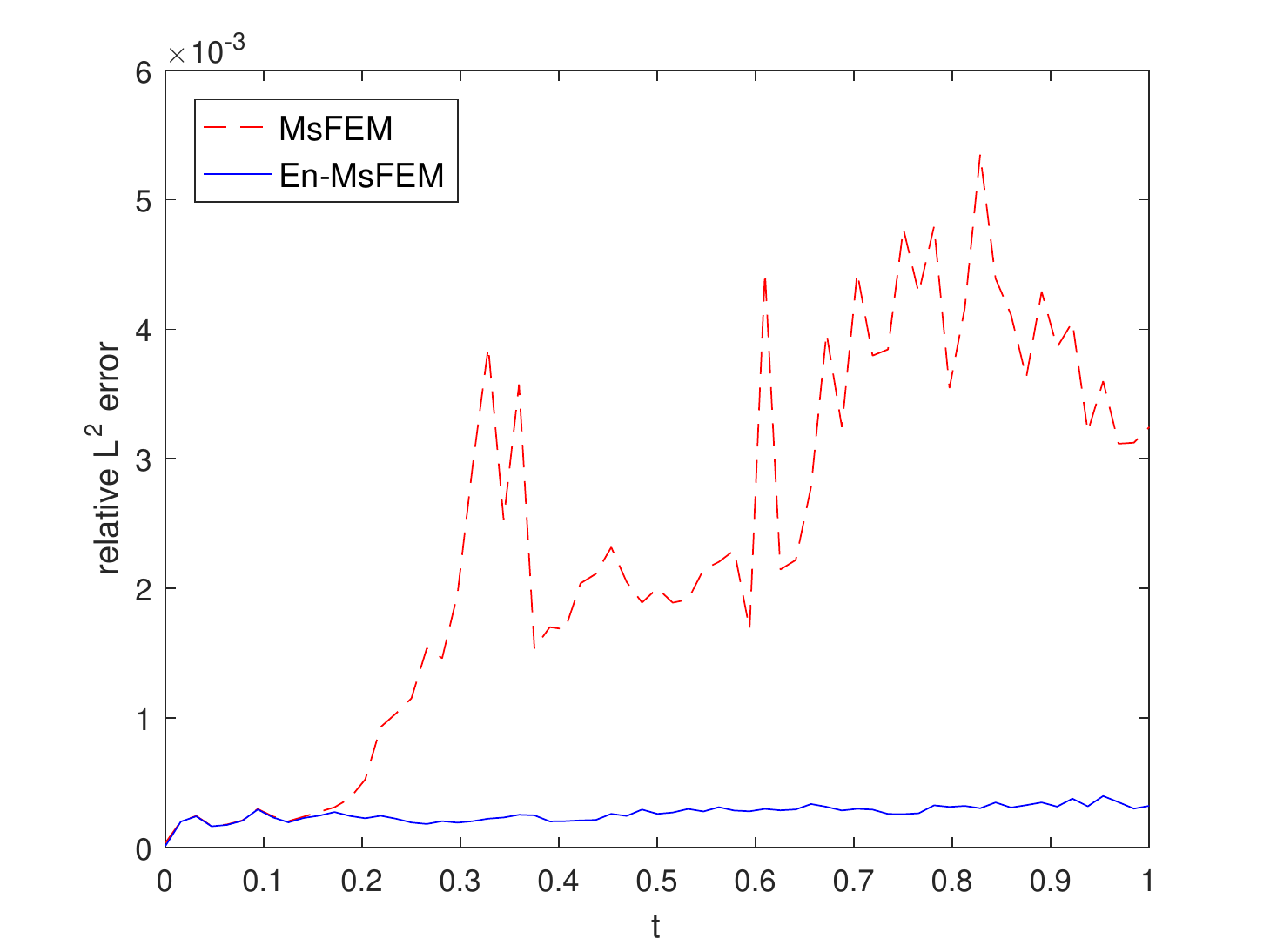}
		\caption{$L^2$ error of the position density function}
	\end{subfigure}
	\begin{subfigure}{0.45\textwidth}
		\includegraphics[width=1.0\textwidth]{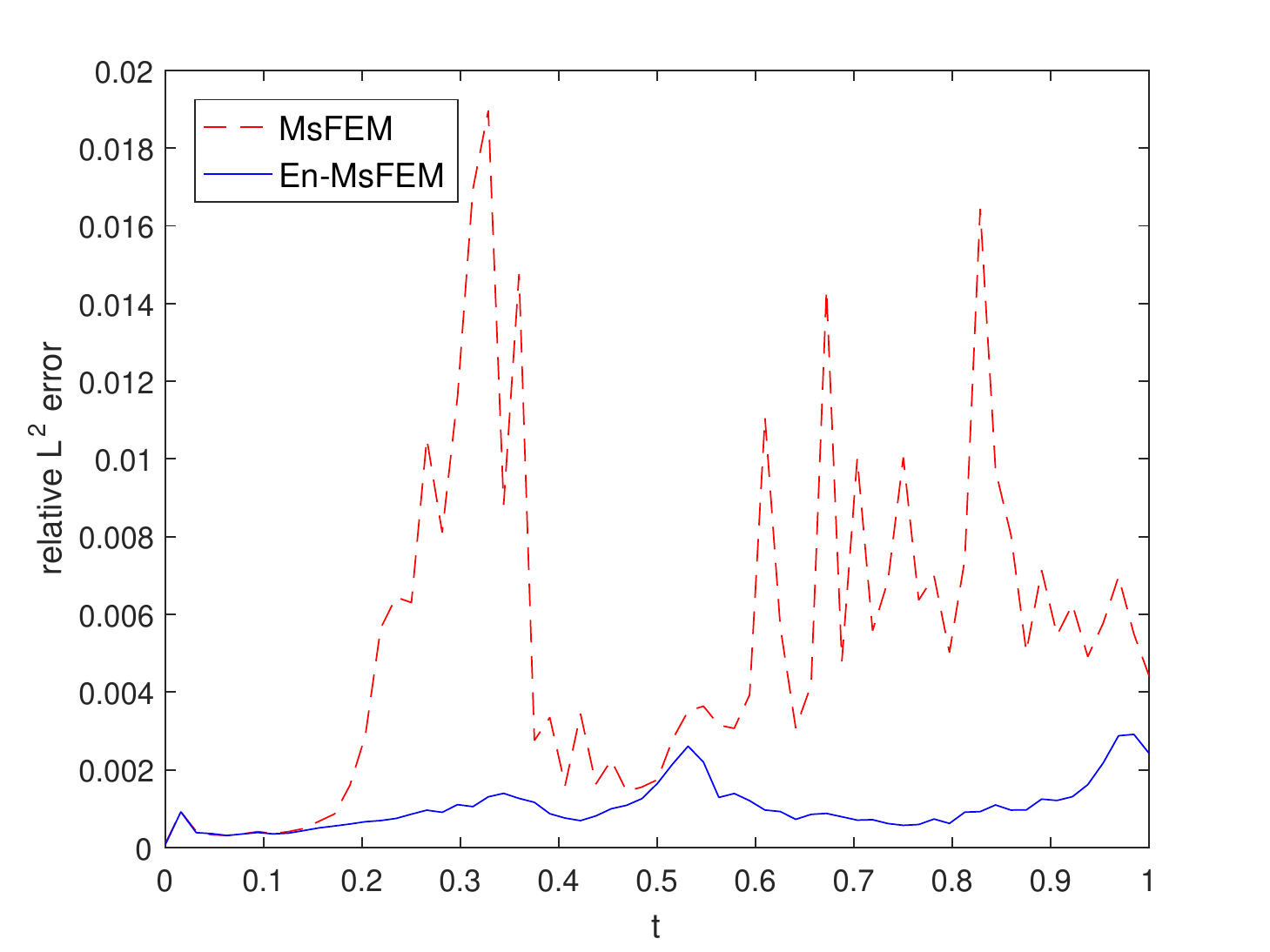}
		\caption{$L^2$ error of the energy density function}
	\end{subfigure}
	\caption{Relative errors of MsFEM and En-MsFEM as a function of time when $H=\frac{1}{256}$ in \textbf{Example \ref{example1}}.}
	\label{fig:ex1_err_series}
\end{figure}

We find that the En-MsFEM is superior in the case when the magnitude of the time-dependent potential is large, i.e., large $E_0$ 
in $v_2(x,t)$. Moreover, its efficiency is not affected by the magnitude of $\epsilon$. Even if $\epsilon$ is further reduced, the En-MsFEM still
performs well as long as $E_0$ is large.
\end{example}
 
\begin{example}[1D case with a multiplicative two-scale potential and a periodic time-dependent potential]\label{example2}
In this experiment, the potential $v^\epsilon(x,t) = v_1^\epsilon(x) + v_2(x,t)$. The time-independent part of the potential $v_1^{\epsilon}(x)=\sin(2x^2)\sin(2\pi \frac{x}{\epsilon})$ is a multiplicative two-scale potential. The time-dependent part is $v_2(x,t) = E_0\frac{\exp(2\sin(2\pi t))-1}{\exp(2)-1}x$ with $E_0=20$.

Set $\epsilon=\frac{1}{32}$. We compute numerical solutions on a series of coarse meshes $H=\frac{1}{48}$, $\frac{1}{64}$, $\frac{1}{96}$, $\frac{1}{128}$, $\frac{1}{192}$, $\frac{1}{256}$ in the MsFEM and the number of enriched basis is $\frac{1}{8}$ of that in the MsFEM, obtained at the time when $v_2(\bx,t)$ is maximized. We choose $\Delta t = 4\tau=\frac{1}{2^{18}}$ so the approximation error due to the temporal discretization can be ignored.
	
In Figure \ref{fig:ex2_waveerr_final_v3} we plot relative $L^2$ and $H^1$ errors of the standard FEM, MsFEM, and En-MsFEM at the final time $T=1$. In Figure \ref{fig:ex2_densityerr_final_v3}, we show the relative $L^2$ errors of
the position density and energy density functions. From these results, for moderate coarse meshes, we can see that MsFEM reduces the approximation error by more than two orders of magnitude than that of the standard FEM in both $L^2$ and $H^1$ norms. In addition, En-MsFEM further reduces the error by another one order of magnitude in $L^2$ norm and by several times in $H^1$ norm.

\begin{figure}[htbp]
	\centering
	\begin{subfigure}{0.45\textwidth}
		\includegraphics[width=1.0\textwidth]{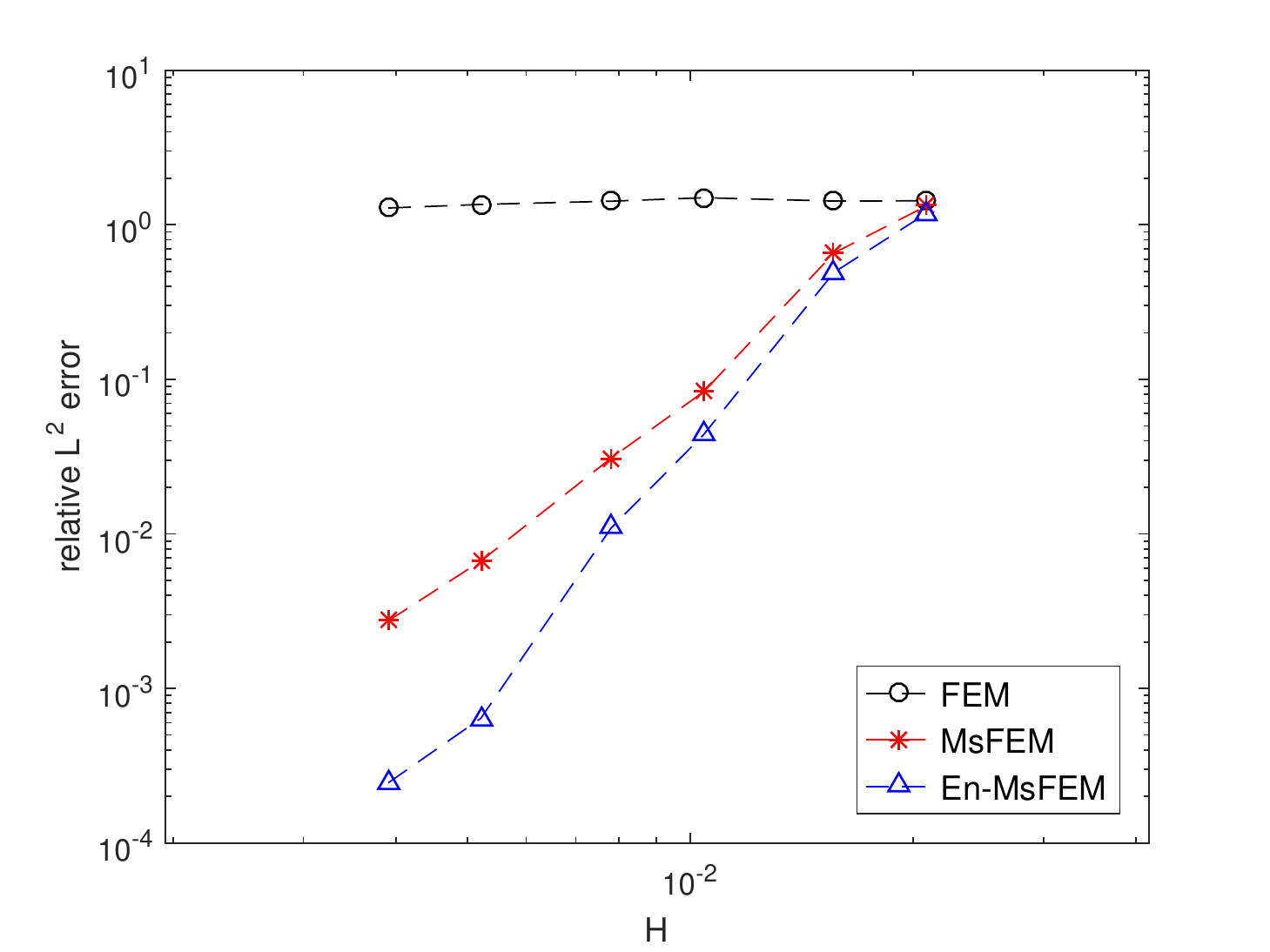}
		\caption{$L^2$ error of the wavefunction}
		\label{fig:ex2L2err_wave_final}
	\end{subfigure}
	\begin{subfigure}{0.45\textwidth}
		\centering
		\includegraphics[width=\textwidth]{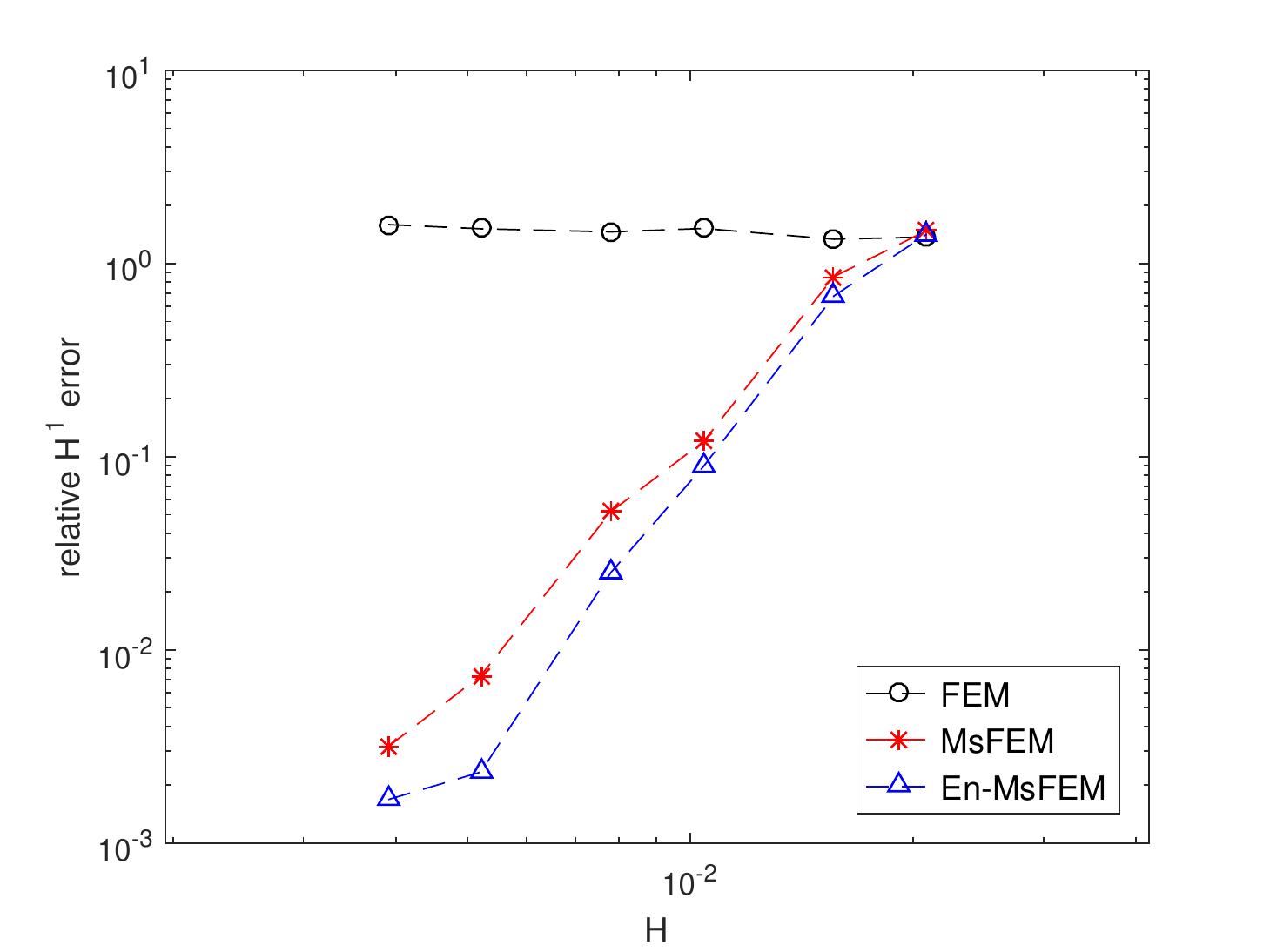}
		\caption{$H^1$ error of the wavefunction}
		\label{fig:ex2H1err_wave_final}
	\end{subfigure}
	\caption{Relative errors of wavefunction at $T=1$ in \textbf{Example \ref{example2}}. }
	\label{fig:ex2_waveerr_final_v3}
\end{figure}

\begin{figure}[htbp]
	\centering
	\begin{subfigure}{0.45\textwidth}
		\includegraphics[width=1.0\textwidth]{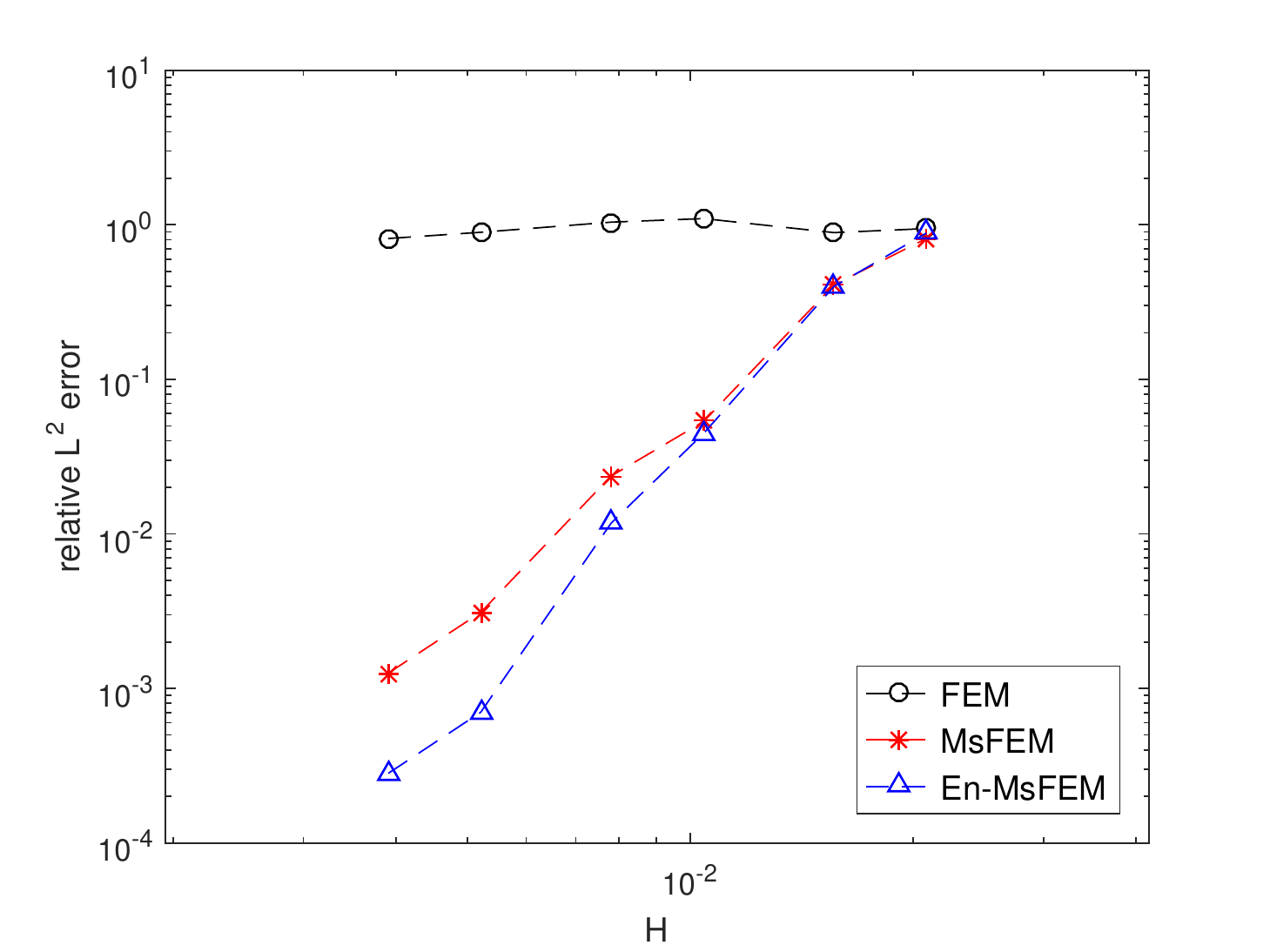}
		\caption{$L^2$ error of the position  density function}
	\end{subfigure}
	\begin{subfigure}{0.45\textwidth}
		\includegraphics[width=1.0\textwidth]{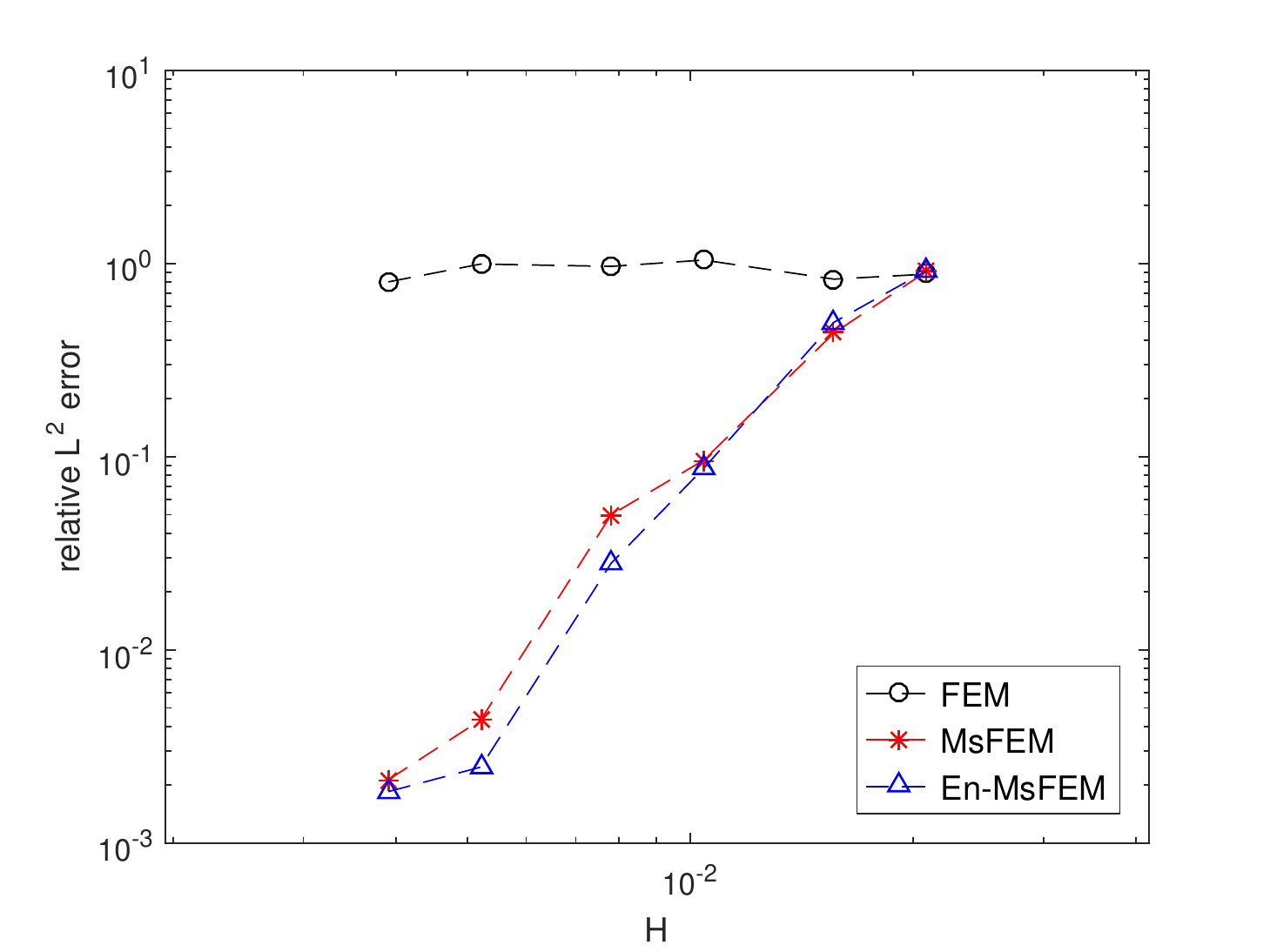}
		\caption{$L^2$ error of the energy density function}
	\end{subfigure}
	\caption{Relative errors of density functions at $T=1$ in \textbf{Example \ref{example2}}.}
	\label{fig:ex2_densityerr_final_v3}
\end{figure}

We visualize profiles of the position density and energy density functions of the standard FEM, MsFEM, and En-MsFEM in Figure \ref{fig:ex2_density_engdensity}.
Nice agreement is observed. We visualize the time evolution of total mass, total energy, and energy difference of MsFEM and En-MsFEM in Figure \ref{fig:ex2_totaleng_series}. The total mass is conserved, which agrees with \textbf{Proposition \ref{prop:conservation}}.
Due to the energy exchange in the presence of an external field, the total energy is not conserved. The energy difference is small in the MsFEM and En-MsFEM further reduces the difference by two orders of magnitude as time evolves.

\begin{figure}[htbp]
	\centering
	\begin{subfigure}{0.45\textwidth}
		\includegraphics[width=1.0\textwidth]{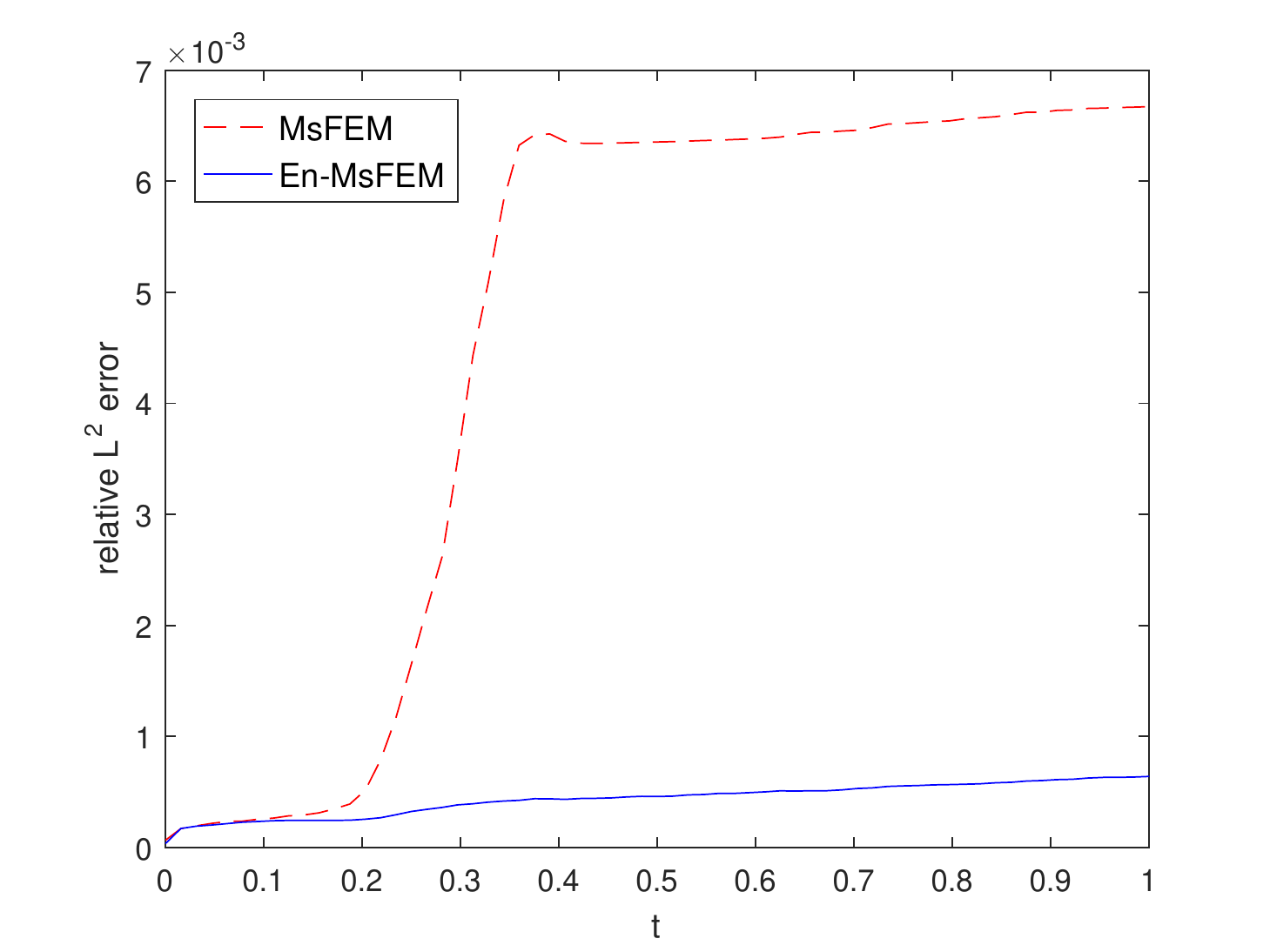}
		\caption{$L^2$ error of the wavefunction}
	\end{subfigure}
	\begin{subfigure}{0.45\textwidth}
		\centering
		\includegraphics[width=\textwidth]{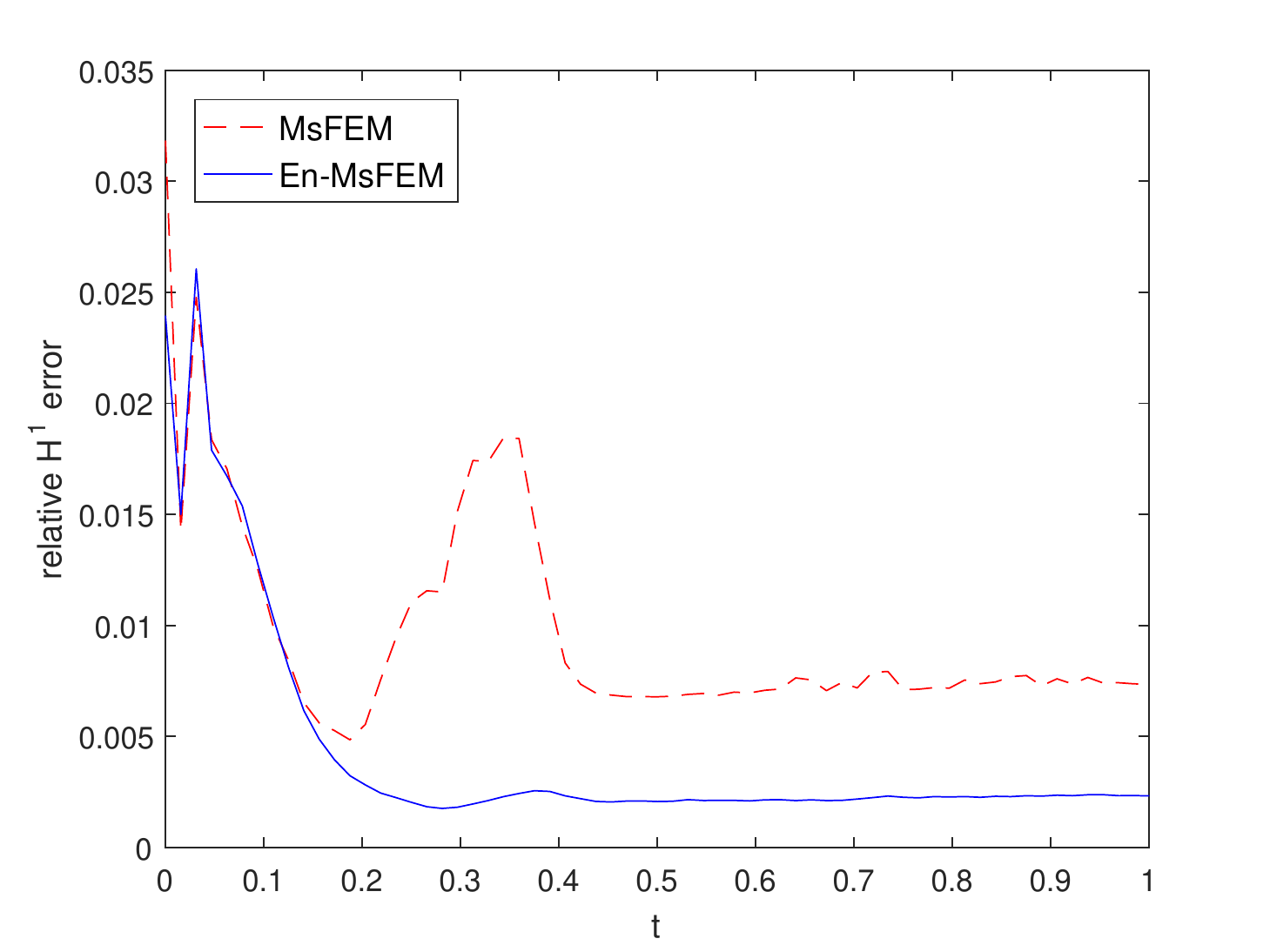}
		\caption{$H^1$ error of the wavefunction}
	\end{subfigure}
	\begin{subfigure}{0.45\textwidth}
		\includegraphics[width=1.0\textwidth]{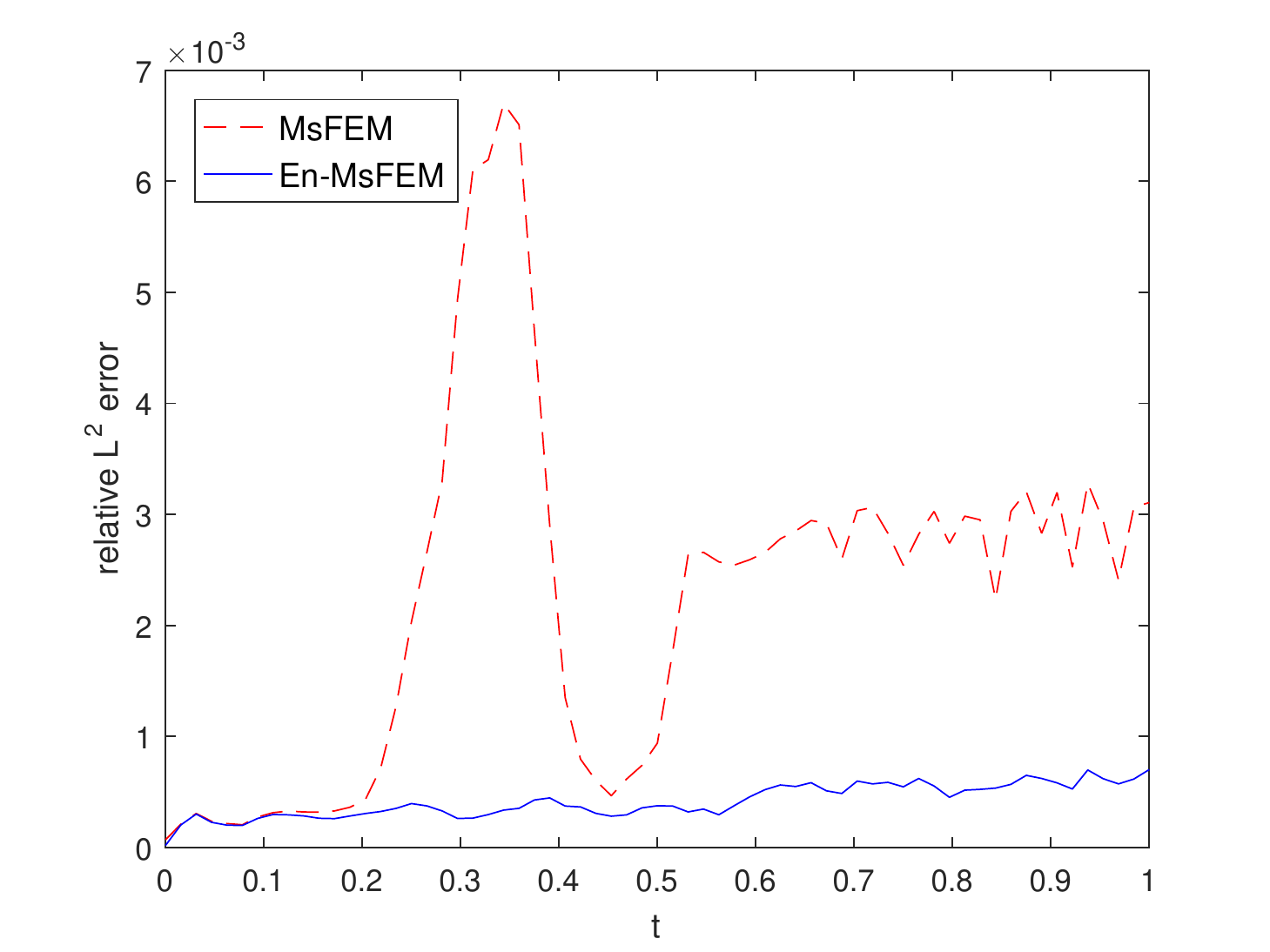}
		\caption{$L^2$ error of the position density function}
	\end{subfigure}
	\begin{subfigure}{0.45\textwidth}
		\includegraphics[width=1.0\textwidth]{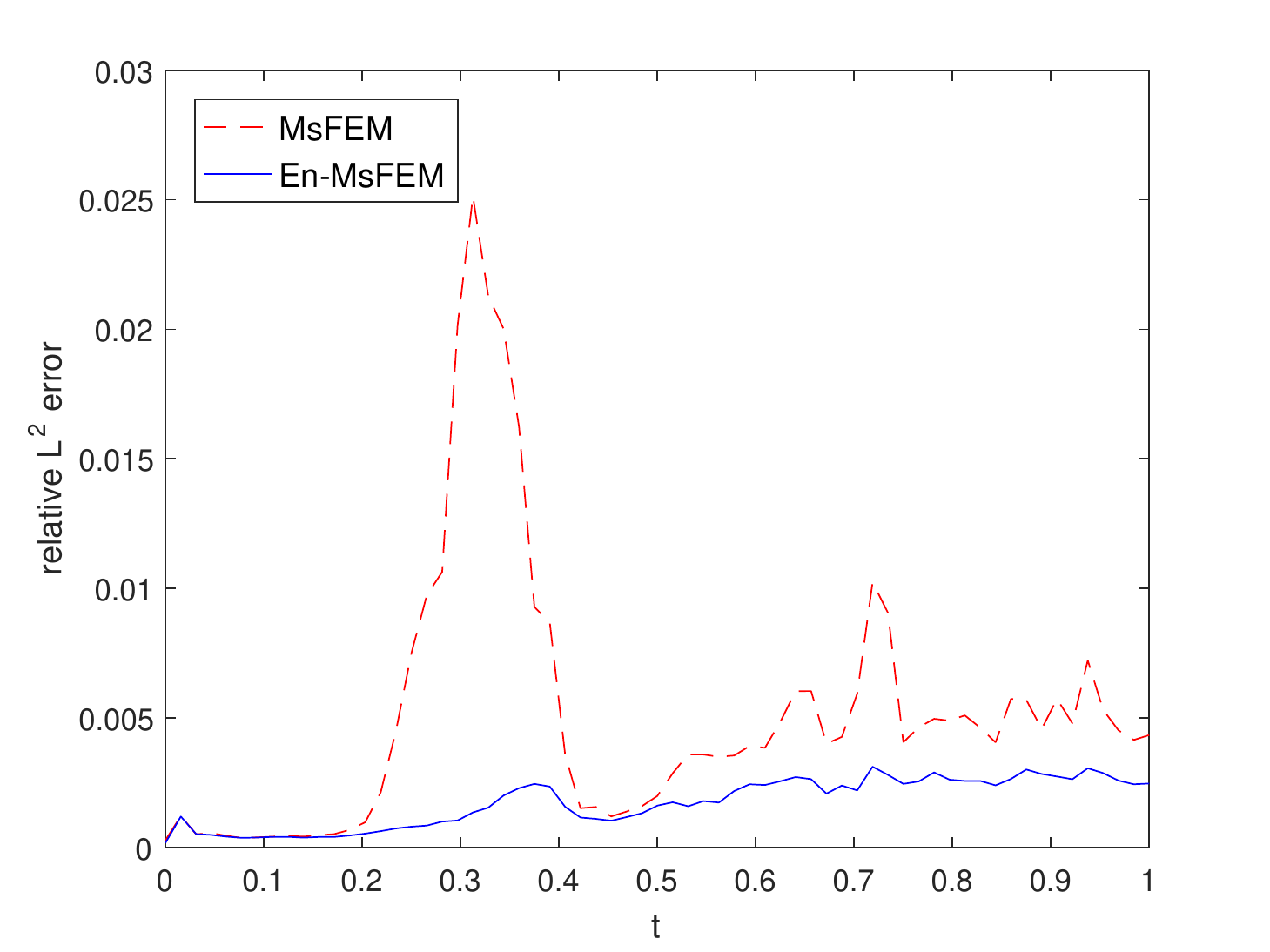}
		\caption{$L^2$ error of the energy density function}
	\end{subfigure}
	\caption{ Relative errors of MsFEM and En-MsFEM as a function of time when $H=\frac{1}{192}$ in \textbf{Example \ref{example2}}.}
	\label{fig:ex2_err_series}
\end{figure}

\begin{figure}[htbp]
	\centering
	\begin{subfigure}{0.45\textwidth}
		\includegraphics[width=1.0\textwidth]{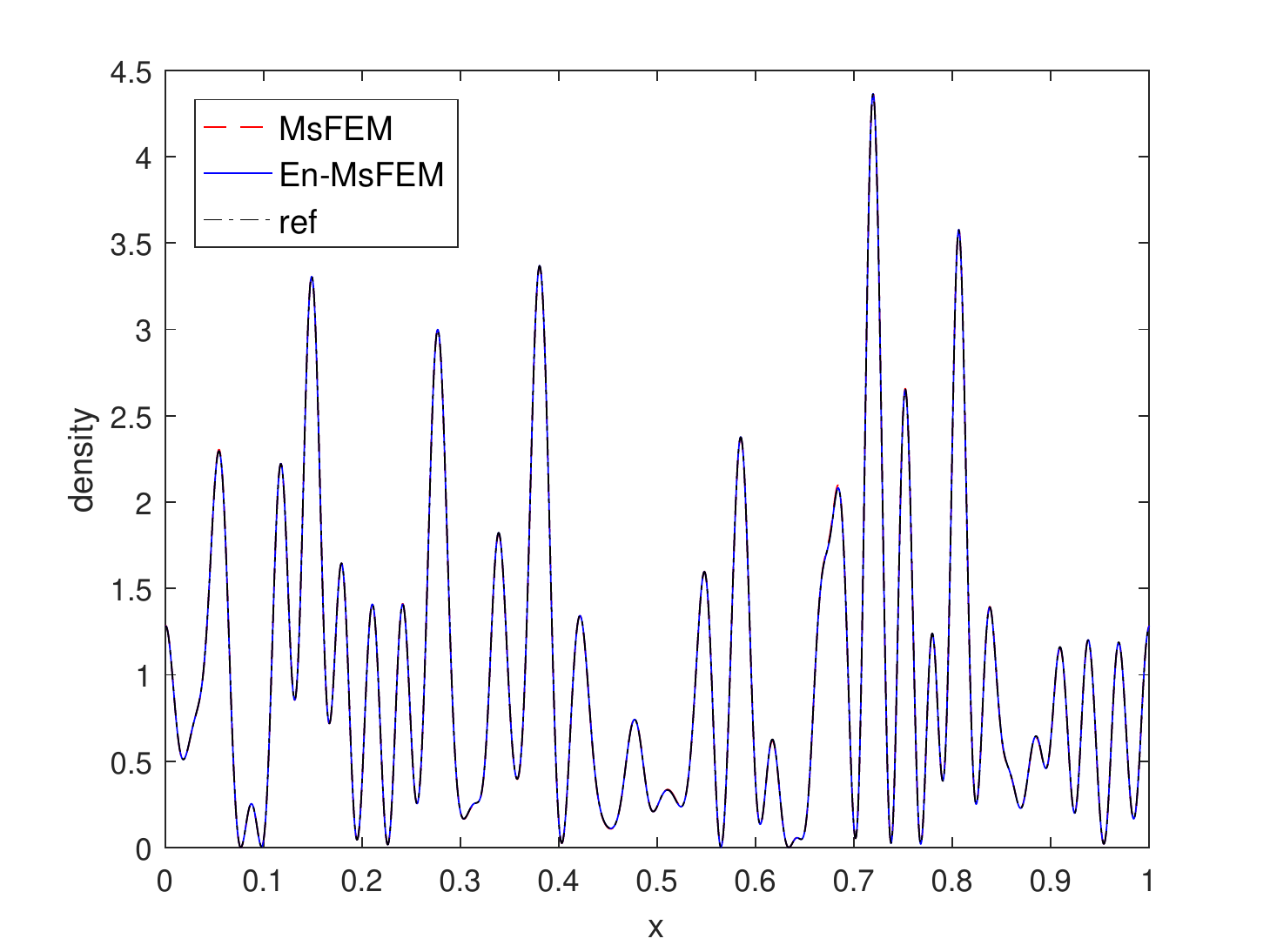}
		\caption{$n_{\textrm{num}}^{\epsilon}(\bx,T)$ and $n_{\textrm{ref}}^{\epsilon}(\bx,T)$}
	\end{subfigure}
	\begin{subfigure}{0.45\textwidth}
		\centering
		\includegraphics[width=\textwidth]{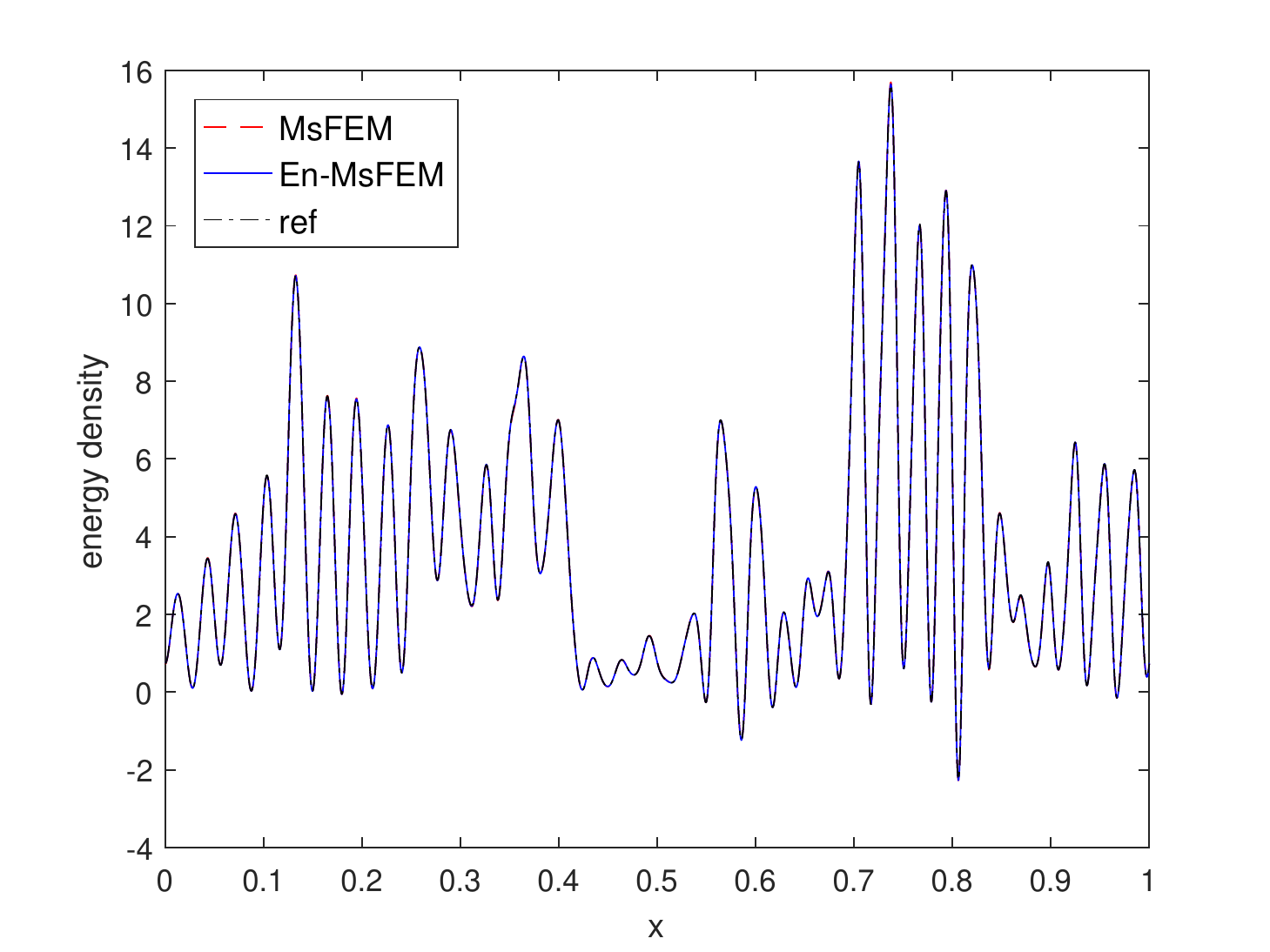}
		\caption{$e_{\textrm{num}}^{\epsilon}(\bx,T)$ and $e_{\textrm{ref}}^{\epsilon}(\bx,T)$}
	\end{subfigure}
	\begin{subfigure}{0.45\textwidth}
		\includegraphics[width=1.0\textwidth]{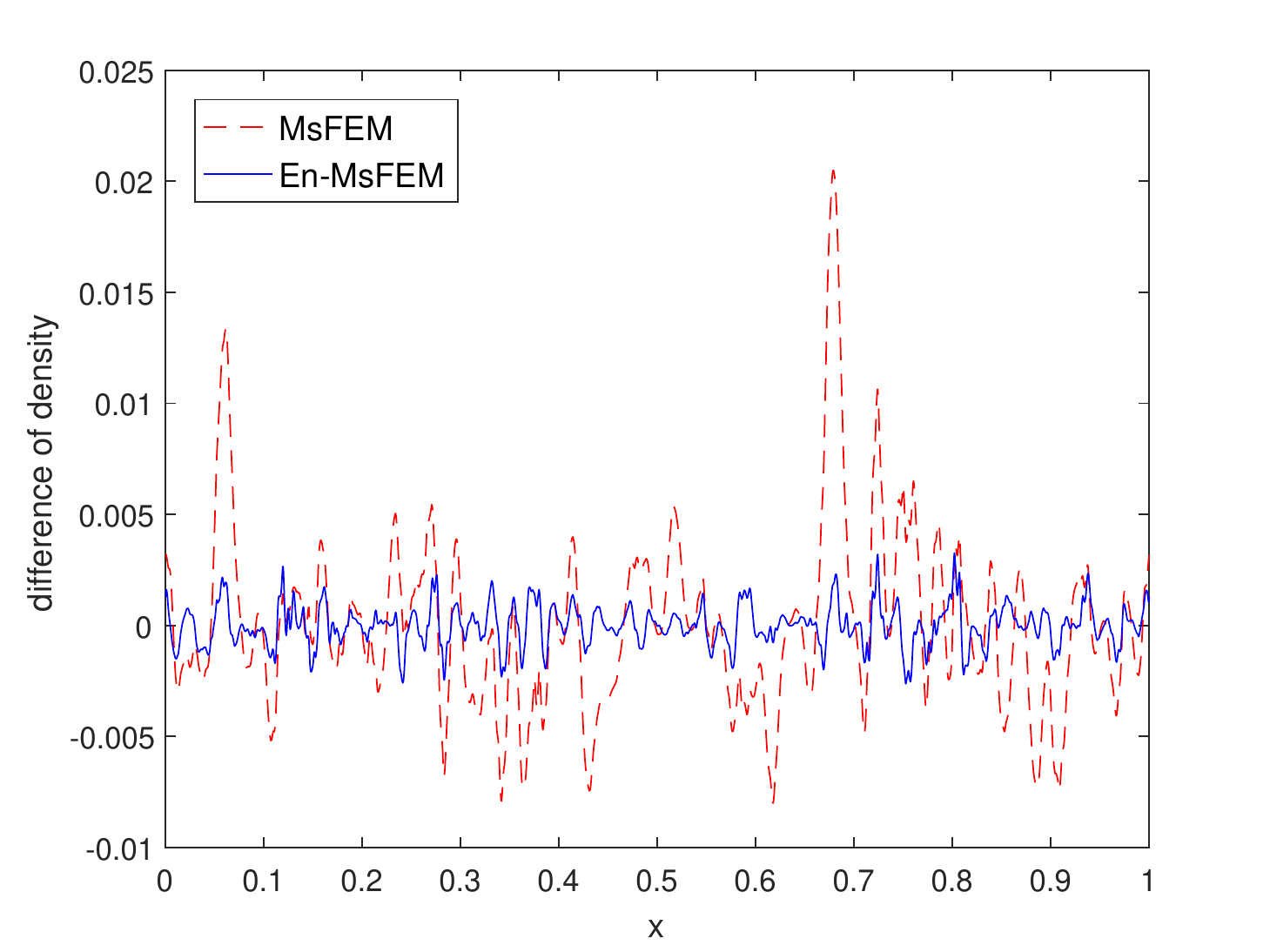}
		\caption{$n_{\textrm{num}}^{\epsilon}(\bx,T)-n_{\textrm{ref}}^{\epsilon}(\bx,T)$}
	\end{subfigure}
	\begin{subfigure}{0.45\textwidth}
		\centering
		\includegraphics[width=\textwidth]{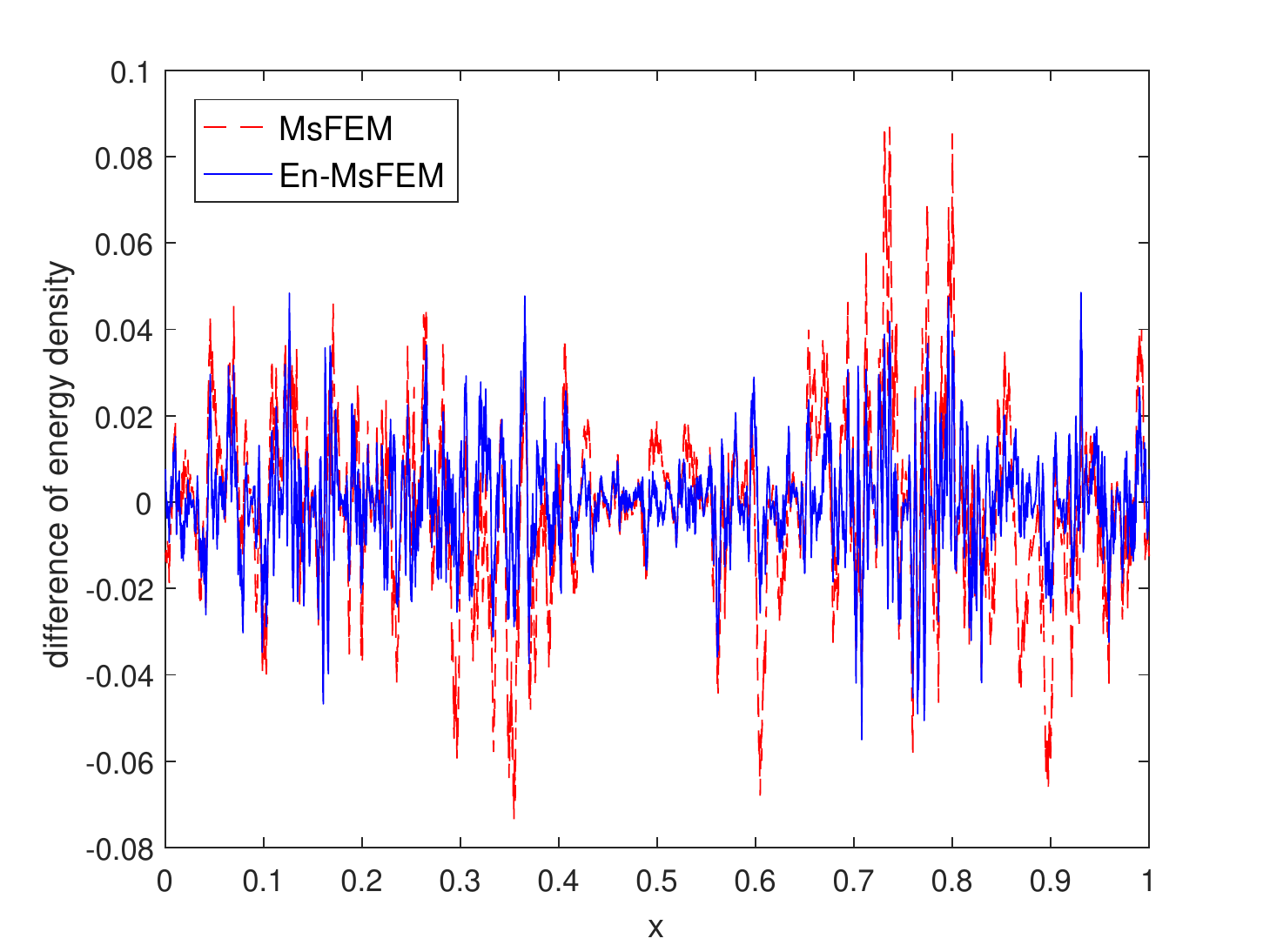}
		\caption{$e_{\textrm{num}}^{\epsilon}(\bx,T)-e_{\textrm{ref}}^{\epsilon}(\bx,T)$}
	\end{subfigure}
	\caption{Profiles of the position density and energy density functions and differences at $T=1$ in \textbf{Example \ref{example2}} when $H=\frac{1}{192}$.}
	\label{fig:ex2_density_engdensity}
\end{figure} 

\begin{figure}[htbp]
	\centering
	\begin{subfigure}{0.32\textwidth}
		\includegraphics[width=1.0\textwidth]{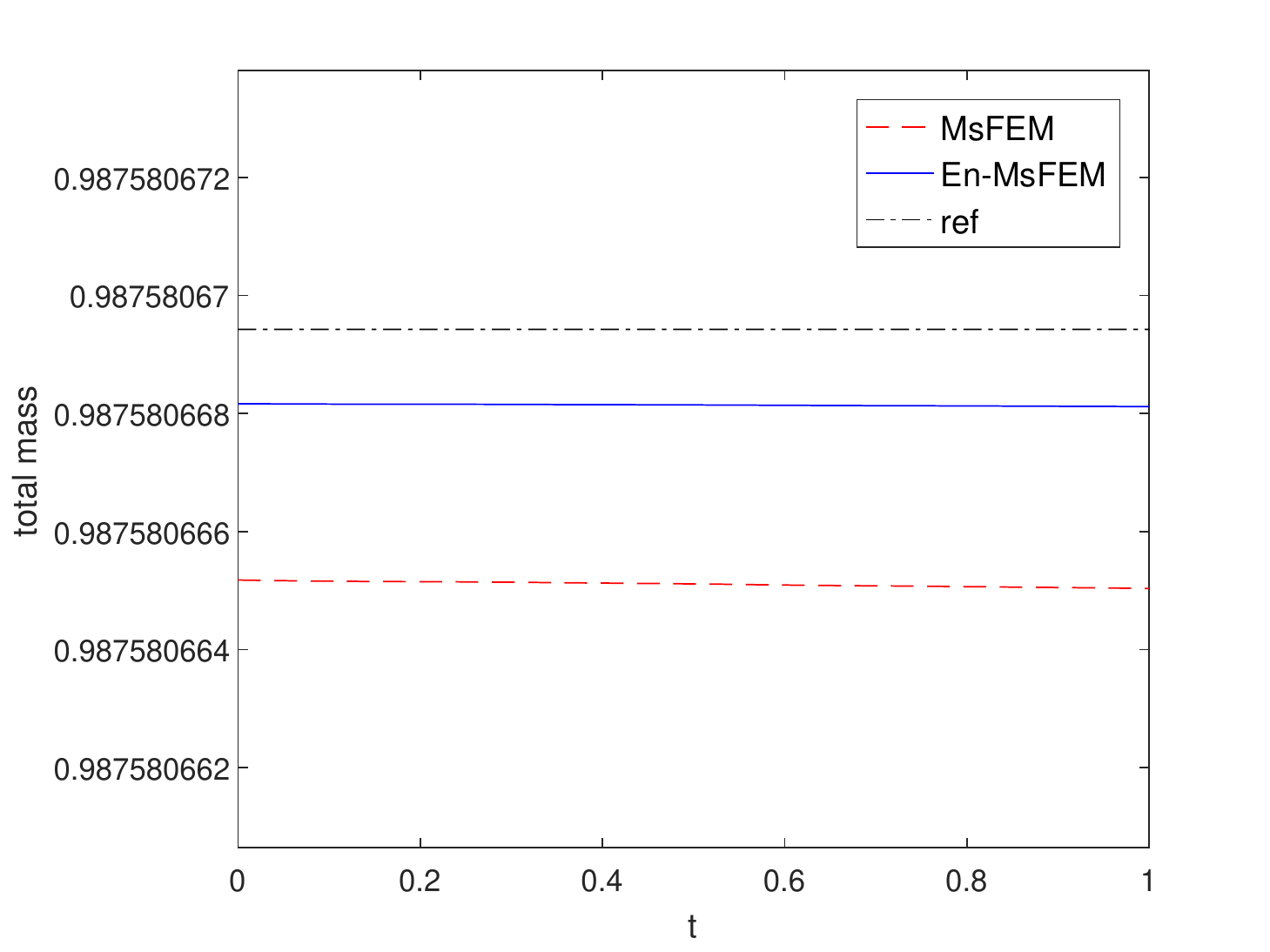}
		\caption{Total mass}
	\end{subfigure}
	\begin{subfigure}{0.32\textwidth}
		\includegraphics[width=1.0\textwidth]{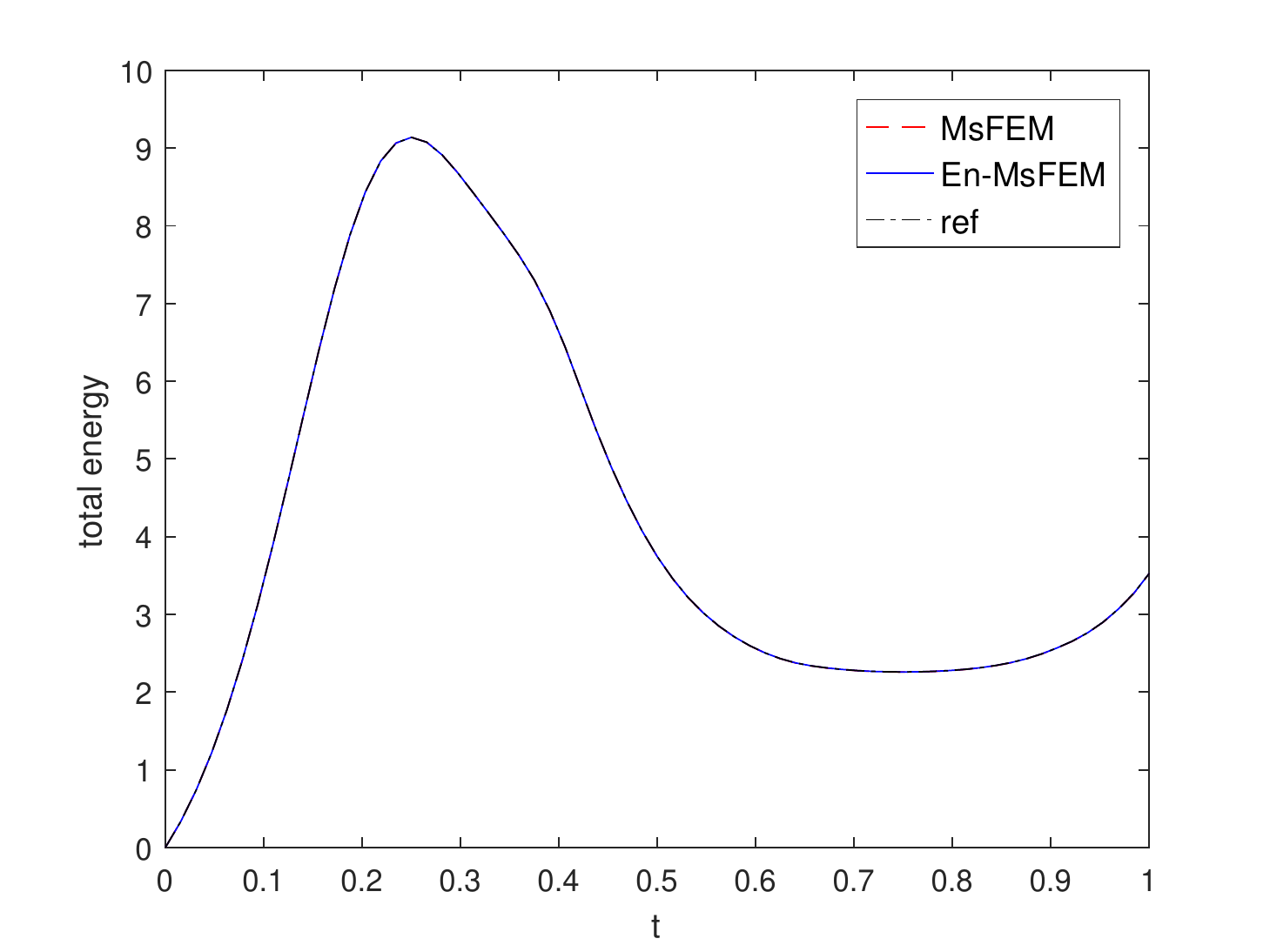}
		\caption{Total energy}
	\end{subfigure}
	\begin{subfigure}{0.32\textwidth}
		\centering
		\includegraphics[width=\textwidth]{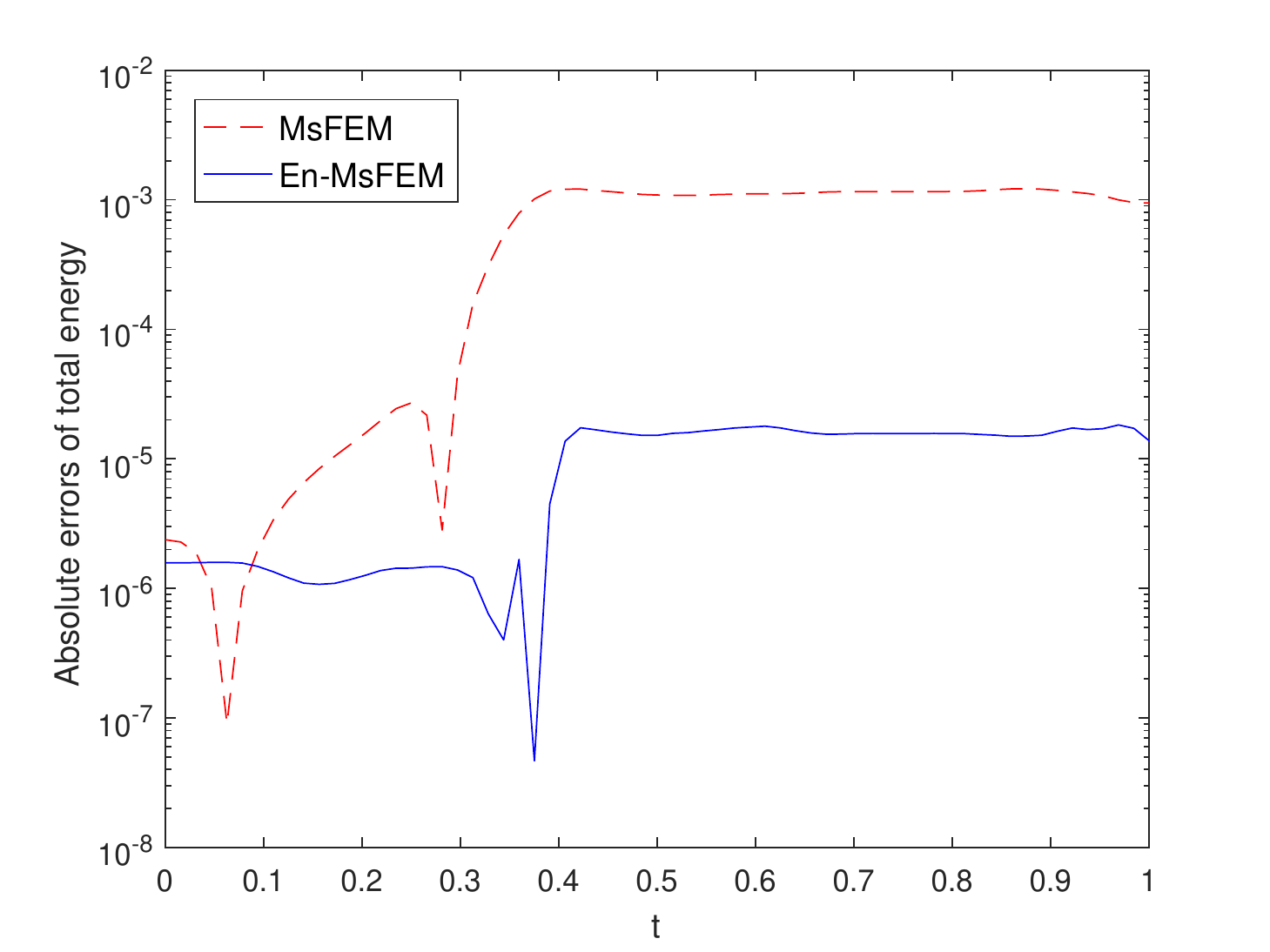}
		\caption{Energy difference}
	\end{subfigure}
	\caption{Time evolution of total mass, total energy and energy differences of energy using the MsFEM and En-MsFEM in \textbf{Example \ref{example2}} when $H=\frac{1}{192}$.}
	\label{fig:ex2_totaleng_series}
\end{figure}
\end{example}

\begin{example}[1D case with a layered potential]\label{example3}
In this experiment, the potential $v^\epsilon(x,t) = v_1^\epsilon(x) + v_2(x,t)$. We set 
\begin{equation*}\label{eqn:potential3}
v_1^{\epsilon}(x)= 2(x-0.5)^2 - \frac{1}{2} + \left\{
\begin{aligned}
\frac{1}{2}\cos(2\pi\frac{x}{\epsilon}),\qquad&0\leq x\leq \frac{1}{2},\\
\frac{1}{2}\cos(2\pi\frac{x}{\epsilon_2})+\frac{1}{2},\qquad&\frac{1}{2}< x\leq 1,
\end{aligned}
\right.
\end{equation*}
where $ \epsilon=1/32 $ and $ \epsilon_2=1/24 $. The time-dependent part over one period is 
\begin{equation*}
v_2(x,t) = E_0x\times\left\{
\begin{aligned}
4t,\qquad&0\leq t\leq \frac{1}{4},\\
2-4t,\qquad&\frac{1}{4}< x\leq \frac{1}{2},
\end{aligned}
\right.
\end{equation*}
where $E_0=20$. 

We set $\epsilon=\frac{1}{32}$ and compute numerical solutions on a series of coarse meshes $H=\frac{1}{64}$, $\frac{1}{96}$, $\frac{1}{128}$, $\frac{1}{192}$, $\frac{1}{256}$, $\frac{1}{384}$ in MsFEM and
the number of enriched basis is $\frac{1}{8}$ of that in the MsFEM, obtained at the time when $v_2(\bx,t)$ is maximized. We choose $\Delta t = 4\tau=\frac{1}{2^{18}}$ so the approximation error due to the temporal discretization can be ignored.

In Figure \ref{fig:ex3_waveerr_final_v3} we plot relative $L^2$ and $H^1$ errors of the standard FEM, MsFEM, and En-MsFEM at the final time $T=1$. In Figure \ref{fig:ex3_densityerr_final_v3} we plot relative $L^2$ errors of density functions by using standard FEM, MsFEM, and En-MsFEM at the final time $T=1$.  
In Figure \ref{fig:ex3_err_series}, we plot relative $L^2$ errors of wavefunction, positive density function, and energy density function as time evolves. From these numerical results, we find the the performance of the MsFEM 
and En-MsFEM is the same as previous two examples.

\begin{figure}[htbp]
	\centering
	\begin{subfigure}{0.45\textwidth}
		\includegraphics[width=1.0\textwidth]{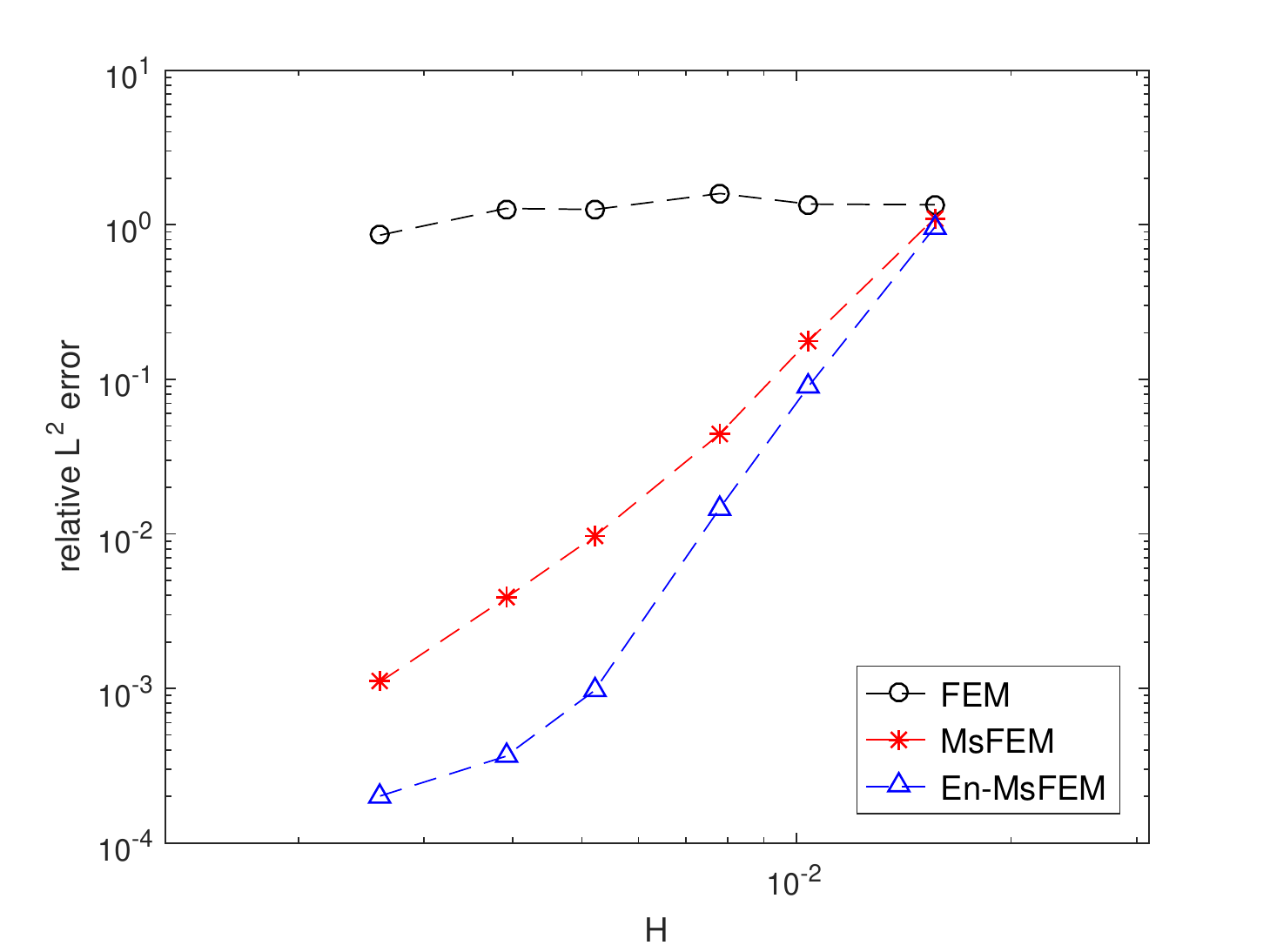}
		\caption{$L^2$ error of the wavefunction}
	\end{subfigure}
	\begin{subfigure}{0.45\textwidth}
		\centering
		\includegraphics[width=\textwidth]{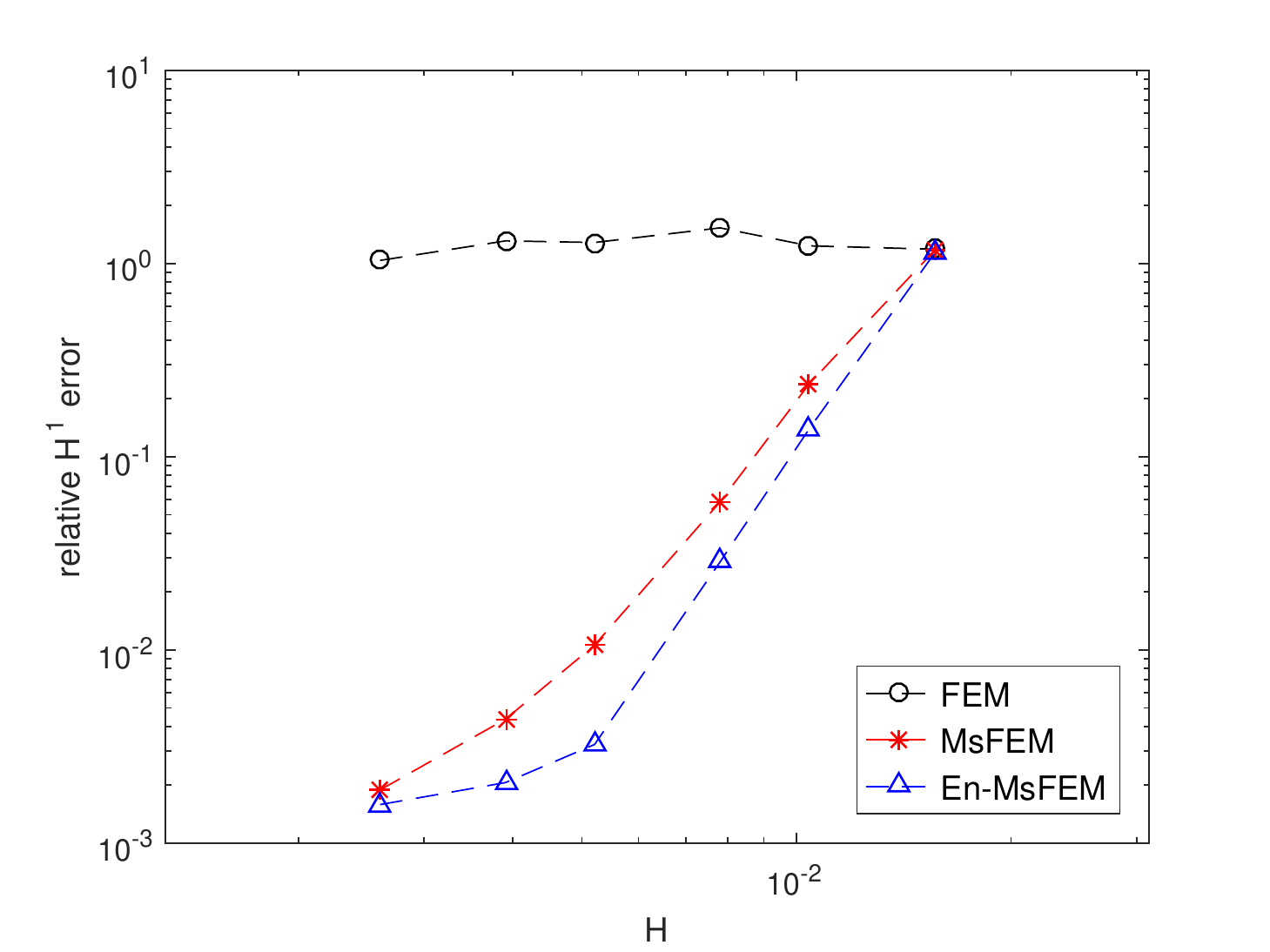}
		\caption{$H^1$ error of the wavefunction}
	\end{subfigure}
	\caption{Relative errors of wavefunction at $T=1$ in \textbf{Example \ref{example3}}. }
	\label{fig:ex3_waveerr_final_v3}
\end{figure}

\begin{figure}[htbp]
	\centering
	\begin{subfigure}{0.45\textwidth}
		\includegraphics[width=1.0\textwidth]{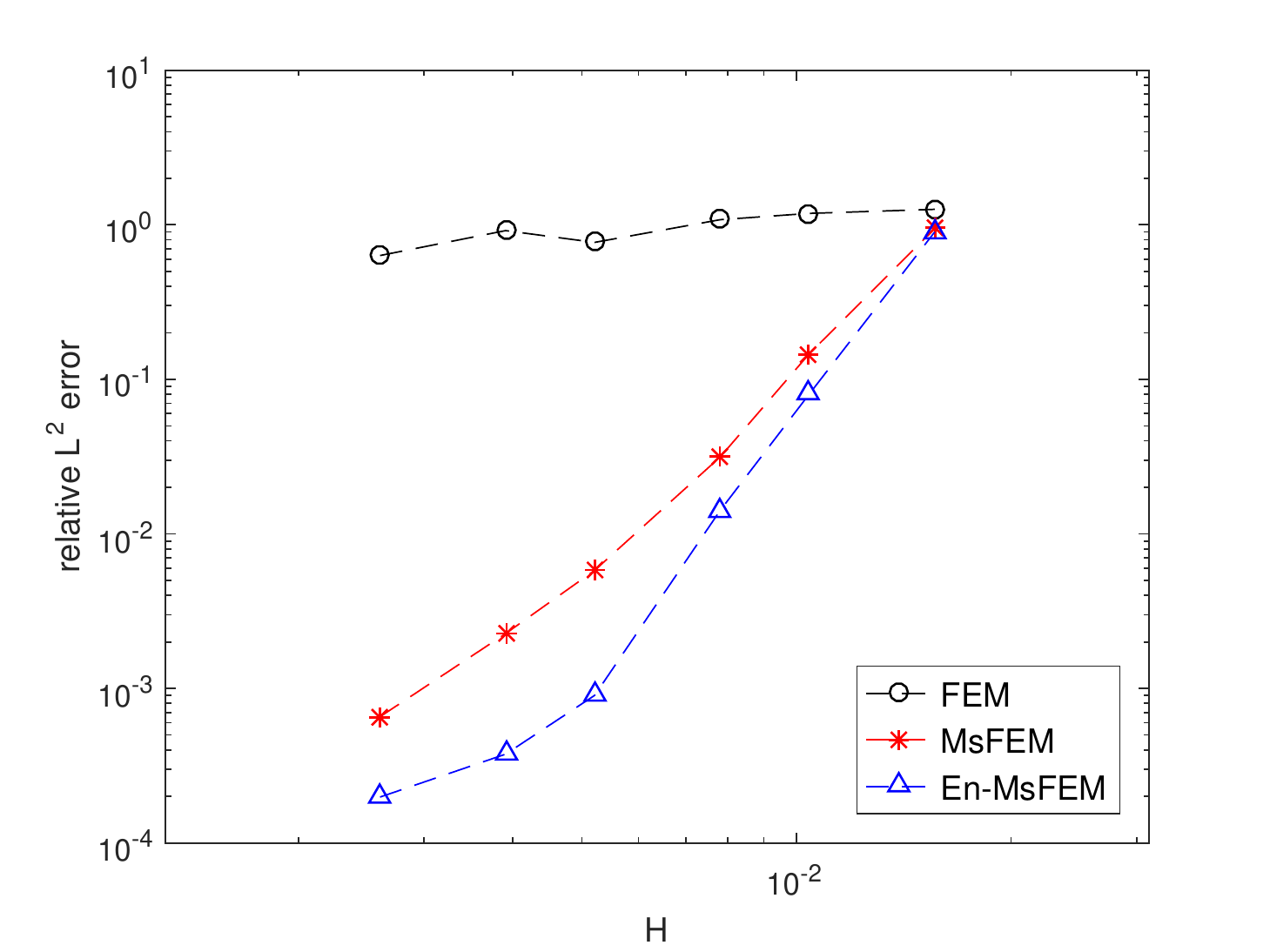}
		\caption{$L^2$ error of the position density function}
	\end{subfigure}
	\begin{subfigure}{0.45\textwidth}
		\includegraphics[width=1.0\textwidth]{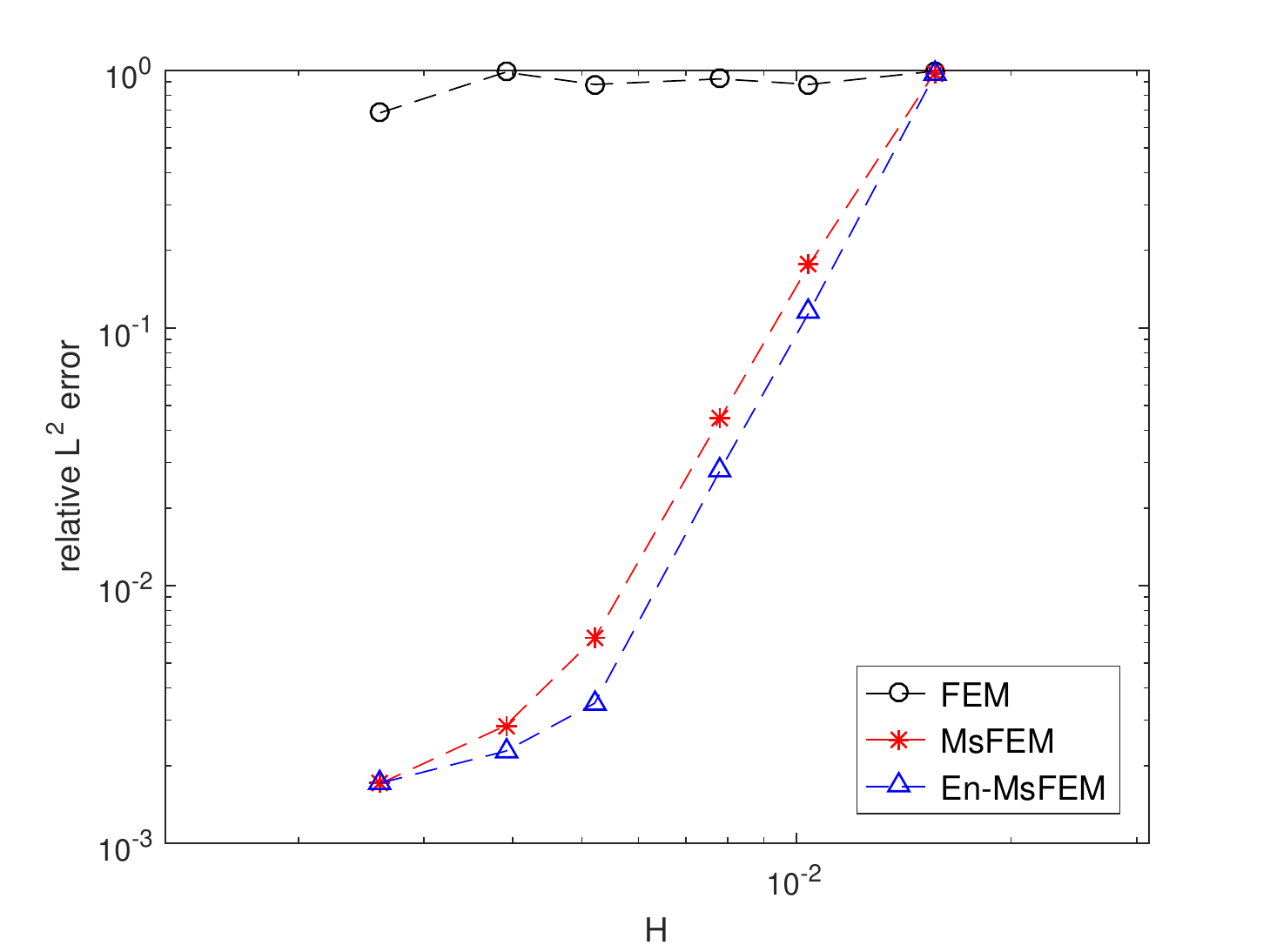}
		\caption{$L^2$ error of the energy density function}
	\end{subfigure}
	\caption{Relative errors of density functions at $T=1$ in \textbf{Example \ref{example3}}.}
	\label{fig:ex3_densityerr_final_v3}
\end{figure}

\begin{figure}[htbp]
	\centering
	\begin{subfigure}{0.45\textwidth}
		\includegraphics[width=1.0\textwidth]{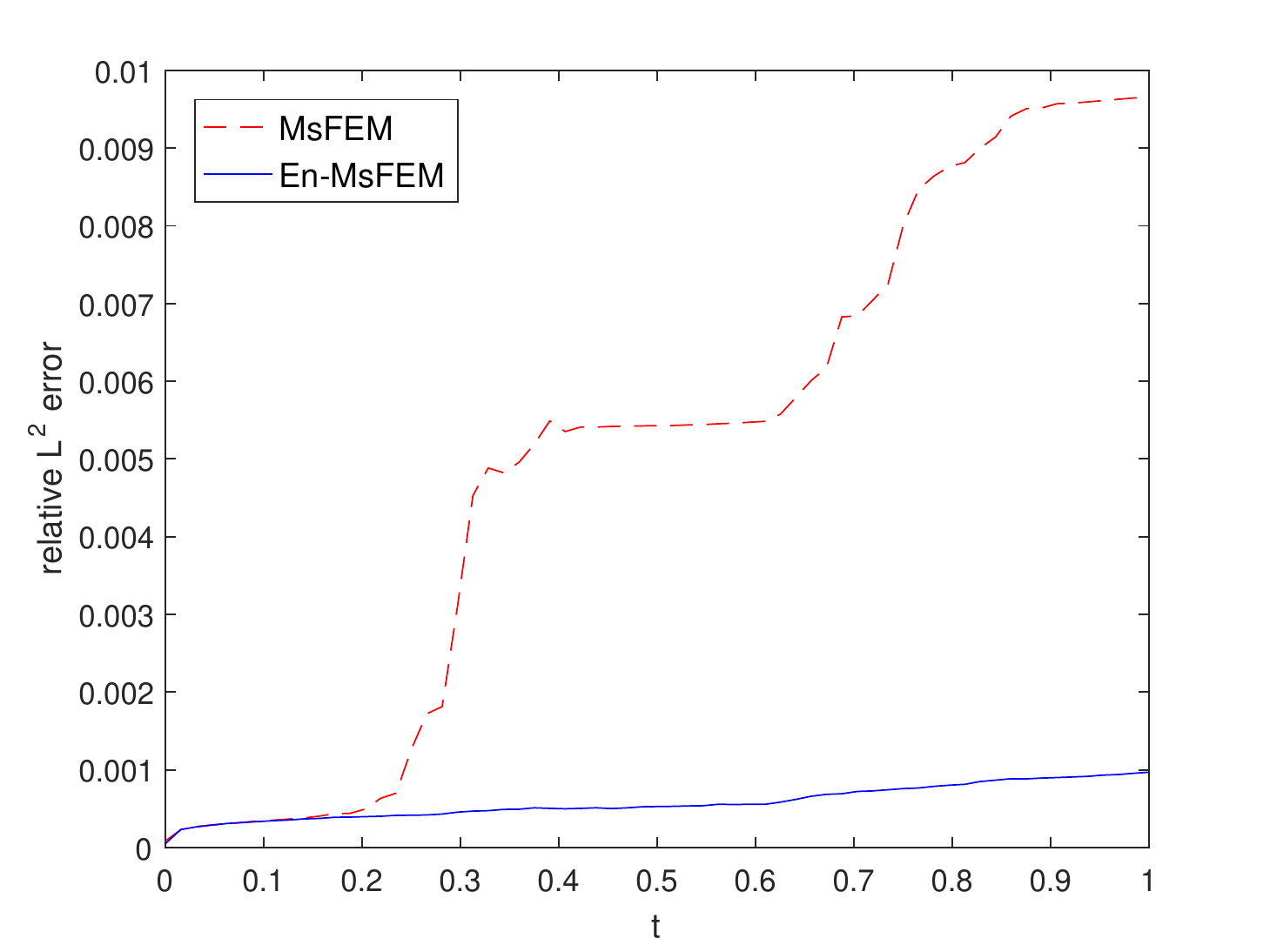}
		\caption{$L^2$ error of the wavefunction}
	\end{subfigure}
	\begin{subfigure}{0.45\textwidth}
		\centering
		\includegraphics[width=\textwidth]{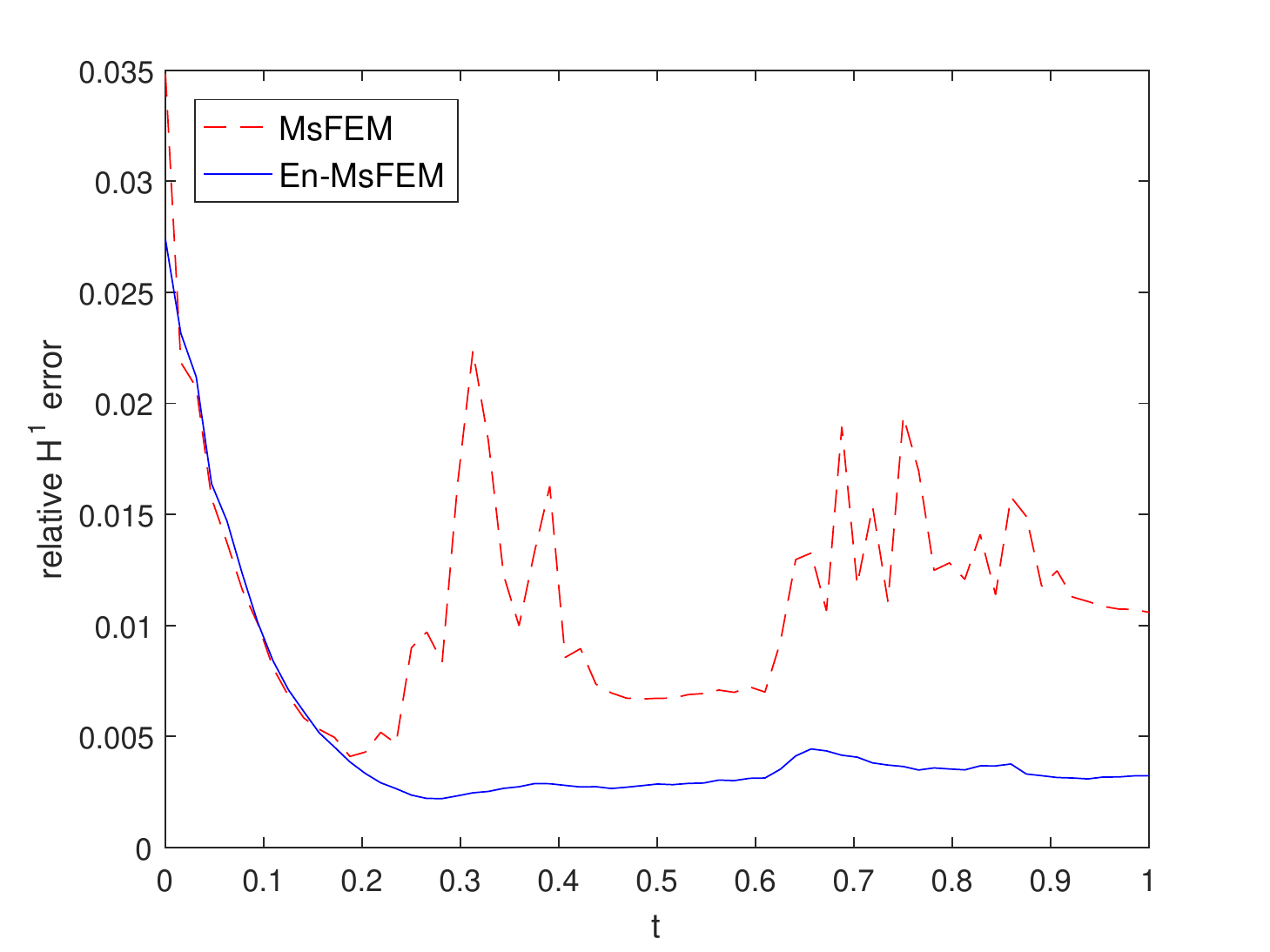}
		\caption{$H^1$ error of the wavefunction}
	\end{subfigure}
	\begin{subfigure}{0.45\textwidth}
		\includegraphics[width=1.0\textwidth]{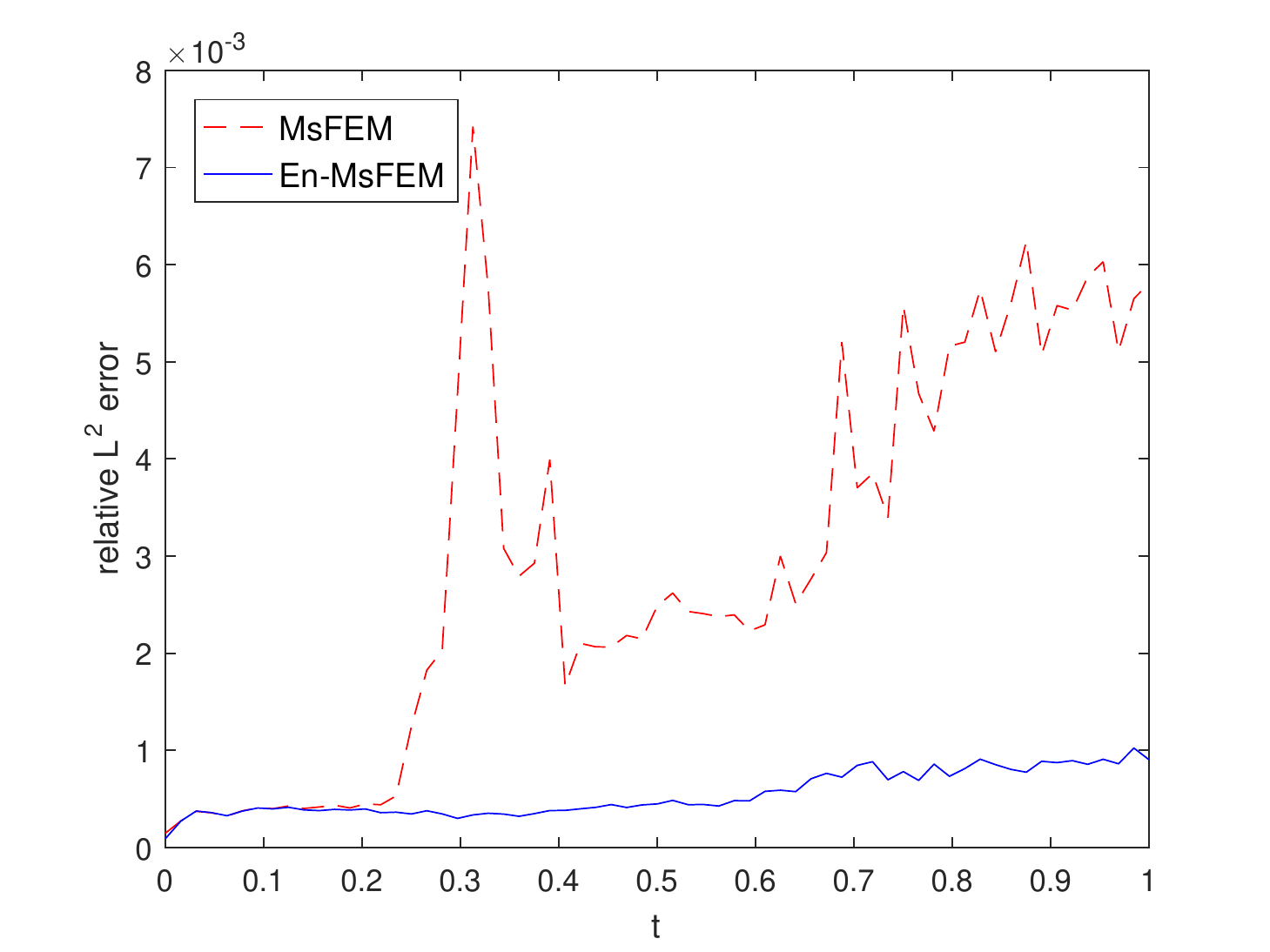}
		\caption{$L^2$ error of the position density function}
	\end{subfigure}
	\begin{subfigure}{0.45\textwidth}
		\includegraphics[width=1.0\textwidth]{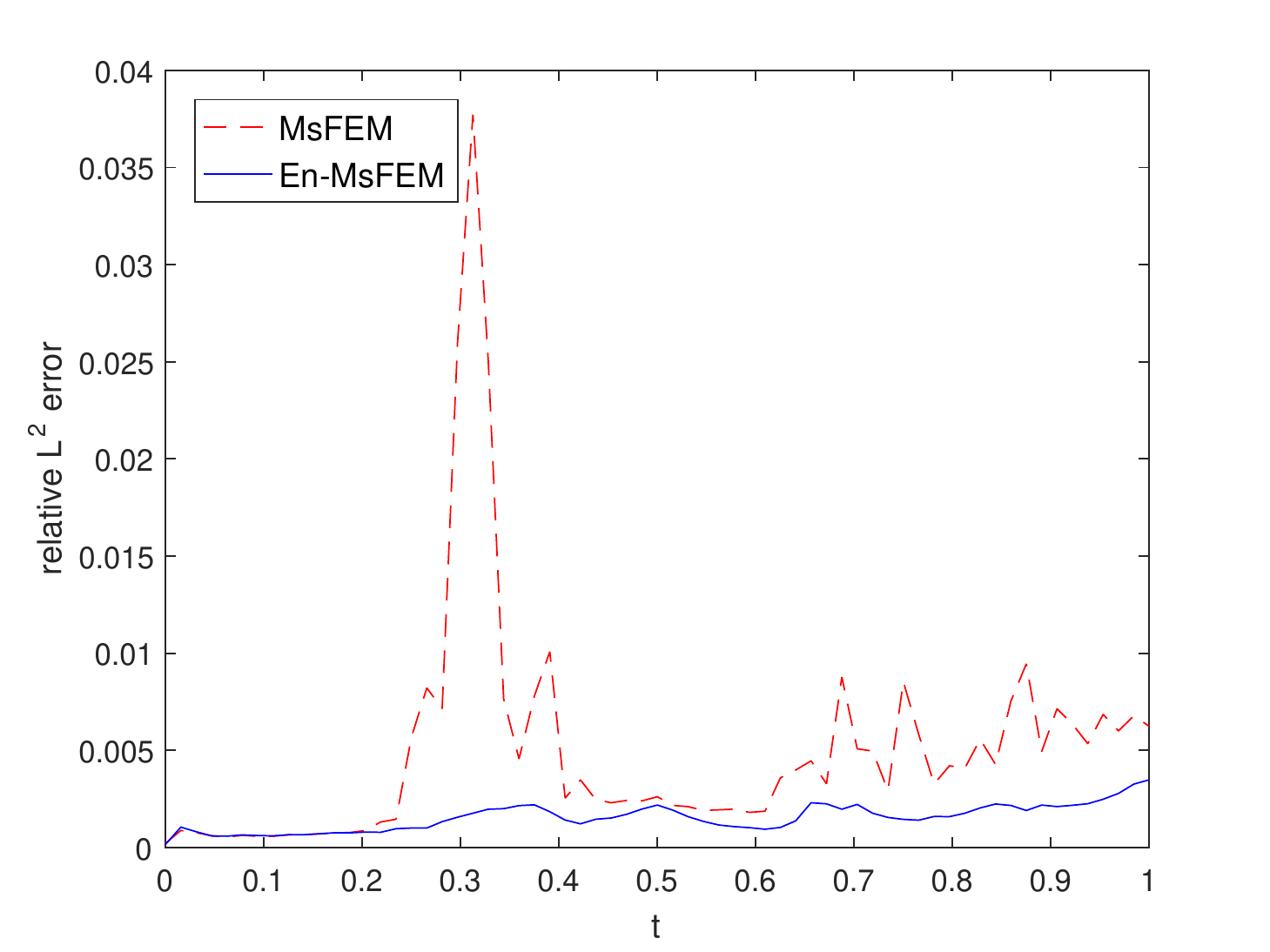}
		\caption{$L^2$ error of the energy density function}
	\end{subfigure}
	\caption{Relative $L^2$ errors of MsFEM and En-MsFEM as a function of time when $H=\frac{1}{192}$ in \textbf{Example \ref{example3}}.}
	\label{fig:ex3_err_series}
\end{figure}		
\end{example}

\begin{example}[2D case with a checkboard potential]\label{example4}
The potential $v^\epsilon(x,y,t) = v_1^\epsilon(x,y) + v_2(x,y,t)$.	The time-independent part $v_1^\epsilon(x,y)$ is a checkboard potential, which is of the following form
\begin{equation}\label{eqn:potential5}
v_1^{\epsilon}(x,y)= \left\{
\begin{aligned}
(\sin(2\pi\frac{x}{\epsilon_2})+1)(\cos(2\pi\frac{y}{\epsilon_2})),\qquad&\{0\leq x, y\leq\frac{1}{2}\}\cup\{\frac{1}{2}\leq x, y\leq 1\},\\
(\sin(2\pi\frac{x}{\epsilon}))(\cos(2\pi\frac{y}{\epsilon})+1),\qquad& \text{otherwise},\\
\end{aligned}
\right.
\end{equation}
where $\epsilon=1/8$, $\epsilon_2=1/6$. The profile of \eqref{eqn:potential5} is visualized in Figure \ref{fig:potential}, which allows
for multiple spatial scales and discontinuities around interfaces, as in quantum metamaterials \cite{Quach:2011}. The time-dependent part
is $v_2(x,y,t)=E_0\sin(2\pi t)(x+y)$ with $E_0=20$. The reference solution is obtained by En-MsFEM with $H=\frac{1}{64}$.
\begin{figure}[htbp]
	\centering
	\includegraphics[width=0.50\linewidth]{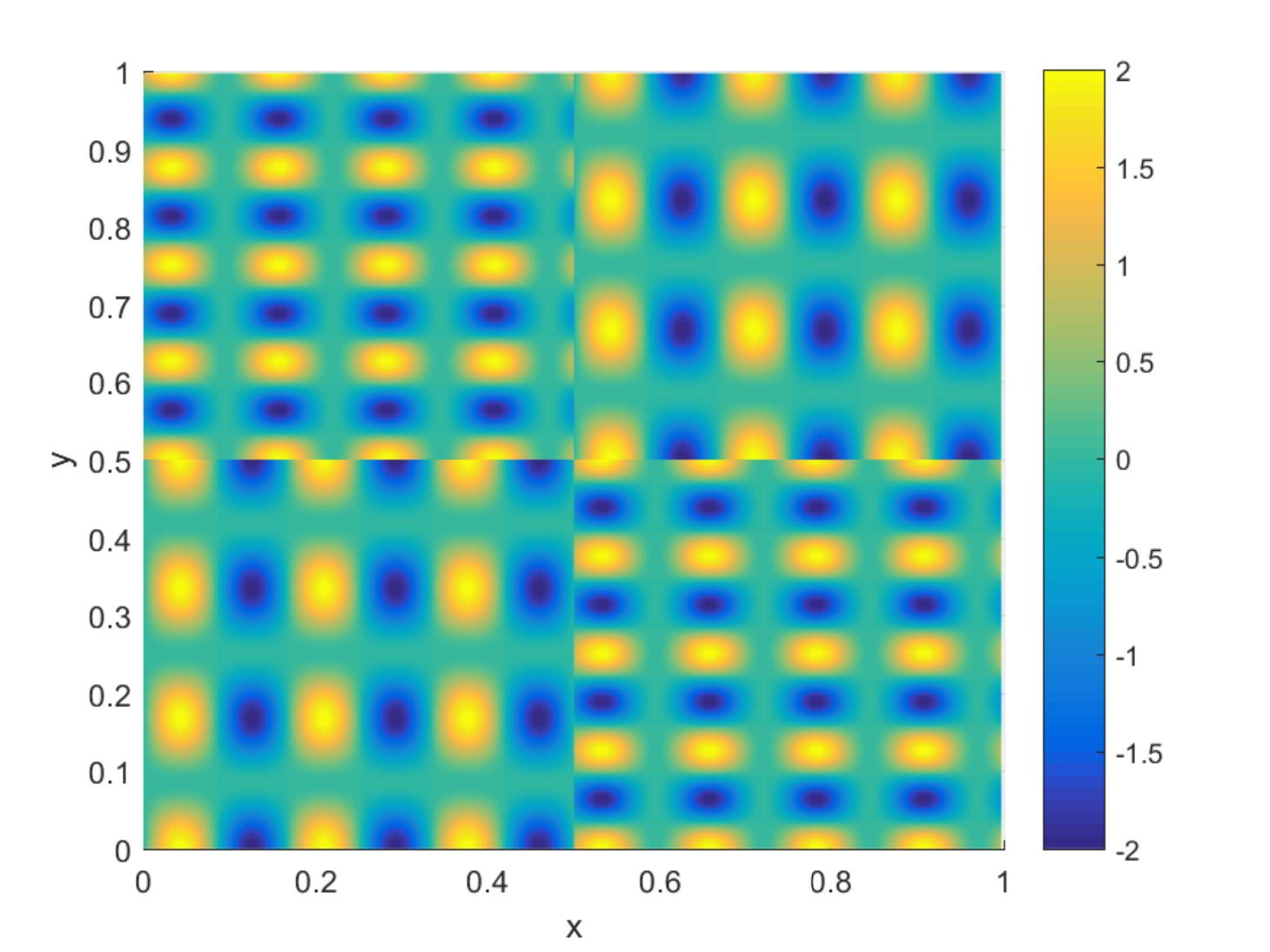}
	\caption{A checkboard-type potential over the unit square in \textbf{Example \ref{example4}}.}
	\label{fig:potential}
\end{figure}
	
Figure \ref{fig:ex4_err_final} records the relative errors in both $L^2$ norm and $H^1$ norm for a series of coarse meshes $ H = \frac{1}{16}, \frac{1}{24}, \frac{1}{32}, \frac{1}{48}$. The number of enriched basis is $\frac{1}{16}$ of that in the MsFEM, obtained at the time when $v_2(x,y,t)$ is maximized. We choose $\Delta t = \frac{1}{2^{18}}$ so the approximation error due to the temporal discretization can be ignored. 

In Figure \ref{fig:ex4_err_series}, we plot relative $L^2$ and $H^1$ errors of the wavefunction.
From these results, for moderate coarse meshes, we can see that the MsFEM reduces the approximation error by more than two orders of magnitude than that of the standard FEM in both $L^2$ and $H^1$ norms. In addition, En-MsFEM further reduces the error by about one order of magnitude in $L^2$ norm and by several times in $H^1$ norm. 

We visualize profiles of position density and energy density functions of MsFEM, En-MsFEM, and the 
standard FEM in Figure \ref{fig9} and Figure \ref{fig10}. Nice agreement is observed. 
Thus, the MsFEM and En-MsFEM provide accurate results for this 2D example.

\begin{figure}[htbp]
	\centering
	\begin{subfigure}{0.45\textwidth}
		\includegraphics[width=1.0\textwidth]{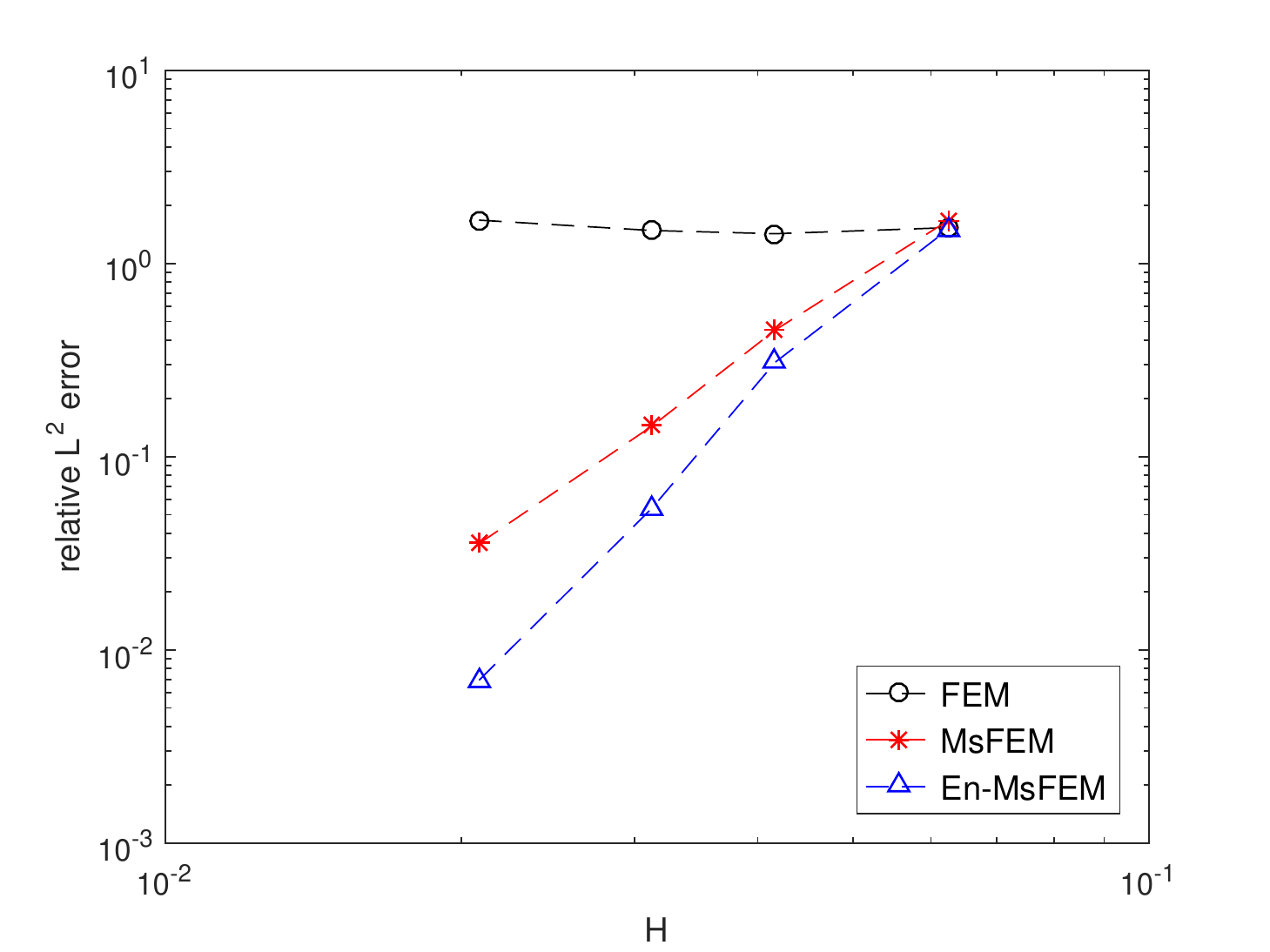}
		\caption{$L^2$ error of the wavefunction}
	\end{subfigure}
	\begin{subfigure}{0.45\textwidth}
		\centering
		\includegraphics[width=\textwidth]{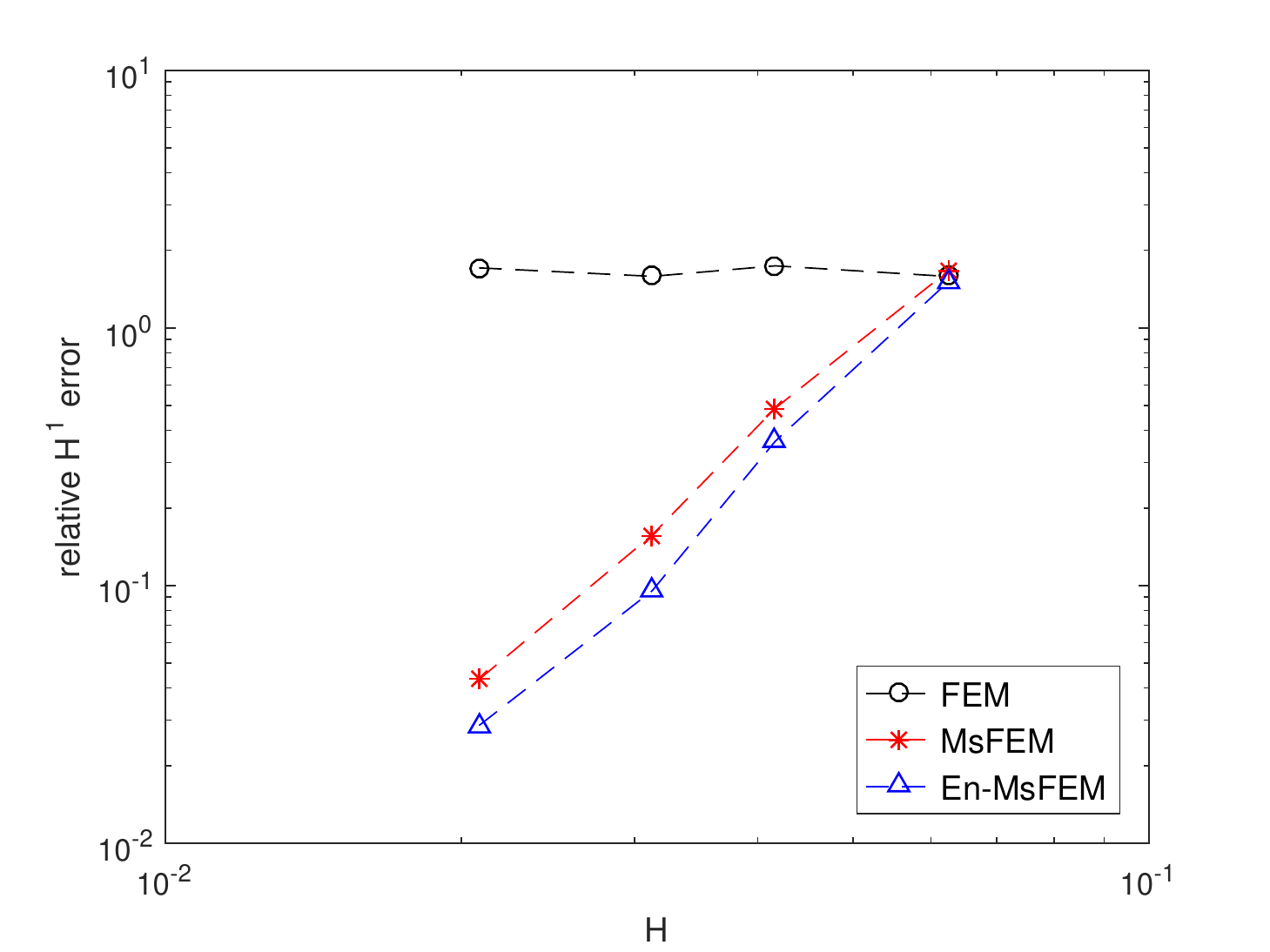}
		\caption{$H^1$ error of the wavefunction}
	\end{subfigure}
	\caption{Relative errors of wavefunction at $T=1$ in \textbf{Example \ref{example4}}.}
	\label{fig:ex4_err_final}
\end{figure}
\begin{figure}[htbp]
	\centering
	\begin{subfigure}{0.45\textwidth}
		\includegraphics[width=1.0\textwidth]{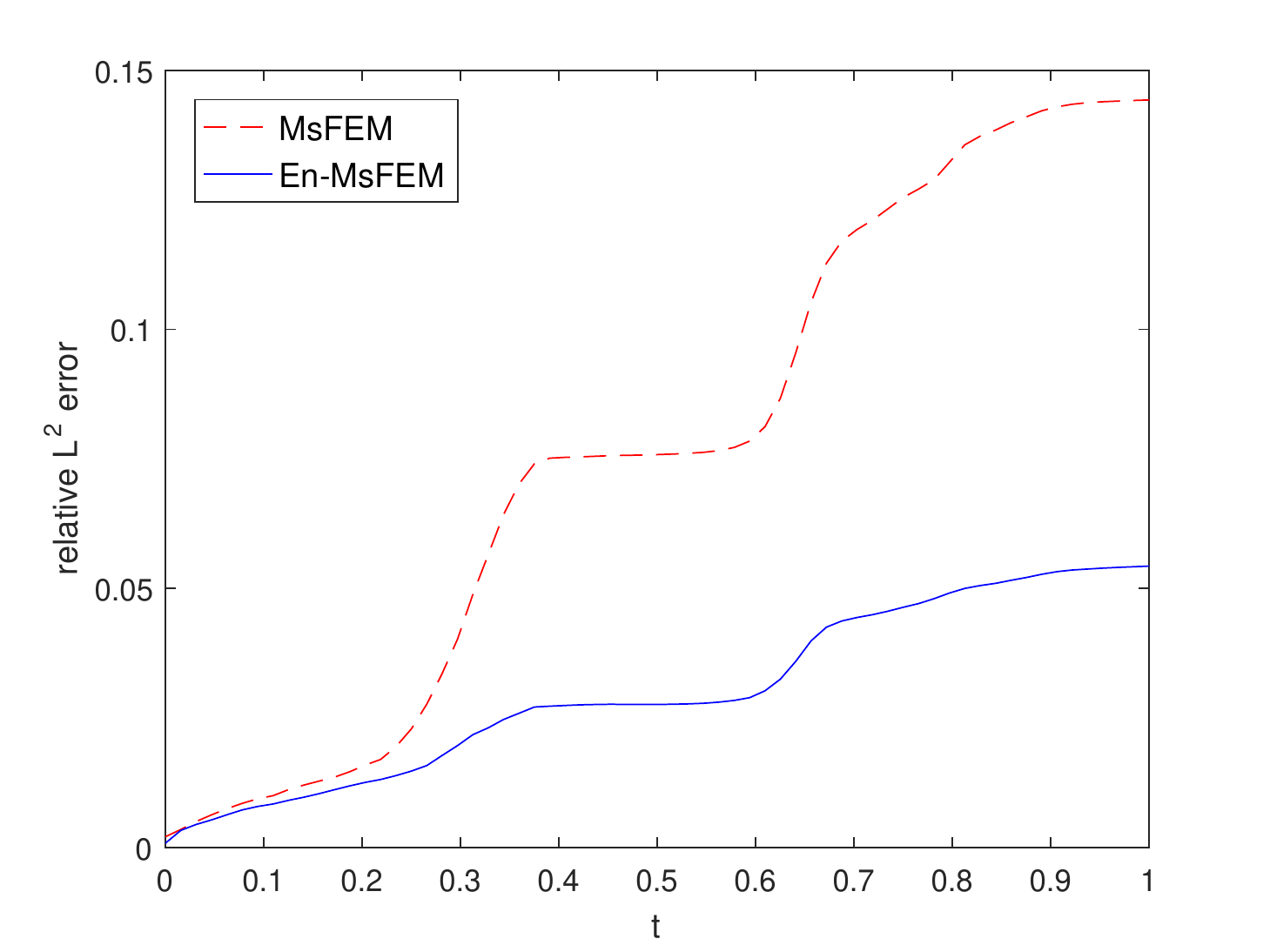}
		\caption{$L^2$ error of the wavefunction}
	\end{subfigure}
	\begin{subfigure}{0.45\textwidth}
		\centering
		\includegraphics[width=\textwidth]{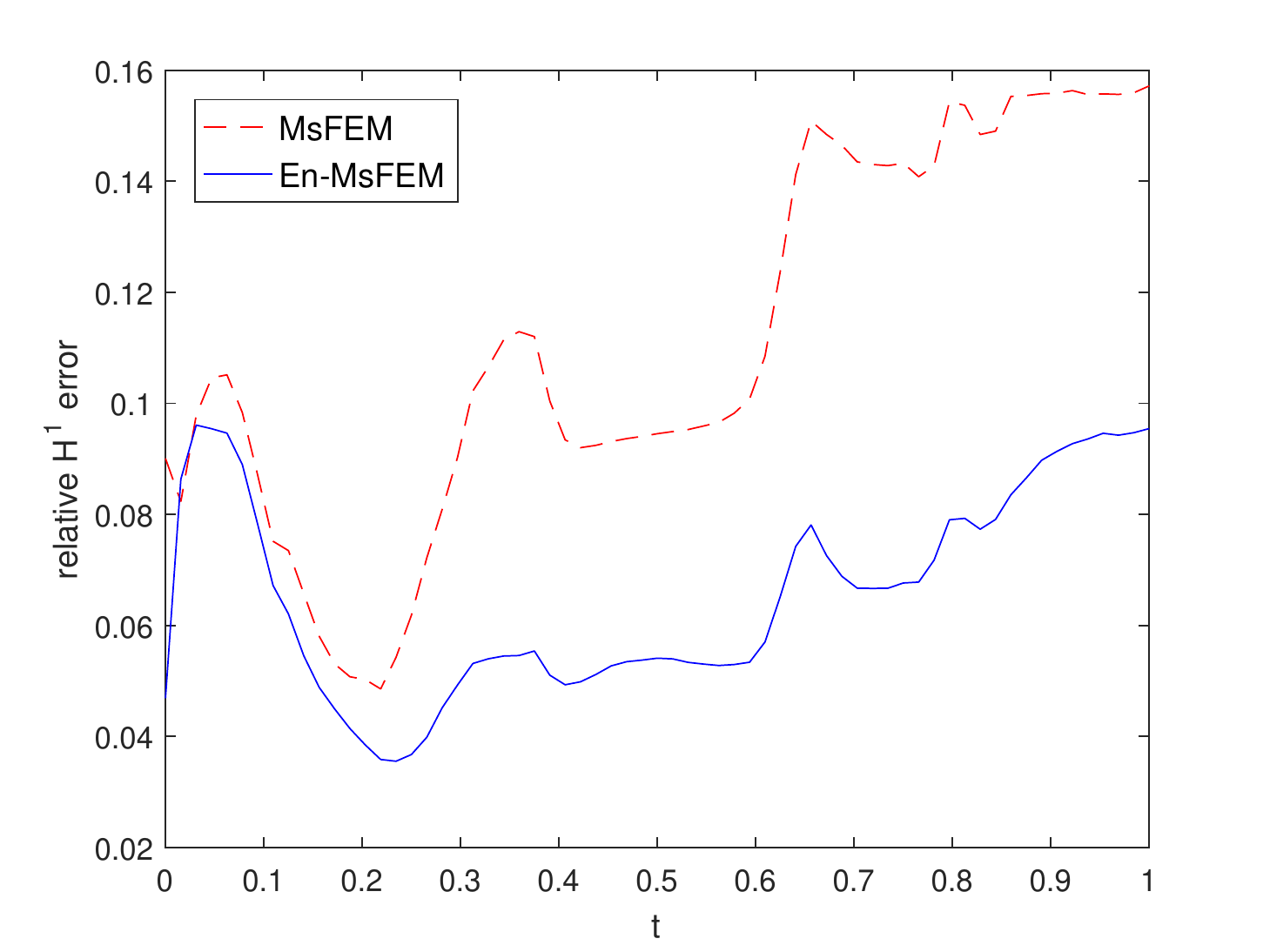}
		\caption{$H^1$ error of the wavefunction}
	\end{subfigure}
	\caption{ Relative errors of MsFEM and En-MsFEM as a function of time when $H=\frac{1}{32}$ in \textbf{Example \ref{example4}}.}
	\label{fig:ex4_err_series}
\end{figure}
	
\begin{figure}[htbp]
	\centering
	\includegraphics[width=0.92\textwidth]{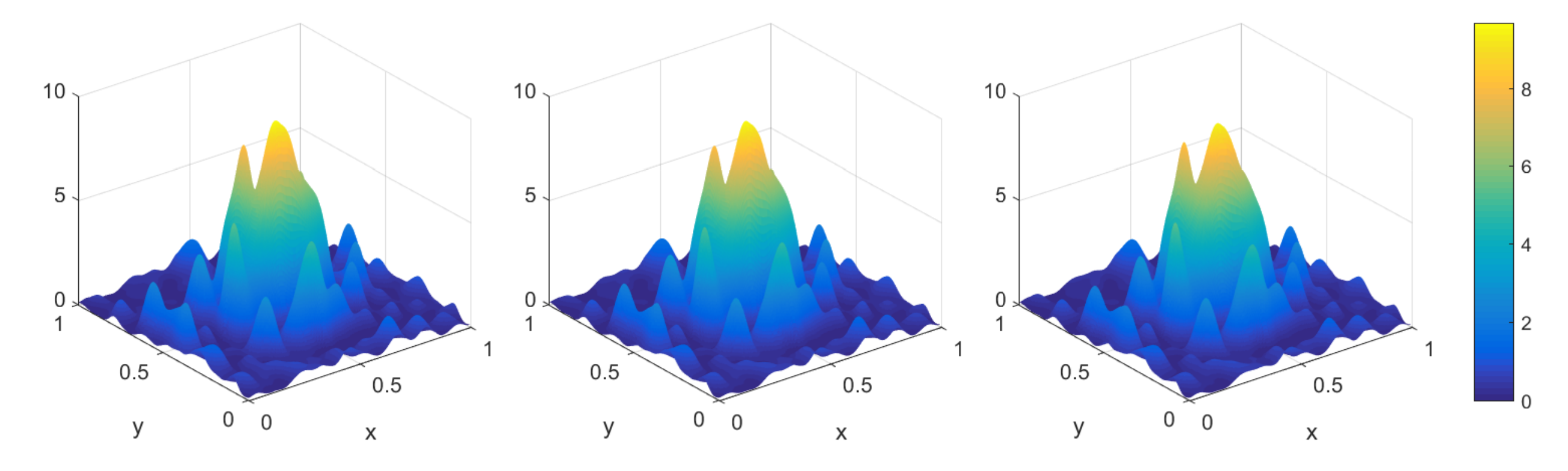}\\
	\includegraphics[width=0.92\textwidth]{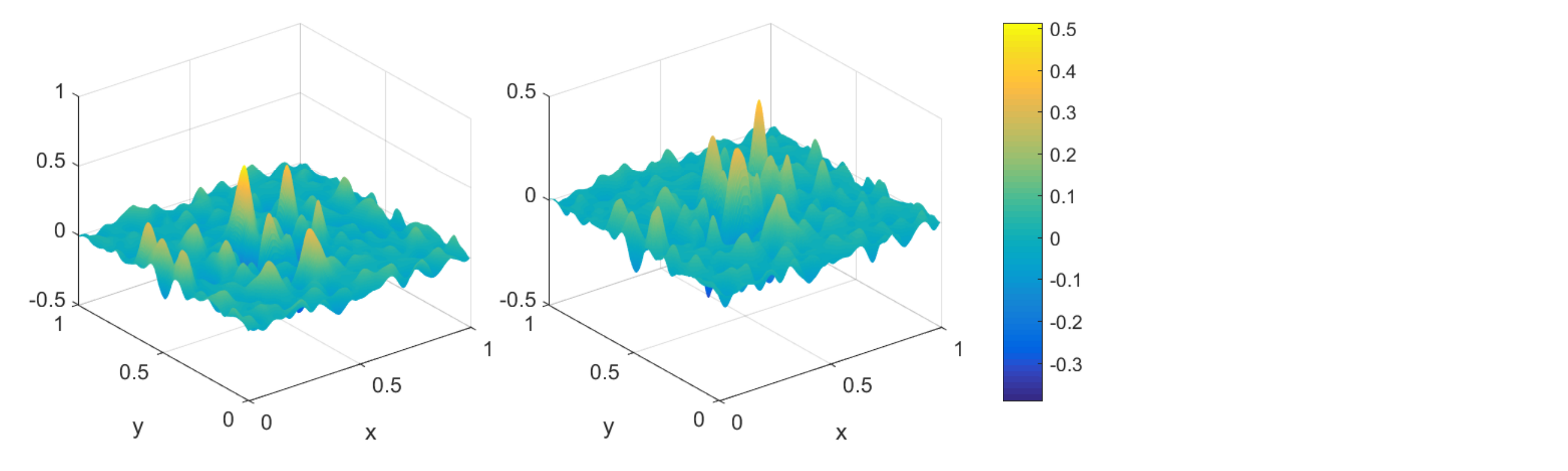}
	\caption{Profiles of position density functions at $T=1$ in \textbf{Example \ref{example4}} when $ H=\frac{1}{32}$. From left to right: MsFEM, En-MsFEM, and the reference solution with the same colorbar. Bottom row: $n_{\textrm{num}}^{\epsilon}(\bx,T)-n_{\textrm{ref}}^{\epsilon}(\bx,T)$.}
	\label{fig9}
\end{figure}
\begin{figure}[htbp]
	\centering
	\includegraphics[width=0.92\textwidth]{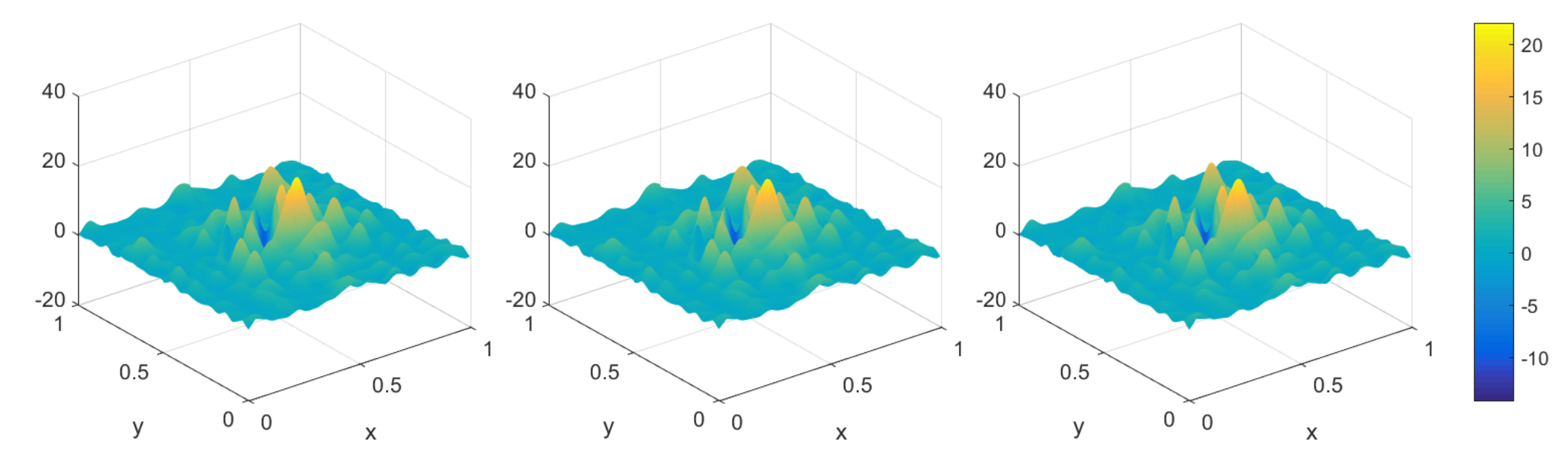}\\
	\includegraphics[width=0.92\textwidth]{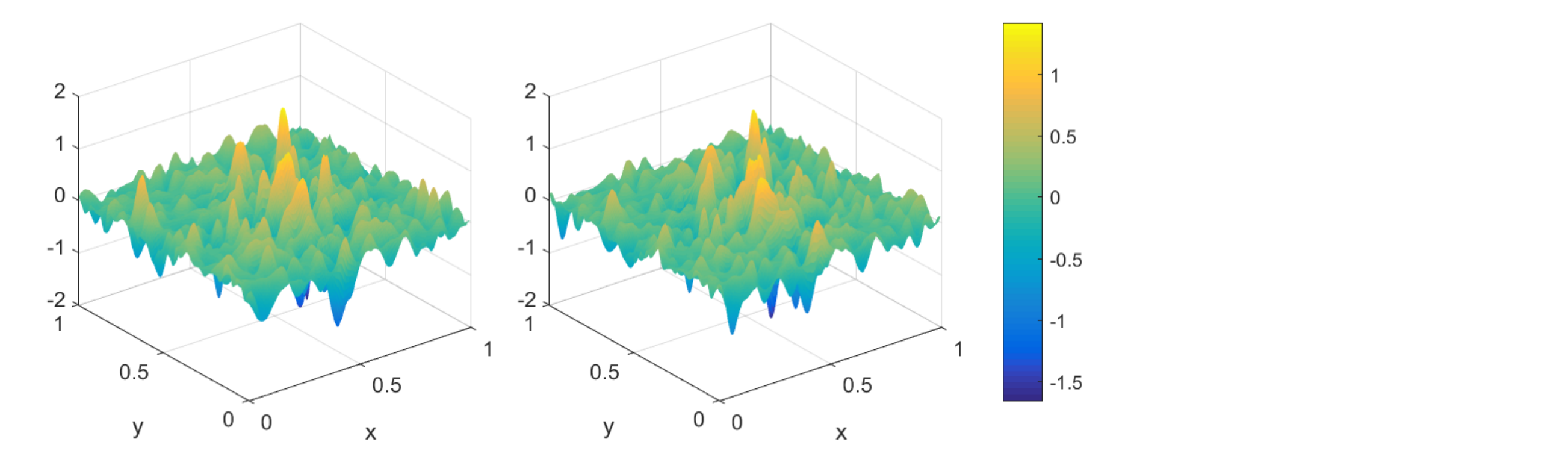}
	\caption{Profiles of energy density functions at $T=1$ in \textbf{Example \ref{example4}} when $ H=\frac{1}{32}$. From left to right in top row: MsFEM, En-MsFEM, and the reference solution with the same colorbar. Bottom row: $e_{\textrm{num}}^{\epsilon}(\bx,T)-e_{\textrm{ref}}^{\epsilon}(\bx,T)$.}
	\label{fig10}
\end{figure}
\end{example}
 	
\section{Conclusions} \label{sec:Conclusion}
\noindent	
In this paper, we have proposed two multiscale finite element methods to solve the semiclassical Schr\"{o}dinger equation with time-dependent potentials. 
In the first approach, the localized multiscale basis functions are constructed using sparse compression of the Hamiltonian operator at the initial time; in the second approach, basis functions are further enriched using a greedy algorithm for the sparse compression of the Hamiltonian operator at later times.
In the online stage, the Schr\"{o}dinger equation is approximated by these localized multiscale basis in space and is solved by Crank-Nicolson
method in time. The spatial mesh size in multiscale finite element methods is $ H=\mathcal{O}(\epsilon) $, while $H=\mathcal{O}(\epsilon^{3/2})$ in the standard finite element method. A number of numerical examples in 1D and 2D are given to demonstrate the efficiency and robustness of the proposed method.

From the perspective of physics, the proposed methods can be combined with numerical methods for Landau-Lifshitz equation \cite{ChenLiuZhou:2016} to study current-driven domain wall dynamics \cite{Chen:2015}, which are of great interest in spintronic devices and will be explored later. 

\section*{Acknowledgements}
\noindent
J. Chen acknowledges the financial support by National Natural Science Foundation of China via grants 21602149 and 11971021. The research of S. Li is partially supported by the Doris Chen Postgraduate Scholarship. 
Z. Zhang acknowledges the financial support of Hong Kong RGC grants (Projects 27300616, 17300817, and 17300318) and 
National Natural Science Foundation of China via grant 11601457, Seed Funding Programme for Basic Research (HKU), and Basic Research Programme (JCYJ20180307151603959) of The Science, Technology and Innovation Commission of Shenzhen Municipality. Part of the work was done when J. Chen was visiting Department of Mathematics, University of Hong Kong. J. Chen would like to thank its hospitality. The computations were performed using the HKU ITS research computing facilities that are supported in part by the Hong Kong UGC Special Equipment Grant (SEG HKU09).  

\appendix
\section{Continuous dependence of multiscale basis functions on the potential function}\label{PhiDependOnV} 
\noindent
In this appendix, we prove \textbf{Theorem \ref{basis-dependon-potential}}, which plays an important role 
in the enrichment of multiscale basis functions. 
\begin{proof}
	For each time instance $t_{\ell}$, when numerically solving \eqref{OC_SchBasis_Obj} - \eqref{OC_SchBasis_Cons2}, we have the 
	following quadratic programming problem with equality constraints
	\begin{equation}
	\left\{
	\begin{aligned}
	& \min_{c} \frac{1}{2}c^T Q c,\\
	& \textrm{subject to } Ac=b,
	\end{aligned}
	\right.
	\label{eqn:QP}
	\end{equation}
	where $Q$ is a symmetric positive definite matrix on the fine triangularization $\mathcal{T}_h$ with the $(i,j)$ component
	\[
	Q_{ij}=\frac{\veps^2}{2}(\nabla\varphi^h_{j},\nabla\varphi^h_{i})+(v_1^{\veps}(\bx)\varphi^h_j,\varphi^h_{i})
	+(v_2(\bx,t_{\ell})\varphi^h_j,\varphi^h_{i}),
	\]
	and $A$ is a long matrix with $b$ a long vector coming from \eqref{OC_SchBasis_Cons1} - \eqref{OC_SchBasis_Cons2}.
	
	Under the assumptions that $v_1^{\epsilon}(\bx)+v_2(\bx,t_{\ell})$ is uniformly bounded and $h/\veps=\kappa $ is small,
	we know that $Q$ is a positive definite matrix. Moreover, we know that $A$ has full rank, i.e., $\rank(A)=N_H$. Therefore, the quadratic optimization problem \eqref{eqn:QP} has a unique minimizer, satisfying the Karush-Kuhn-Tucker condition. Specifically, the unique minimizer of \eqref{eqn:QP} can be explicitly written as
	\begin{align}\label{eqn:minref}
	\boldsymbol{c} = Q^{-1} A^T (AQ^{-1} A^T)^{-1}\boldsymbol{b}.
	\end{align}
	For two time instances $t_{\ell_1}$ and $t_{\ell_2}$, we define $\delta V = Q_1- Q_2$. Then
	\begin{align}
	\left(\delta V\right)_{ij}= \left(\left(v_2(\cdot,t_{\ell_1})-v_2(\cdot,t_{\ell_2})\right)\varphi^h_i,\varphi^h_j\right),
	\end{align}
	and thus
	\begin{align}
	\lVert \delta V \rVert_{\infty} \leq h^d  \lVert v_2(\cdot,t_{\ell_1}) - v_2(\cdot,t_{\ell_2}) \rVert_{L^{\infty}(D)}.
	\end{align}
	We choose $h$ to be small enough such that $\lVert \delta V \rVert_{\infty} \leq 1$, and have
	\[
	Q_2^{-1} = \sum_{n=0}^{\infty} \left(Q_1^{-1}\delta V\right)^n Q_1^{-1},
	\]
	and thus
	\begin{align*}
	\boldsymbol{c}_2 - \boldsymbol{c}_1 & = \left[ Q^{-1}_2 - Q^{-1}_1 \right] A^T (AQ^{-1}_1 A^T)^{-1}\boldsymbol{b} 
	+ Q^{-1}_2 A^T \left[ (AQ^{-1}_2 A^T)^{-1} - (AQ^{-1}_1 A^T)^{-1}\right]\boldsymbol{b}, \nonumber\\
	& = Q^{-1}_1 \delta V Q^{-1}_1  A^T (AQ^{-1}_1 A^T)^{-1}\boldsymbol{b}  \nonumber \\
	& \quad - Q^{-1}_2 A^T (AQ^{-1}_1 A^T)^{-1} (AQ^{-1}_1 \delta V Q^{-1}_1 A^T) (AQ^{-1}_1 A^T)^{-1}\boldsymbol{b} + o(\lVert\delta V\rVert_{\infty}),\nonumber \\
	& = Q^{-1}_1 \delta V Q^{-1}_1  A^T (AQ^{-1}_1 A^T)^{-1}\boldsymbol{b} \nonumber \\
	& \quad - Q^{-1}_1 A^T (AQ^{-1}_1 A^T)^{-1} (AQ^{-1}_1 \delta V Q^{-1}_1 A^T) (AQ^{-1}_1 A^T)^{-1}\boldsymbol{b} + o(\lVert\delta V\rVert_{\infty}).\nonumber 
	\end{align*}
	Therefore,
	\begin{align*}
	\lvert \boldsymbol{c}_2 - \boldsymbol{c}_1 \rvert_{\infty} & \leq 
	C \lVert A \rVert_{\infty} \lVert Q^{-1}_1 \rVert_{\infty}^2 
	\lVert (AQ^{-1}_1A^T)^{-1} \rVert_{\infty}
	\lvert \boldsymbol{b} \rvert_{\infty} \left( 1 + 
	\lVert A \rVert_{\infty}^2 \lVert Q^{-1}_1 \rVert_{\infty} \lVert (AQ^{-1}_1A^T)^{-1} \rVert_{\infty} \right) \lVert \delta V \rVert_{\infty} .\nonumber 
	\end{align*}
	By their definitions, we have
	\[
	\lVert A \rVert_{\infty} \leq C h^d , \quad \lvert \boldsymbol{b} \rvert_{\infty} =1, \quad \lVert Q^{-1}_1 \rVert_{\infty} \leq C h^{-2}, \quad \lVert Q_1 \rVert_{\infty} \leq C \max\{\veps^2,h^2\}\leq C\veps^2,
	\]
	and thus
	\begin{align*}
	\lvert \boldsymbol{c}_2 - \boldsymbol{c}_1 \rvert_{\infty} & \leq 
	C \veps^{4} h^{-6} h^{-d} \lVert \delta V \rVert_{\infty}
	\leq C \veps^{4} h^{-6} \lVert v_2(\cdot,t_{\ell_2}) - v_2(\cdot,t_{\ell_1}) \rVert_{L^{\infty}(D)}.\nonumber 
	\end{align*}
	We complete the proof since $h/\veps=\kappa $ and $\lVert \phi(\cdot,t_{\ell_2}) - \phi(\cdot,t_{\ell_1}) \rVert_{L^{\infty}(D)} \leq \lvert \boldsymbol{c}_2 - \boldsymbol{c}_1 \rvert_{\infty}$.
\end{proof}		
\bibliographystyle{siam}
\bibliography{ZWpaper}
		
\end{document}